\newcommand{\R}{\mathbb{R}}
\newcommand{\C}{\mathbb{C}}
\newtheorem{theorem}{Theorem}[section]
\newtheorem{lemma}[theorem]{Lemma}
\newtheorem{corollary}[theorem]{Corollary}
\newtheorem{remark}[theorem]{Remark}
\newcommand{\bb}[1]{\left(#1\right)} 
\newcommand{\seq}[1]{\left\{#1\right\}} 
\renewcommand{\sb}[1]{\left[#1\right]} 
\newcommand{\cb}[1]{\left\{#1\right\}}
\newcommand{\D}{\mathrm{d}}
\newcommand{\I}{\mathrm{i}}
\newcommand{\Iuc}{I_{\mathrm{uc}}}
\newcommand{\Ibz}{I_{\mathrm{bz}}}
\newcommand{\figpath}{./}
\begin{document}

\begin{center}
   \begin{minipage}[t]{6.0in}
		A standard task in solid state physics and quantum chemistry is the computation of localized molecular
orbitals known as Wannier functions. In this manuscript, we propose a new procedure for computing Wannier functions in one-dimensional crystalline materials. Our approach proceeds by first performing parallel transport of the Bloch functions using numerical integration. Then a simple analytically computable correction is introduced to
yield the optimally localized Wannier function. The resulting scheme is rapidly convergent and is proven to yield real-valued Wannier functions that achieve global optimality. The analysis in this manuscript can also be viewed as a proof of the existence of exponentially localized Wannier functions in one dimension. We illustrate the performance of the scheme by a number of numerical experiments.
	\thispagestyle{empty}

  \vspace{ -100.0in}

  \end{minipage}
\end{center}

\vspace{ 3.60in}
%\vspace{ 0.50in}

\begin{center}
  \begin{minipage}[t]{4.4in}
    \begin{center}

\textbf{A highly accurate procedure for computing globally optimal Wannier functions in one-dimensional crystalline insulators } \\

  \vspace{ 0.50in}

Abinand Gopal$\mbox{}^{\dagger}$,
Hanwen Zhang$\mbox{}^{\ddagger }$  \\
              \today

    \end{center}
  \vspace{ -100.0in}
  \end{minipage}
\end{center}

\vspace{ 2.00in}

\vfill
\noindent
$\mbox{}^{\dagger}$ Department of Mathematics, UC Davis, Davis, CA 95616\\
\noindent
$\mbox{}^{\ddagger}$ Department of Applied and Computational Mathematics, Yale University, New Haven, CT 06511

\vspace{2mm}

%%%%%%%%%% techreport 

%%%\noindent
%%%Approved for public release: distribution is unlimited.

%\end{center}

%\vspace{0.2in}

\vfill
\eject

\tableofcontents

\section{Introduction}
The eigenfunctions of the Schr\"odinger equation form a natural basis for
electronic structure calculations.
These eigenfunctions provide a complete description of the quantum states of
a system and are the foundation for analyzing spectral properties, energies,
and response to external perturbations. 
For atoms and molecules in isolation, the eigenfunctions are spatially
localized, reflecting the confinement of electrons to specific regions of
space, and they lend themselves to intuitive chemical and physical
interpretations.
In crystalline materials, on the other hand, the situation changes
fundamentally. 
The periodicity of the potential leads, by Bloch’s theorem, to eigenfunctions (known in this case as Bloch functions) that are not localized in real space but instead exhibit quasiperiodicity.
The delocalized nature of the eigenfunctions makes them less useful in electronic structure calculations. 
More useful in this setting are Wannier functions which are localized
molecular orbitals in crystalline materials. 
These then can be used for the tight-binding approximation and to understand
the conductivity and polarization properties of certain materials (cf.\
\cite{vanderbilt2018berry}).
As such, the computation of Wannier functions is of interest in the
development of new materials, such as semiconductors, solar cells, and
topological insulators.

Formally, Wannier functions are obtained as the Fourier transform of Bloch functions with respect to crystal momentum. 
However, the Bloch functions are defined only up to a momentum-dependent phase (or, more generally, unitary transformation), often referred to in physics as a ``gauge.''
Different choices of gauge yield different Wannier functions, with the most relevant for applications being the optimally localized Wannier functions. 
Mathematically, the construction amounts to solving a family of
parameter-dependent eigenvalue problems while ensuring that the eigenvectors
vary smoothly with respect to the momentum parameter.

The most common approach is to first compute the Bloch functions and then choose the gauge to  
minimize the variance of the resulting Wannier function 
(known as the Marzari--Vanderbilt functional) using some form of gradient descent with high quality initial guesses computed by density-matrix based methods \cite{damle2019variational,damle2018disentanglement,pizzi2020wannier90}. {Unique to the one-dimensional case, it is also known that the globally optimally localized Wannier functions are the eigenfunctions of the projected position operator\,\cite{nenciu1998existence,vanderbilt2018berry}.}

In this manuscript, we propose an alternative approach that requires no
iterative optimization and is {generalizable to higher dimensions}.
We solve a parallel transport problem via numerical integration
to determine the Bloch functions and band functions.
{We then show that a phase correction yields real-valued Wannier functions that achieve global optimality, thereby eliminating the need for iterative optimization or for solving the eigenvalue problem of the projected position operator. 
We also provide complete analysis for the construction, establishing for the first time a constructive proof for the existence of exponentially localized Wannier function in one dimension possessing global optimality via parallel transport.} 
The analysis in this paper for exponential localization is based on Kato's analytic perturbation theory and is related to the work \cite{cornean2016construction}. 
The identification of the global optimal solution is a simple consequence of the analysis.

The algorithm in this paper is closely related to the one dimensional case in 
\cite{cances2017robust,marzari1997maximally,vanderbilt2018berry}, which are also based on parallel transport. 
However, our approach differs from these existing works in three crucial
aspects.
First, we provide a complete numerical procedure that starts with the
Schr\"odinger equation and ends with the evaluation of the Wannier function, where {we solve the parallel transport equation exactly up to discretization errors only. }
Second, this convergence of our approach is of arbitrarily high-order with the order determined solely by choice of integrator. 
This is in contrast to the approach of
\cite{cances2017robust,marzari1997maximally,vanderbilt2018berry} which has a parallel transport procedure that is fundamentally limited to low order. 
Finally and most significantly, we are able to directly compute a provably 
globally optimal Wannier function without any type of iterative
optimization.  

{Furthermore, we point out that our analysis shows the parallel transport scheme in one dimension in \cite{marzari1997maximally,vanderbilt2018berry}  (see Remark\,\ref{rmk:twist}), although only a second-order approximation, also yields exponentially localized Wannier functions when combined with the same correction procedure in this paper. To the best of our knowledge, this has long been a heuristic among physicists \cite{marzari2012maximally, vanderbilt2018berry}, but has never been rigorously proven. The resulting Wannier functions from this scheme, however, differ slightly from the globally optimal ones presented in this work, with the discrepancy arising from the approximation in the parallel transport step. This observation is particularly relevant in settings where it is advantageous to first obtain the Bloch functions, such as in density functional theory.  To avoid digression, we will report a detailed analysis of the above points at a later date.}

This is the first in a series of papers;
the extension of our results to single band matrix models in two dimensions
can be found in the preprint \cite{zhang2025constructing}.  In subsequent papers we will consider the multiband case both in one dimension
 and higher dimensions. We will briefly mention the challenges for these extensions in Section\,\ref{s:conclusions}.

We now outline the remainder of this manuscript.
In Section \ref{s:prelims}, we review some background material on Wannier
functions.
This is followed by Section \ref{s:apparatus} where we develop the analytic
apparatus for constructing globally optimal Wannier functions. 
After this are Sections \ref{s:procedure} and \ref{s:detailed} where we
describe our numerical procedure.
Our numerical procedure is illustrated in Section \ref{s:numerics} which contains numerical experiments. 
We provide our conclusions and outline future work in Section
\ref{s:conclusions}. 

\section{Physical and Mathematical Preliminaries}
\label{s:prelims}

\subsection{Notation}
We let $\mathbb{N}$ denote the positive integers.
We let $\delta_{mn}$ denote the Kronecker delta function, defined by 
\begin{equation}
\delta_{mn} = \begin{cases} 1 & m = n \\ 0 & m \neq n \end{cases}.
\end{equation}

We let $\ell^2(\mathbb{Z})$ denote the Hilbert space for complex sequences
$\mathbf{x}=\seq{x_i}_{i=-\infty}^{\infty}$ such that
$\sum_{i=-\infty}^{\infty} \abs{x_i}^2 < \infty$  with the standard inner product. 
We denote the right shift operator in $\ell^2(\mathbb{Z})$ by $R$, where
\begin{equation}
	R(\mathbf{x}) = \seq{x_{i-1}}_{i=-\infty}^{\infty}.
	\label{eq:rshift}
\end{equation}
For a vector $\mathbf{x} \in \ell^2(\mathbb{Z})$ or $\mathbb{C}^N $, we denote by
$\mathbf{x}^*$ the conjugate transpose and by
$\| \mathbf{x} \|$ the standard 2-norm, $\| \mathbf{x} \| =
\sqrt{\mathbf{x}^* \mathbf{x}}$.

Given sets $A \subset \mathbb{C}$ and $B \subset \mathbb{C}$, we write $f
\in C(A;B)$ if $f: A \to B$ is a continuous function.
Similarly, we write
$f \in C^n(A; B)$ if $f$ is a $n$-times continuously differentiable. 
For an $\mathbf{A}$ as an operator in $\ell^2(\mathbb{Z})$ or a matrix in $ \mathbb{C}^{M \times M}$, we let
$\mathbf{A}^\dagger$ denote its Moore--Penrose pseudoinverse. 

We choose the principal branch of the $\log$ function so that we  assume all real phases $\varphi$ in $e^{\I \varphi}$ take value in $(-\pi,\pi]$.

\subsection{Bloch functions \label{sec:bloch}}
In this section, we introduce Bloch functions in one dimension. 
These are standard facts in the theory of bands and can all be found in \cite{kohn1959analytic}, for example.

We first define a periodic lattice with a period (lattice constant) $a$
\begin{equation}
\Lambda = \left\{ n a : n \in \mathbb{Z}\right\}.
\label{eq:lattice}
\end{equation}
We refer to the interval $ [-a/2,a/2)$ as the primitive unit cell:
\begin{equation}
	\Iuc = [-a/2,a/2).
\end{equation}
It is also convenient to define 
\begin{equation}
\Omega=\frac{2\pi}{a}
\label{eq:om22}
\end{equation}
 as the period for the reciprocal lattice
\begin{equation}
\Lambda^* = \left\{ n \Omega : n \in \mathbb{Z}\right\}.
\label{eq:rec}
\end{equation}
The interval $[-\Omega/2 ,\Omega/2)$ is usually called the first Brillouin zone (BZ), and we denote the interval by $\Ibz:$
\begin{equation}
	\Ibz = [-\Omega/2 ,\Omega/2).
	\label{eq:ibz22}
\end{equation}
We assume that $V : \mathbb{R} \to \mathbb{R}$ is a piecewise continuous function that
has a period $a$:
\begin{equation}
V(x+y) = V(x), \quad x \in \mathbb{R},\, y \in \Lambda.
\label{eq:potper}
\end{equation}

For any $k \in \mathbb{R}$, the relevant Schr\"odinger equation is given by
the eigenvalue problem
\begin{equation}
 -\frac{{\rm d}^2 \psi_k^{(j)}}{{\rm d}x^2} + V(x) \psi_k^{(j)} = E_k^{(j)} \psi_k^{(j)},
 \label{eq:schro}
\end{equation}
subject to the boundary condition
\begin{equation}
\psi_k^{(j)}(x+y) = e^{\I ky} \psi_k^{(j)}(x), \quad y \in \Lambda.
\label{eq:bc}
\end{equation}
%It is shown in [Simon] that the eigenvalues are discrete under such an assumption of the potential $V$.
The value $k$ is called the quasimomentum, $\{\psi_k^{(j)}\}_{j=1}^\infty$
in \eqref{eq:schro} are called the Bloch functions, and
$\{E_k^{(j)}\}_{j=1}^\infty$ are called the energy of the states represented by $\{\psi_k^{(j)}\}_{j=1}^\infty$.

%In this paper, we assume the eigenvalues $\{E_k^{(j)}\}_{j=1}^\infty$ do not become degenerate for any $k\in \Lambda^*$. As a result, they 
%can be ordered such that 
%\begin{equation}
%E_k^{(j-1)} < E_k^{(j)} < E_k^{(j+1)}
%\label{eq:bandgap}
%\end{equation}
%for some $j = 1,2,\hdots$.

Bloch's theorem states two simplifications of the above problem. First, it is sufficient to only consider $k\in \Ibz$, the first BZ. Second, the Bloch functions are of the form
\begin{equation}
	\psi_k^{(j)}(x) = e^{\I kx} u^{(j)}_k(x),
	\label{eq:ufun}
\end{equation}
for some function $u^{(j)}_k$ where 
\begin{equation}
	u^{(j)}_k(x + y) = u^{(j)}_k(x), \quad y \in \Lambda. 
	\label{eq:ubc}
\end{equation}
Substituting (\ref{eq:ufun}) into (\ref{eq:schro}), shows
 $u^{(j)}_k$ satisfies
\begin{equation}
	 \sb{-\bb{\frac{\D}{\D x} + \I k}^2 + V }u_k^{(j)} = E_k^{(j)} u_k^{(j)},
	 \label{eq:ueq}
\end{equation}
subject to the boundary condition in (\ref{eq:ubc}). 
It is convenient to denote the operator in (\ref{eq:ueq}) densely defined
in $L^2(I_{\rm uc})$ by
\begin{equation}
	H(k) = -\bb{\frac{\D}{\D x} + \I k}^2  + V.
	\label{eq:hamilt}
\end{equation}
For any real piecewise continuous function $V$, $H(k)$ is self-adjoint for real
$k$ and the operator has the symmetry
\begin{equation}
	H(k) = \overline{H}(-k)
	\label{eq:trsym}
\end{equation}
for real $k$. 
This symmetry is referred to as time-reversal symmetry in physics literature. 
%We observe that for all $j$ this implies the symmetry of the energy 
%\begin{equation}
%	E_k^{(j)} = E_{-k}^{(j)}.
%\end{equation}

The following result in \cite{reed1978analysis} asserts that, if 
$V$ is piecewise continuous, then the operator associated with \eqref{eq:ueq} is analytic in 
$k$ in the standard sense (cf. \cite{kato2013perturbation,reed1978analysis}); that is, its resolvent depends analytically on $k$ as a bounded operator. Furthermore, the operator has purely discrete spectrum.
%\agnotate{is this the right reference?}
\begin{lemma}
	Suppose the potential $V$ in (\ref{eq:ueq}) is piecewise continuous in $\Iuc$.  Then the following holds for any $k\in \Ibz$: \\
1. The operator $H(k)$ in (\ref{eq:hamilt}) is analytic in $k$.\\
2. The operator $H(k)$ in (\ref{eq:hamilt}) has purely discrete spectrum.
\end{lemma}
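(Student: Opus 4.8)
The plan is to establish the two claims separately, since they are almost independent and rely on different pieces of standard theory. For the analyticity claim, the natural framework is Kato's theory of analytic families of operators of type (A). First I would fix a reference operator, say $H_0 = H(0) = -\D^2/\D x^2 + V$, acting on the domain $\mathcal{D} = \{u \in H^2_{\mathrm{loc}}(\R) : u \text{ periodic}\}$ (equivalently $H^2$ on the torus $\R/\Lambda$), which is independent of $k$. Expanding the square in \eqref{eq:hamilt} gives
\begin{equation}
H(k) = H_0 - 2\I k\,\frac{\D}{\D x} + k^2,
\end{equation}
so $H(k)$ is a polynomial in $k$ with operator coefficients, all defined on the fixed domain $\mathcal{D}$. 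The only analytic subtlety is that the first-order term $\D/\D x$ is unbounded, so one invokes the standard fact that $\D/\D x$ is relatively bounded with respect to $H_0$ with relative bound $0$ (this follows from the interpolation inequality $\|u'\|^2 \le \epsilon \|u''\|^2 + C_\epsilon \|u\|^2$ on the torus, which in turn follows from Fourier series since $V$ piecewise continuous is $H_0$-bounded). Then Kato's criterion (e.g.\ Reed--Simon IV, Theorem XII.9, or \cite{reed1978analysis}) shows $\{H(k)\}$ is a self-adjoint analytic family of type (A) for $k$ in a complex neighborhood of $\Ibz$, which gives claim 1.

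For the discreteness claim, the key point is that $H(k)$ for fixed $k$ has compact resolvent, which forces purely discrete spectrum. I would argue this by showing that the form domain, or the operator domain $\mathcal{D} = H^2(\R/\Lambda)$, embeds compactly into $L^2(\Iuc)$. Since $\Iuc$ is a bounded interval and we are on a compact manifold (the circle of circumference $a$), the Rellich--Kondrachov theorem gives that $H^2 \hookrightarrow L^2$ is compact; equivalently, in Fourier variables the resolvent $(H_0 + c)^{-1}$ acts as multiplication by a sequence tending to $0$, hence is compact. Because $-2\I k\,\D/\D x + k^2$ is $H_0$-bounded with relative bound $0$, the resolvent of $H(k)$ differs from that of $H_0$ by composition with bounded operators in a way that preserves compactness (e.g.\ via the second resolvent identity), so $H(k)$ also has compact resolvent and therefore purely discrete spectrum, which is claim 2.

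The main obstacle — really the only nonroutine step — is handling the unbounded first-order perturbation $\D/\D x$ correctly: one must verify the relative boundedness with relative bound zero so that both the type-(A) analyticity and the compactness of the resolvent transfer from $H_0$ to $H(k)$. Everything else (the polynomial form of $H(k)$, the compact Sobolev embedding on the torus, the stability of discrete spectrum under relatively compact perturbation) is standard. Since the excerpt explicitly attributes the lemma to \cite{reed1978analysis}, I would in practice cite the relevant theorems there (analytic families of type (A), and the characterization of operators with compact resolvent) and only sketch the verification of the relative-bound-zero hypothesis for the $1/i\, \D/\D x$ term, noting that the piecewise continuity of $V$ suffices to make $V$ itself a bounded — indeed $L^\infty$ — multiplication operator, so it causes no difficulty at all.
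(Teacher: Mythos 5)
Your proof is correct and is essentially the standard argument: the paper itself gives no proof of this lemma, merely citing Reed--Simon, and your route (expanding $H(k)=H_0-2\I k\,\D/\D x+k^2$ on the fixed periodic $H^2$ domain, verifying that the first-order term is $H_0$-bounded with relative bound zero to get a self-adjoint analytic family of type (A), and deducing compact resolvent from the compact embedding $H^2\hookrightarrow L^2$ on the torus) is exactly the textbook proof underlying that citation. No gaps.
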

This justifies the labeling of energies and Bloch functions in (\ref{eq:schro}) and (\ref{eq:ueq}). 

\begin{remark}
The eigenfunctions $\{\psi_k^{(j)}\}_{j=1}^\infty$ are only defined up to a
nonzero constant scaling.
We can assume that these modulus of these constants are chosen such that  
\begin{equation}
\int_{-a/2}^{a/2} |\psi_k^{(j)}(x)|^2\,{\rm d}x= 
\int_{-a/2}^{a/2} |u_k^{(j)}(x)|^2\,{\rm d}x = 1, \quad k \in \mathbb{R},\, j=1,2,\hdots.
\label{eq:unorm}
\end{equation}
\end{remark}
%It follows from \eqref{eq:schro} that $\psi_k^{(j)} \in
%C^2(\mathbb{R}; \mathbb{C})$ and $E_k \in \mathbb{R}$ for all $k \in \mathbb{R}$ and $j
%= 1,2,\hdots$.
%We therefore use the ordering convention
%\begin{equation}
%E_k^{(1)} \leq E_k^{(2)} \leq E_k^{(3)} \leq  \hdots.
%\label{eq:gap}
%\end{equation}
%Moreover, the periodicity condition
%\begin{equation}
%E_k^{(j)} = E_{k+z}^{(j)}, \quad k \in \mathbb{R}, z \in \Lambda^*, j=1,2,\hdots
%\end{equation}
%holds.

There are different conventions for extending $k$ outside of the first BZ. 
In the context of Wannier functions, it is natural to use the so-called periodic zone scheme, where the Bloch functions and eigenvalues are copies of those in the first BZ \cite{vanderbilt2018berry}. 
In other words, we can extend the Bloch functions and eigenvalues to be periodic functions in $k$ with period $\Omega$. More explicitly, we have
\begin{equation}
	E_{k+g}^{(j)} = E_k^{(j)}
		\label{eq:ep}
\end{equation}
and
\begin{equation}
	\psi_{k+g}^{(j)} = \psi_{k}^{(j)}
	\label{eq:psip}
\end{equation}
for any $k\in I_{\rm bz}, g\in\Lambda^*$ and $j=1,2,\ldots$. Combining (\ref{eq:psip}) and (\ref{eq:ufun}) shows that
\begin{equation}
	u_{k+g}^{(j)}(x) = e^{-\I gx}u_{k}^{(j)}(x).
	\label{eq:ubck}
\end{equation}

\subsection{Wannier functions \label{sec:wannier}}
Due to the periodic condition \eqref{eq:psip}, the Wannier functions are defined as the Fourier coefficients of $\{\psi_k^{(j)}(x)\}_{j=1}^{\infty}$:
\begin{equation}
W_n^{(j)}(x) = \frac{1}{\Omega}\int_{-\Omega/2}^{\Omega/2} e^{-\I nak }\psi_k^{(j)}(x)\,{\rm d}k, \quad j = 1,2,\hdots.
\label{eq:wanniern}
\end{equation}
The Wannier functions with index $n \neq 0$ are copies of
$W_0^{(j)}$ shifted to be centered around the $n$th lattice site \cite{vanderbilt2018berry}. 
More explicitly, we have
\begin{equation}
W_n^{(j)}(x) = W_0^{(j)}(x-na), \quad j=1,2,\hdots.
\label{eq:wancopy}
\end{equation}
Hence, we only need to consider the case where $n=0$:
\begin{equation}
W_0^{(j)}(x) = \frac{1}{\Omega}\int_{-\Omega/2}^{\Omega/2} \psi_k^{(j)}(x)\,{\rm d}k = \frac{1}{\Omega}\int_{-\Omega/2}^{\Omega/2} e^{\I xk}u_k^{(j)}(x)\,{\rm d}k, \quad j = 1,2,\hdots.
\label{eq:wannier}
\end{equation}

Wannier functions are highly non-unique since the Bloch functions in (\ref{eq:schro}) with the normalization condition in (\ref{eq:unorm}) are only determined up to a complex number of unit modulus. More explicitly, if any $\psi_k^{(j)}$ satisfies (\ref{eq:schro}), so will any $e^{-\I\varphi(k)}\psi_k^{(j)}$ with a real $\varphi(k)$ for $k\in\Ibz$ without affecting the normalization (\ref{eq:unorm}). In this paper, we refer to numbers of the form $e^{-\I\varphi}$ as phase factors.  For both physical and computational purposes, it is highly desirable to choose $\psi_k^{(j)}$ so that the Wannier functions are well-localized.

It was shown in \cite{kohn1959analytic}, in the case of a symmetric
potential, and in \cite{des1964analytical}, more generally, that
$\{\psi_k^{(j)}\}_{j=1}^\infty$ can be chosen such that the
$\{W_0^{(j)}\}_{j=1}^\infty$ are exponentially localized.
In other words, there exist constants $C > 0$ and $D > 0$ such that
\begin{equation}
|W_0^{(j)}(x)| \leq C e^{-D |x|}, \quad x \in \mathbb{R}.
\label{eq:exploc}
\end{equation}

We also define the moment functions 
\begin{equation}
\langle x \rangle = \int_{-\infty}^\infty x |W_0^{(j)}(x)|^2 \,{\rm d}x,
\label{eq:mom1}
\end{equation}
and
\begin{equation}
\langle x^2 \rangle = \int_{-\infty}^\infty x^2 |W_0^{(j)}(x)|^2\,{\rm d}x.
\label{eq:mom2}
\end{equation}
The variance defined by $\langle x^2 \rangle - \langle x \rangle^2$ is a standard localization measure for Wannier functions\,\cite{vanderbilt2018berry}.

%%% The following is Bloch's theorem. 
%%% \begin{theorem}
%%% Let $\Omega > 0$ and let $\Lambda$ be defined by \eqref{eq:lattice}.
%%% Let $V : \mathbb{R} \to \mathbb{R}$ be continuous \agnotate{this is probably not enough?}
%%% and satisfy the periodicity condition \eqref{eq:potper}.
%%% For all $k \in \mathbb{R}$, the eigenvalue problem \eqref{eq:schro} subject
%%% to the boundary condition \eqref{eq:bc} has a sequence of solutions $\{
%%% \psi_k^{(j)} \}_{j=1}^\infty$ and $\{ E_k^{(j)} \}_{j=1}^\infty$.
%%% Moreover, the $\psi_k^{(j)} \in C^2(\mathbb{R})$ for all $k \in \mathbb{R}$
%%% and $j = 1,2,\hdots$. 
%%% \label{theorem:bloch}
%%% \end{theorem}
%%% 
%%% The following theorem states the existence of exponentially localized
%%% Wannier functions. 
%%% \begin{theorem}
%%% Assume that the eigenvalues of \eqref{eq:schro} satisfy \eqref{eq:gap}
%%% \agnotate{are we actually in sturm liouville territory?}.
%%% For each $j = 1,2,\hdots$, there exists a choice of $\psi_k$ such that 
%%% the function 
%%% \begin{equation}
%%% W_0^{(j)}(x) = \int_{-\Omega/2}^{\Omega/2} \psi_k^{(j)}(x)\,{\rm d}k
%%% \end{equation}
%%% satisfies
%%% 
%%% \end{theorem}

\subsection{Analyticity of eigenvalues and eigenvectors \label{sec:katoan}}
We next introduce standard results in analytic perturbation theory for linear operators in \cite{kato2013perturbation}. This includes smoothness results for eigenvalues/vectors of an analytic family of operators, and formulas for their derivatives. 

Suppose we have a family of matrices $T(z) \in \C^{n\times n}$, where $T(z)$ is
analytic for $z$ in a domain $D\subset \C$ intersecting the real axis. 
We call an analytic family of matrices $T(z)$ self-adjoint if it satisfies
$T^*(z) = T(\bar{z})$ for $z\in D$. 
The following theorem is Theorem 6.1 in \cite{kato2013perturbation}.

\begin{theorem}
	If an analytic family $T(z)\in \C^{n\times n}$ is self-adjoint, its eigenvalues $\cb{\lambda_i(z)}_{i=1}^{n}$ and eigenprojectors $\cb{P_i(z)}_{i=1}^{n}$ are analytic for real $z\in D$.
	\label{thm:katoproj}
\end{theorem}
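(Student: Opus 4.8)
The plan is to combine the classical branching theory of algebraic functions with the rigidity that self-adjointness imposes on the real axis. I would first set self-adjointness aside and recall the general picture: the eigenvalues of $T(z)$ are the roots of $\det(\la I - T(z))$, a polynomial in $\la$ whose coefficients are analytic on $D$, so by the classical Puiseux-series theory of algebraic functions (see \cite{kato2013perturbation}, Ch.\ II) the roots continue to finitely many analytic branches away from a discrete exceptional set, and near an exceptional point $z_0$ each colliding branch is given by a convergent series $\la(z) = \mu + \sum_{k\ge1}\alpha_k(z-z_0)^{k/p}$ of some period $p\ge1$, the $p$ members of the colliding group arising as $(z-z_0)^{1/p}$ runs over its $p$ values. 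I would also record the fact I will lean on later: for any group of eigenvalues colliding at $z_0$, the total projection $P(z) = \frac{1}{2\pi\I}\oint (\xi - T(z))^{-1}\,\D\xi$ over a small fixed contour enclosing exactly that group is single-valued and analytic at $z_0$, in contrast with the individual eigenprojections, which a priori pick up the branch point.

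The key step is to show that self-adjointness forbids genuine branch points on $D\cap\R$. For real $z$ we have $T^*(z) = T(\bar z) = T(z)$, so $T(z)$ is Hermitian: real eigenvalues, no Jordan blocks. Suppose, for contradiction, that some branch had a branch point at $z_0\in D\cap\R$ of period $p\ge2$. Using the principal branch of $(\cdot)^{1/p}$, the value $\mu + \sum_{k\ge1}\alpha_k(z-z_0)^{k/p}$ is an eigenvalue of $T(z)$ for real $z$ near $z_0$, hence real. For $z>z_0$ this reads $\mu + \sum_k\alpha_k t^k$ with $t=(z-z_0)^{1/p}>0$, and matching coefficients of this identically real power series in $t$ forces $\mu$ and every $\alpha_k$ to be real. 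For $z<z_0$ the principal root is $|z-z_0|^{1/p}e^{\I\pi/p}$, so reality of $\mu+\sum_k\alpha_k e^{\I k\pi/p}|z-z_0|^{k/p}$, with the $\alpha_k$ now known real, forces $\alpha_k\sin(k\pi/p)=0$, i.e.\ $\alpha_k=0$ whenever $p\nmid k$. But then $\la$ depends on $(z-z_0)^{1/p}$ only through its $p$-th power $z-z_0$, so the branch is in fact single-valued and analytic in $z-z_0$, contradicting $p\ge2$. I would conclude that every eigenvalue branch is single-valued and real-analytic on $D\cap\R$; branches may still take equal values at isolated points, but each remains analytic.

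To obtain analytic eigenprojectors through such crossings, I would fix $z_0\in D\cap\R$, take an eigenvalue $\mu$ of $T(z_0)$ of multiplicity $m$ with analytic branches $\la_1,\dots,\la_m$ through it, and form their total projection $P(z)$, which by the first step is analytic near $z_0$ of constant rank $m$. Conjugating by the solution $U(z)$ of the Sz.-Nagy transport equation $U'(z)=[P'(z),P(z)]\,U(z)$, $U(z_0)=I$ — which is analytic and unitary since $[P',P]$ is skew-Hermitian ($P$ being an orthogonal projection), and satisfies $U(z)P(z_0)U(z)^{-1}=P(z)$ — and restricting to the fixed space $\operatorname{range}P(z_0)$, I reduce to an $m\times m$ analytic Hermitian family $\wt T(z)$ with $\wt T(z_0)=\mu I_m$ and the same eigenvalue branches. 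Then I induct on $m$: if all $\la_i$ coincide, $\wt T$ is a scalar family and any fixed orthonormal projectors serve; otherwise, repeatedly peeling the scalar part off the lowest-order matrix coefficient of $\wt T(z)-\mu I_m$ (a process that terminates because $\wt T$ is not a scalar family) puts $\wt T$ in the form $q(z)I_m + (z-z_0)^r A(z)$ with $q$ scalar analytic, $A$ Hermitian analytic, and $A(z_0)$ not a multiple of $I_m$; since $A(z_0)$ then has at least two distinct eigenvalues, the spectrum of $\wt T(z)$ near $z_0$ splits into at least two clusters with analytic total projections (first step) of rank $<m$, and the inductive hypothesis on each cluster delivers the individual analytic $P_i$. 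Assembling over all eigenvalues $\mu$ of $T(z_0)$, and over $z_0$, proves the theorem.

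I expect the eigenvalue half — the Puiseux structure together with the two-sided reality argument at $z_0$ — to be essentially bookkeeping; the real obstacle is the eigenprojector construction at a degeneracy, specifically setting up the analytic unitary frame of $\operatorname{range}P(z)$ and carrying out the Rellich-style inductive peeling of higher-order degeneracies cleanly. It is worth emphasizing that self-adjointness is genuinely essential and not merely convenient: the analytic family $\left(\begin{smallmatrix}0 & 1\\ z & 0\end{smallmatrix}\right)$ has eigenvalues $\pm\sqrt z$ and eigenprojections with a true branch point at the real point $z=0$.
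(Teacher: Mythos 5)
Your proof is correct, and it is essentially the argument the paper is pointing to: the paper states this result without proof, citing it as Theorem 6.1 of Chapter II in Kato's book, and what you have written out is precisely the standard Rellich--Kato proof from that source (Puiseux expansions plus the two-sided reality argument to exclude branch points on the real axis for the eigenvalues; contour-integral total projections, the unitary transformation function, and the inductive reduction/peeling process for the eigenprojections). No gaps; the concluding non-self-adjoint counterexample is also the standard one.
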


It is worth noting that the above theorem is still applicable when the eigenvalues become degenerate. 
In this case, the eigenprojectors at degenerate points must be viewed as limits of those approaching the degenerate points. We refer the reader to Theorem 1.10 and Remark 1.11 in \cite{kato2013perturbation} for details.

%%% We observe that the above theorem is applicable even when eigenvalues become degenerate. For example, with the proper choice of branches, the eigenvalues of the family $T(z) = \sb{\mat{0&z\\z&0}}$ for $z\in\R$ are $\lambda_{\pm} = \pm z$, which are obviously analytic functions. Moreover, the eigenprojectors at degenerate points should be viewed as limits of those approaching the degenerate points, \emph{not} the total projector of the entire eigenspace. We refer the reader to Theorem 1.10 and Remark 1.11 in \cite{kato2013perturbation} for details.

\begin{remark}
Theorem \ref{thm:katoproj} concerns only the eigenprojectors, not the eigenvectors; this is because for any an eigenvector $\mathbf{v}_i(z)$ of $T(z)$, one can multiply the vector by a $z$-dependent phase factor $e^{-\I\varphi(z)}$ ($\varphi(z)$ is real for real $z$) while the new vector $e^{-\I\varphi(z)}\mathbf{v}_i(z)$ is still an eigenvector. The phase factor could change the smoothness of $\mathbf{v}_i(z)$ while the projector $P_i(z) = \mathbf{v}_i(z)\mathbf{v}^*_i(z)$ is unchanged for real $z$.
\label{rmk:phase}
\end{remark}

The following theorem shows that it is possible to find a family of eigenvectors $\mathbf{v}_i(z)$ that are analytic for real $z \in D$. 
It is a special case of the construction in Section 2.6.2 in
\cite{kato2013perturbation}. 
\begin{theorem}
	Suppose that the analytic matrix family $T(z)$ is self-adjoint in $D$, and that $\mathbf{v}_0$ is an eigenvector of $T(z_0)$ with a nondegenerate eigenvalue $\lambda_0$ for some real $z_0$ in $D$. 
Let $\lambda(z)$ and $P(z)$ be analytic families of eigenvalues and eigenprojectors in $D$, the existence of which follows from Theorem \ref{thm:katoproj}.
Consider the the initial value problem 
\begin{equation}
	\frac{\D}{\D z}\mathbf{v}(z) = Q(z)\mathbf{v}(z), \quad \mbox{with the initial condition} \quad \mathbf{v}(z_0) = \mathbf{v}_0,
	\label{eq:katoode}
\end{equation}
where $Q= \frac{\D P}{\D z} P - P\frac{\D P}{\D z}$.
The solution $\mathbf{v}(z)$ to (\ref{eq:katoode}) is analytic and satisfies
$T(z) \mathbf{v}(z) = \lambda(z)\mathbf{v}(z)$ for real $z$ in $D$.
\label{thm:katovec}
\end{theorem}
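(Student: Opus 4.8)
The plan is to regard \eqref{eq:katoode} as a linear ODE with an analytic coefficient matrix, and then to verify by a short algebraic computation that its solution is forced to remain in the (one-dimensional) range of $P(z)$. First I would settle existence and analyticity. By Theorem \ref{thm:katoproj} the eigenprojector $P(z)$ is analytic near the real points of $D$, hence so is $Q(z) = \frac{\D P}{\D z}P - P\frac{\D P}{\D z}$; since \eqref{eq:katoode} is linear with this analytic coefficient, the Cauchy existence theorem for analytic ODEs produces a unique solution $\mathbf{v}(z)$, analytic on a neighborhood of the real part of $D$, with $\mathbf{v}(z_0) = \mathbf{v}_0$.

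The core of the argument is to show $\mathbf{v}(z) \in \mathrm{range}\,P(z)$. Differentiating $P(z)^2 = P(z)$ gives $P'P + PP' = P'$; left-multiplying by $P$ and using $P^2 = P$ yields $PP'P = 0$. From these identities one computes $PQ = -PP'$ and $(1-P)Q = P'P$. Now set $\mathbf{u}(z) = (1 - P(z))\mathbf{v}(z)$. Since $\mathbf{v}_0$ lies in the range of $P(z_0)$ we have $\mathbf{u}(z_0) = 0$, and differentiating with \eqref{eq:katoode},
\begin{equation*}
\mathbf{u}' = -P'\mathbf{v} + (1-P)Q\mathbf{v} = -P'\mathbf{v} + P'P\mathbf{v} = -P'(1-P)\mathbf{v} = -P'\mathbf{u}.
\end{equation*}
This is a homogeneous linear ODE with zero initial data, so $\mathbf{u} \equiv 0$ by uniqueness, i.e. $P(z)\mathbf{v}(z) = \mathbf{v}(z)$ for all $z$ in the interval.

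Finally I would translate this into the eigenvalue statement. For real $z$ the matrix $T(z)$ is Hermitian, so $P(z)$ is the orthogonal spectral projector onto the $\lambda(z)$-eigenspace; its rank is locally constant and equals $1$ at $z_0$ by the nondegeneracy hypothesis, hence $\mathrm{range}\,P(z)$ is exactly the one-dimensional $\lambda(z)$-eigenspace of $T(z)$, and $\mathbf{v}(z) \in \mathrm{range}\,P(z)$ gives $T(z)\mathbf{v}(z) = \lambda(z)\mathbf{v}(z)$. To confirm $\mathbf{v}(z) \neq 0$ (so that it is genuinely an eigenvector), observe that $P(z)$ and $P'(z)$ are Hermitian for real $z$, so $Q(z)$ is skew-Hermitian and $\frac{\D}{\D z}\|\mathbf{v}(z)\|^2 = \mathbf{v}^*(Q + Q^*)\mathbf{v} = 0$, whence $\|\mathbf{v}(z)\| = \|\mathbf{v}_0\| \neq 0$.

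I do not anticipate a genuine obstacle: the only points requiring care are the appeal to analytic dependence of solutions of linear ODEs (and, if one wants $\mathbf{v}$ holomorphic throughout $D$ rather than just near $\R \cap D$, a remark that $Q$ extends holomorphically off the real axis by Theorem \ref{thm:katoproj}), and the bookkeeping of the projector identities $PP'P = 0$, $PQ = -PP'$, $(1-P)Q = P'P$, which drive the entire computation.
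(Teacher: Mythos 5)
Your proof is correct. The paper does not actually prove this theorem --- it defers to Section 2.6.2 of Kato, where the same result is obtained by constructing the transformation operator $U(z)$ solving $U'=QU$, $U(z_0)=I$, and proving the intertwining relation $P(z)U(z)=U(z)P(z_0)$ from exactly the projector identities you use ($P'P+PP'=P'$, $PP'P=0$); your argument is the direct vector-level form of that construction, showing $(1-P)\mathbf{v}$ solves a homogeneous linear ODE with zero initial data, and your norm-preservation remark via skew-Hermitian $Q$ correctly handles the nonvanishing of $\mathbf{v}$.
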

%We also observe that the ODE in (\ref{eq:katoode}) can be viewed as the tool for automatically choosing the right branch for 

%Although we assume the eigenvalue of interest at the initial point $x_0$ is simple, the ODE still hold as it crosses points of degeneracy.

While the above results only apply to self-adjoint analytic family of finite-dimensional matrices $T(z)$ in $D$, they straightforwardly generalize to a class of infinite-dimensional operators in Hilbert space. 
%\agnotate{Given that we need to apply this theorem to an unbounded operator, doesn't the definition of analytic become subtle?}
The definition for the analytic family $T(z)$ being self-adjoint in a domain $D$ intersecting the real axes is unchanged. 
To extend Theorems \ref{thm:katoproj} and \ref{thm:katovec} to the operator case Kato defined a class of operators where a subset of their spectra can be separated by a simple close curve in the complex plane from the rest.

Kato showed that, for a self-adjoint analytic family $T(z)$ possessing a finite set of isolated spectrum that can be separated for $z\in D$, the finite-dimensional version Theorems \ref{thm:katoproj} and \ref{thm:katovec} can be applied to the finite-dimensional subspace corresponding to the separated eigenvalues. 
We summarize this extension in the following theorem and refer the reader to Sections 7.1.3 and 7.3.1 in \cite{kato2013perturbation} for details.
\begin{theorem}
	Suppose that $T(z)$ is an analytic family of operators in Hilbert space for $z\in D$. Suppose further that $T(z)$ is self-adjoint, defined analogously as the matrix case in Theorem \ref{thm:katoproj}. Theorems \ref{thm:katoproj} and \ref{thm:katovec} also apply to any finite systems of eigenvalues, eigenprojectors and eigenvectors of $T(z)$, where the eigenvalues can be separated from the rest of the spectrum of $T(z)$.
	\label{thm:katoop}
\end{theorem}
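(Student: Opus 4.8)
The plan is to reduce the claim to the finite-dimensional Theorems \ref{thm:katoproj} and \ref{thm:katovec} by passing to the Riesz projection associated with the separated part of the spectrum. Fix a real point $z_0 \in D$ and let $\Sigma(z_0)$ denote the finite system of eigenvalues under consideration. By the separation hypothesis there is a rectifiable simple closed curve $\Gamma \subset \C$ that encloses $\Sigma(z_0)$ and no other point of the spectrum of $T(z_0)$; by continuity of the resolvent, $\Gamma$ continues to separate the spectrum of $T(z)$ for all $z$ in a neighborhood of $z_0$. First I would set
\begin{equation}
P(z) = \frac{1}{2\pi\I}\oint_{\Gamma} \bb{\zeta - T(z)}^{-1}\,\D\zeta,
\end{equation}
and, using the resolvent identities and the joint analyticity of $(\zeta - T(z))^{-1}$ in $(\zeta, z)$ for $\zeta \in \Gamma$, verify that $P(z)$ is an analytic family of bounded projections whose range $M(z)$ has a fixed finite dimension $m$; for real $z$, self-adjointness of $T(z)$ forces $P(z)$ to be the orthogonal projection onto $M(z)$.

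Next I would build Kato's transformation function. Setting $Q(z) = \frac{\D P}{\D z}(z)\,P(z) - P(z)\,\frac{\D P}{\D z}(z)$, let $U(z)$ be the solution of the operator initial value problem $\frac{\D}{\D z}U(z) = Q(z)U(z)$ with $U(z_0) = I$. One checks that $U(z)$ is analytic, boundedly invertible (its inverse solves the companion equation), unitary for real $z$, and satisfies the intertwining relation $U(z)P(z_0)U(z)^{-1} = P(z)$, so that $U(z)$ carries the \emph{fixed} finite-dimensional space $M_0 := M(z_0)$ onto $M(z)$. Consequently
\begin{equation}
\wt T(z) = U(z)^{-1} T(z)\, U(z)\big|_{M_0}
\end{equation}
is an analytic family of operators on the fixed $m$-dimensional space $M_0$, self-adjoint for real $z$, whose eigenvalues are precisely the members of $\Sigma(z)$.

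Now I would invoke the matrix case: Theorem \ref{thm:katoproj} supplies analytic eigenvalues $\lambda_i(z)$ and eigenprojectors $\wt P_i(z)$ of $\wt T(z)$ for real $z$, and Theorem \ref{thm:katovec} (under its nondegeneracy hypothesis at $z_0$) supplies an analytic eigenvector $\wt{\mathbf{v}}_i(z)$ solving the ODE of the form \eqref{eq:katoode} associated with $\wt T$. Transporting back, $P_i(z) := U(z)\wt P_i(z)U(z)^{-1}$ and $\mathbf{v}_i(z) := U(z)\wt{\mathbf{v}}_i(z)$ are analytic in real $z$ and are eigenprojectors and eigenvectors of $T(z)$; a direct computation, using $U' = QU$ and the fact that $P(z_0)$ acts as the identity on $M_0$, shows that $\mathbf{v}_i(z)$ again satisfies \eqref{eq:katoode} with the genuine eigenprojector $P_i(z)$ in place of $P(z)$. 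Finally, since every real point of $D$ admits such a neighborhood, a standard patching/analytic-continuation argument extends the constructed families over all of $D$, completing the reduction.

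The step I expect to be the main obstacle is the bookkeeping around the separating curve rather than any single analytic estimate: one must use the hypothesis that the finite spectral system remains separated over $D$ to choose $\Gamma$ (hence $P(z)$ and the dimension $m$) consistently, and then argue that the local analytic families glue into global ones. Once this is in place, the analyticity and self-adjointness of $\wt T(z)$, the properties of $U(z)$, and the conclusion are all routine consequences of the finite-dimensional theory and the standard resolvent calculus, which is essentially the content of Sections 7.1.3 and 7.3.1 of \cite{kato2013perturbation}.
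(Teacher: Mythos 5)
Your proposal is correct and follows essentially the same route the paper relies on: the paper offers no proof of its own but defers to Sections 7.1.3 and 7.3.1 of Kato, and your reduction via the Riesz projection $P(z)=\frac{1}{2\pi\I}\oint_\Gamma(\zeta-T(z))^{-1}\,\D\zeta$ and the transformation function $U$ solving $U'=QU$ is precisely the argument carried out there. Nothing further is needed.
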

In this paper, we deal with operators with discrete spectra, where any eigenvalues of interests can be separated from the rest. 
Theorem \ref{thm:katoop} is applicable to the eigensubspaces of any such
operator.

\subsection{Derivatives of eigenvalues and eigenvectors}
%%%Given a self-adjoint analytic family of operators $T(z)$ in Hilbert space, Theorems \ref{thm:katovec} and \ref{thm:katoop} show that the eigenvalues and eigenprojectors are analytic, and the eigenvectors can be chosen to be analytic for real $z\in D$, so the coefficient in their first-order Taylor expansion gives their derivative with respect to $z$.
%%%%The operators can be unbounded (thus is only densely defined). 
Next, we provide expressions for the derivatives of eigenvalues and eigenvectors. 
For simplicity, we only state the result for the Hilbert space $\ell^2(\mathbb{Z})$ and operators with a purely discrete spectrum, which will suffice for our purpose. 
These formulas can be found in Section 8.2.3 in \cite{kato2013perturbation}.
 %The one for eigenvalues can be found in (2.5) in \cite{kato2013perturbation}. 
The formula \eqref{eq:evecpert} is a consequence of the derivative of the
projection $\frac{\D P}{\D z}$ substituted into (\ref{eq:katoode}).
\begin{theorem}
	Suppose that $T(z)$ is a self-adjoint analytic family of operators in $\ell^2(\mathbb{Z})$ for $z\in D$ and the operators have a purely discrete spectrum. Suppose further that $z_0 \in D$ is real and $\lambda(z_0)$ is a nondegenerate eigenvalue of $T(z_0)$ with the eigenvector $\mathbf{v}$, i.e. $T(z_0)\mathbf{v}(z_0) = \lambda(z_0)\mathbf{v}(z_0)$. The derivative of $\lambda$ at $z_0$ is given by
	\begin{equation}
		\frac{\D \lambda(z_0)}{\D z} = \mathbf{v}(z_0)^* \frac{\D T(z_0)}{\D z} \mathbf{v}(z_0),
				\label{eq:evalpert}
	\end{equation}
and the derivative of $\mathbf{v}$ at $z_0$ is given by
	\begin{equation}
		\frac{\D \mathbf{v}(z_0)}{\D z} = - (T(z_0) - \lambda(z_0))^\dagger \frac{\D T(z_0)}{\D z} \mathbf{v}(z_0),
		\label{eq:evecpert}
	\end{equation}
	where $(T(z_0) - \lambda(z_0))^\dagger$ is the pseudoinverse of $T(z_0) - \lambda(z_0)$ ignoring the eigenspace of $\lambda(z_0)$.
\end{theorem}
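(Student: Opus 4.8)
The plan is to derive both identities by differentiating the eigenrelation $T(z)\mathbf{v}(z) = \lambda(z)\mathbf{v}(z)$ at $z = z_0$, where $\mathbf{v}(z)$ is taken to be the analytic family of eigenvectors produced by Theorem \ref{thm:katovec} (equivalently, the solution of the transport equation \eqref{eq:katoode} with $Q = \frac{\D P}{\D z}P - P\frac{\D P}{\D z}$), and $P(z) = \mathbf{v}(z)\mathbf{v}(z)^*$ is the corresponding analytic family of eigenprojectors. I normalize $\|\mathbf{v}(z_0)\| = 1$; since the generator $Q$ is skew-adjoint at real $z$ ($Q^* = P\frac{\D P}{\D z} - \frac{\D P}{\D z}P = -Q$), the flow \eqref{eq:katoode} is norm-preserving, so $\|\mathbf{v}(z)\| \equiv 1$ near $z_0$. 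All of the manipulations below are legitimate on $\ell^2(\mathbb{Z})$ because the discreteness of the spectrum together with the nondegeneracy of $\lambda(z_0)$ yields a spectral gap: $P(z)$ is then finite rank and analytic by Theorem \ref{thm:katoop}, and $T(z_0) - \lambda(z_0)$ is boundedly invertible on the orthogonal complement of $\mathbf{v}(z_0)$, so $(T(z_0) - \lambda(z_0))^\dagger$ is a bounded operator with range $\mathbf{v}(z_0)^\perp$.

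For \eqref{eq:evalpert}, I would differentiate $T(z)\mathbf{v}(z) = \lambda(z)\mathbf{v}(z)$ and apply $\mathbf{v}(z_0)^*$ on the left. By self-adjointness $\mathbf{v}(z_0)^* T(z_0) = \lambda(z_0)\mathbf{v}(z_0)^*$, so the term $\mathbf{v}(z_0)^* T(z_0)\,\dot{\mathbf v}(z_0)$ cancels against $\lambda(z_0)\,\mathbf{v}(z_0)^*\,\dot{\mathbf v}(z_0)$ on the other side, and $\mathbf{v}(z_0)^*\mathbf{v}(z_0) = 1$ leaves exactly $\dot\lambda(z_0) = \mathbf{v}(z_0)^*\,\dot T(z_0)\,\mathbf{v}(z_0)$.

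For \eqref{eq:evecpert}, I would first record that the transport gauge forces $\dot{\mathbf v}$ to be orthogonal to $\mathbf{v}$: using $P\mathbf{v} = \mathbf{v}$ and $\mathbf{v}^* P = \mathbf{v}^*$ one gets $Q\mathbf{v} = (I - P)\frac{\D P}{\D z}\mathbf{v}$ and $\mathbf{v}^* Q\mathbf{v} = \mathbf{v}^*\frac{\D P}{\D z}\mathbf{v} - \mathbf{v}^*\frac{\D P}{\D z}\mathbf{v} = 0$, so by \eqref{eq:katoode} $\mathbf{v}(z_0)^*\dot{\mathbf v}(z_0) = 0$. Differentiating the eigenrelation a second time and rearranging yields
\begin{equation}
(T(z_0) - \lambda(z_0))\,\dot{\mathbf v}(z_0) = -\bb{\dot T(z_0) - \dot\lambda(z_0)\,I}\mathbf{v}(z_0),
\end{equation}
whose right-hand side is orthogonal to $\mathbf{v}(z_0)$ by the eigenvalue formula just established and hence lies in the range of $T(z_0) - \lambda(z_0)$. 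Applying $(T(z_0)-\lambda(z_0))^\dagger$ and using the identities $(T(z_0)-\lambda(z_0))^\dagger(T(z_0)-\lambda(z_0)) = I - P(z_0)$, $(I - P(z_0))\dot{\mathbf v}(z_0) = \dot{\mathbf v}(z_0)$ (from the orthogonality just shown), and $(T(z_0)-\lambda(z_0))^\dagger\mathbf{v}(z_0) = 0$, the $\dot\lambda(z_0)$ term drops out and we arrive at $\dot{\mathbf v}(z_0) = -(T(z_0)-\lambda(z_0))^\dagger\,\dot T(z_0)\,\mathbf{v}(z_0)$. Equivalently, one may insert the standard resolvent (Riesz-integral) formula for $\frac{\D P}{\D z}(z_0)$ into \eqref{eq:katoode} and evaluate a residue at $\zeta = \lambda(z_0)$, which is the route alluded to in the text.

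The main obstacle is conceptual rather than computational: one must pin down that the family $\mathbf{v}(z)$ implicit in the statement is the transport-gauge family of Theorem \ref{thm:katovec}, and verify that this is precisely the gauge in which $\dot{\mathbf v}$ has no component along $\mathbf{v}$ — without this the eigenvector formula is false as stated, since a different (smooth) phase convention changes $\dot{\mathbf v}(z_0)$ by an imaginary multiple of $\mathbf{v}(z_0)$. The remaining points — that $\dot T(z_0)\mathbf{v}(z_0)$ lies in the subspace on which the bounded operator $(T(z_0)-\lambda(z_0))^\dagger$ acts, and that the operator-valued eigenrelation may be differentiated termwise — are routine given the analyticity in Theorem \ref{thm:katoop} and the spectral gap at $\lambda(z_0)$.
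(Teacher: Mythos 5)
Your derivation is correct, and it takes a slightly different (more self-contained) route than the one the paper indicates. The paper does not prove this theorem in the text: it cites Section 8.2.3 of Kato and sketches the eigenvector formula by substituting the known expression for $\frac{\D P}{\D z}$ into the transport equation (\ref{eq:katoode}), using $P\frac{\D P}{\D z}P=0$, $P\mathbf{v}=\mathbf{v}$, and (\ref{eq:orthopv}). You instead differentiate the eigenrelation directly, obtain $(T-\lambda)\dot{\mathbf v}=-(\dot T-\dot\lambda)\mathbf{v}$, and apply the pseudoinverse; the projector enters only through the gauge statement $\mathbf{v}^*\dot{\mathbf v}=0$, which you correctly extract from (\ref{eq:katoode}). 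Your approach is more elementary and makes explicit the one point the bare statement glosses over — that \eqref{eq:evecpert} is a gauge-dependent formula, valid precisely for the parallel-transport family of Theorem \ref{thm:katovec}, since any other smooth phase convention shifts $\dot{\mathbf v}$ by an imaginary multiple of $\mathbf{v}$. The paper's route buys slightly less: it inherits the gauge automatically from the transport ODE but requires the formula for $\frac{\D P}{\D z}$ as an additional input. One wording slip to fix: the displayed identity $(T(z_0)-\lambda(z_0))\dot{\mathbf v}(z_0)=-(\dot T(z_0)-\dot\lambda(z_0))\mathbf{v}(z_0)$ comes from differentiating the eigenrelation \emph{once} and rearranging, not from "differentiating a second time"; as written the equation is the correct first-derivative identity, so only the phrase needs changing.
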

We note that in obtaining (\ref{eq:evecpert}) from (\ref{eq:katoode}) we also used $P\frac{\D P}{\D z}P = 0$,  $P\mathbf{v}=\mathbf{v}$ for $\mathbf{v}$ in the range of $P$. 
We observe that, by the definition of pseudoinverse, the range of $(T - \lambda)^\dagger$ is orthogonal to that of $P$, so we have the following formula
\begin{equation}
	(T - \lambda)^\dagger P = P(T - \lambda)^\dagger = 0.	
	\label{eq:orthopv}
\end{equation}
We note that in the physics literature, the formula (\ref{eq:evecpert}) is usually written using the spectral decomposition of $T$ in the following form
	\begin{equation}
		\frac{\D \mathbf{v}(z_0)}{\D z} = \sum_{\substack{j \\ \lambda_j \ne \lambda }}\mathbf{v}_j(z_0) \frac{\mathbf{v}_j(z_0)^* \frac{\D T(z_0)}{\D z} \mathbf{v}(z_0)}{\lambda(z_0) - \lambda_j(z_0)},
		\label{eq:evecpertsvd}
	\end{equation}
	where $\lambda_j(z_0)$ and $\mathbf{v}_j(z_0)$ are the rest of the eigenvalues and eigenvectors of $T(z_0)$. 
This is the formula one would use to compute $\frac{\D \mathbf{v}}{\D z}$ using
the singular value decomposition of the operator $T(z_0) - \lambda(z_0)$.

\begin{remark}
\label{rmk:inv}
	We observe that \eqref{eq:evecpert} holds true if
$\frac{\D T(z_0)}{\D z}$ is replaced with $\frac{\D T(z_0)}{\D z} - \frac{\D \lambda(z_0)}{\D z}$. 
Since $(\frac{\D T(z_0)}{\D z} - \frac{\D \lambda(z_0)}{\D z}) \mathbf{v}(z_0)$ is orthogonal to $\mathbf{v}(z_0)$ (see (\ref{eq:orthopv})), we can compute $\frac{\D \mathbf{v}}{\D z}$ by
	\begin{equation}
		\frac{\D \mathbf{v}(z_0)}{\D z} = - (T(z_0) - \lambda(z_0) + \gamma P(z_0) )^{-1} \bb{\frac{\D T(z_0)}{\D z} - \frac{\D \lambda(z_0)}{\D z}} \mathbf{v}(z_0),
		\label{eq:evecpert2}
	\end{equation}
where $P(z_0)=\mathbf{v}(z_0)\mathbf{v}^*(z_0)$ is the projector and $\gamma$ is any reasonable constant such that the addition $\gamma P(z_0)$ makes the original operator invertible. 
As a result, the pseudoinverse is replaced by the inverse. Although (\ref{eq:evecpert2}), (\ref{eq:evecpert}) and (\ref{eq:evecpertsvd}) are all equivalent,  (\ref{eq:evecpert2}) is more attractive numerically in cases where the discretization of $T(z_0)$ is large.
\end{remark}

\section{Analytic apparatus}
\label{s:apparatus}

\subsection{Analysis of Bloch functions} 
In this section, we introduce results about analyticity and the Fourier analysis of the Bloch functions in Section \ref{sec:bloch}. In what follows, it will be convenient to view
$\{\psi_k^{(j)}\}_{j=1}^\infty$ as defined in \eqref{eq:schro} and $\{u_k^{(j)}\}_{j=1}^\infty$ defined in \eqref{eq:ueq} as functions
of both $x$ and $k$.
We will also view $E_k$ in \eqref{eq:schro} as a function of $k$. 

In this paper, we will restrict ourselves to a single band whose energy $E_k$ does not become degenerate for any $k$ (i.e., a single value
of $j$), so we will omit the superscript.
We thus rewrite (\ref{eq:ueq}), (\ref{eq:ubc}) and (\ref{eq:ubck}) as
\begin{equation}
H(k)u = -(\partial_{xx}  +\I k)^2 u(x,k) + V(x)  u(x,k) = E(k) u(x,k),
\label{eq:udiff}
\end{equation}
and
\begin{equation}
u(x,k)=u(x+y,k), \quad y \in \Lambda,\, k \in \Ibz
\label{eq:per3}
\end{equation}
\begin{equation}
	u(x,k+g)= e^{-\I gx}u(x,k), \quad g \in \Lambda^*, \,x \in \Iuc.
\label{eq:perk3}
\end{equation}

Since $H(k)$ is self-adjoint analytic family of operators and the eigenvalue $E(k)$ considered is isolated, Theorem \ref{thm:katoop} is applicable and we have the following lemma about the analyticity of $E(k)$ and $u$ in $k$.
\begin{lemma}
	The energy $E(k)$ in (\ref{eq:udiff}) is analytic in $k\in \Ibz$. The function $u(x,k)$ can be chosen to be analytic in $k\in \Ibz$ for $x\in \Iuc$.
	\label{lemma:euan}
\end{lemma}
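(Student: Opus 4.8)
The plan is to identify the operator $H(k)$ of (\ref{eq:hamilt})--(\ref{eq:udiff}) as a self-adjoint analytic family in the sense of Kato and then read off both assertions from Theorems~\ref{thm:katoproj}, \ref{thm:katovec} and \ref{thm:katoop}. First I would expand $H(k) = -(\partial_x + \I k)^2 + V = -\partial_{xx} - 2\I k\,\partial_x + k^2 + V$, which acts on the $k$-independent domain $\mathcal{D}_0$ of $H^2$ functions on $\Iuc$ obeying periodic boundary conditions. For a fixed $u\in\mathcal{D}_0$ the map $k\mapsto H(k)u$ is a polynomial in $k$ with values in $L^2(\Iuc)$, hence entire, so $H(k)$ is an analytic family of type (A) on all of $\C$. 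A one-line computation using $(\partial_x)^* = -\partial_x$ on periodic functions and the reality of $V$ gives $H(z)^* = H(\bar z)$, which is exactly the self-adjointness hypothesis of Theorems~\ref{thm:katoproj} and \ref{thm:katoop} and reduces to ordinary self-adjointness for real $k$; the lemma of Section~\ref{sec:bloch} already supplies that each $H(k)$, $k\in\Ibz$, has purely discrete spectrum.

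Next I would show that the band in question stays isolated on a complex neighborhood of $\Ibz$. Dropping the superscript $j$, by hypothesis $E(k)$ is a simple eigenvalue of $H(k)$ for every $k$, and by the periodic-zone extension (\ref{eq:ep}) it is a continuous $\Omega$-periodic function of $k$; hence $k\mapsto\mathrm{dist}\bigl(E(k),\sigma(H(k))\setminus\{E(k)\}\bigr)$ is positive and continuous on the compact circle $\R/\Omega\mathbb{Z}$ and is bounded below by some $\eta>0$. By analyticity of $H(k)$ in $k$ and upper semicontinuity of the spectrum, around each real $k_0$ there is a complex disc on which the circle of radius $\eta/2$ centred at $E(k_0)$ encloses exactly one eigenvalue of $H(k)$, of total algebraic multiplicity one, separating it from the rest of $\sigma(H(k))$. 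The union of these discs is a domain $D\subset\C$ intersecting the real axis on which the hypotheses of Theorem~\ref{thm:katoop} hold for this single-element system of eigenvalues.

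The two assertions then follow. Theorem~\ref{thm:katoop}, via Theorem~\ref{thm:katoproj}, gives that $E(k)$ and the rank-one eigenprojector $P(k)$ are analytic in $k$ for real $k\in D$, in particular on $\Ibz$, which is the first claim. For the eigenfunction I would fix a real $k_0\in\Ibz$ and a normalized eigenfunction $u(\cdot,k_0)$ of $H(k_0)$, and define $u(\cdot,k)$ as the solution of the parallel-transport problem $\partial_k u = Q(k)u$, $Q = (\partial_k P)P - P\,\partial_k P$, with initial data $u(\cdot,k_0)$; Theorem~\ref{thm:katovec} then makes $u(\cdot,k)$ analytic in $k$ and an eigenfunction, $H(k)u(\cdot,k) = E(k)u(\cdot,k)$, and since $u(\cdot,k)\in\mathcal{D}_0\subset C(\Iuc)$ it is a genuine function $u(x,k)$.

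The only technical step is the first one: verifying that $H(k)$ is literally an analytic family of type (A) with a common domain and that the spectral gap $\eta$ persists on an open subset of $\C$ and not merely along the real axis, so Theorem~\ref{thm:katoop} applies as stated; everything after that is a direct citation. I would also note that the $u(x,k)$ constructed here is not claimed to satisfy the $k$-quasiperiodicity (\ref{eq:perk3}) across the Brillouin-zone boundary --- arranging that is precisely the job of the analytic phase correction developed later --- which is why the statement is confined to $k\in\Ibz$.
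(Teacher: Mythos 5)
Your proposal is correct and follows exactly the route the paper takes: the paper proves this lemma in a single sentence by observing that $H(k)$ is a self-adjoint analytic family with the chosen eigenvalue isolated, so that Theorem~\ref{thm:katoop} (and through it Theorems~\ref{thm:katoproj} and \ref{thm:katovec}) applies. Your additional verifications --- the type-(A) structure on a common domain, the identity $H(z)^* = H(\bar z)$, and the persistence of the spectral gap on a complex neighborhood of $\Ibz$ --- are precisely the details the paper leaves implicit, and your closing caveat about non-periodicity across the zone boundary matches Remark~\ref{rmk:nonper}.
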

\begin{remark}
We observe that Theorem \ref{thm:katoop} does not guarantee the function
$E(k)$ to be periodically analytic in $k\in\Ibz$ or $u(x,k)$ to satisfy (\ref{eq:perk3}), which will be
shown to be necessary for exponentially localized Wannier functions. 
If $E(k)$ were to become degenerate for some $k_0 \in \Ibz$, it could go to
another branch so that it does not return to $E(-\Omega/2)$ when it reaches
$E(\Omega/2)$. 
When $E$ is not degenerate, $E(k)$ will be periodically analytic in $k\in\Ibz$. 
For the eigenvector $u(x,k)$, as discussed in Remark {\ref{rmk:phase}},
whenever $u(x,k)$ is analytic in $k\in\Ibz$, so is any $e^{-\I \varphi(k)} u(x,k)$
with a real analytic $\varphi$, which need to satisfy (\ref{eq:perk3}).
\label{rmk:nonper}
\end{remark}

Exploiting \eqref{eq:per3}, we take the Fourier expansion of $u$ with
respect to the $x$ variable to get
\begin{equation}
u(x,k) = \sum_{m=-\infty}^\infty a_m(k) e^{\I m \Omega x},
\label{eq:ufour}
\end{equation}
where 
\begin{equation}
 \frac{1}{a}\int_{-a/2}^{a/2} u(x,k) e^{-\I m \Omega x} \D x= a_m(k), \quad m\in \mathbb{Z}.
 \label{eq:ufourint}
\end{equation}
By Parseval's theorem, the normalization condition for  $u(x,k)$ in (\ref{eq:unorm}) shows that
\begin{equation}
	\sum_{m=-\infty}^{\infty} \abs{a_m(k)}^2 = \frac{1}{a}.
	\label{eq:anorm}
\end{equation}
For simplicity, we assume for the rest of the paper that $u(x,k)$ as a function of $x$ is a periodic function in $
C^{1}(\Iuc;\mathbb{C})$  and
$\partial_x u(x,k)$ has bounded variation
for all $k\in \Ibz$, so that the Fourier coefficients $a_m(k)$ decays as $\abs{m}^{-2}$ for large $m$ \cite[Section 2.6]{zygmund2002trigonometric}. 
Thus, we have a positive constant $D$ independent of $k$, such that
\begin{equation}
|a_m( k)| \leq \frac{D}{|m|^2}, \quad \mbox{as $\abs{m} \rightarrow \infty$}\,.
\label{eq:decay}
\end{equation}
Hence the Fourier series in \eqref{eq:ufour} converges absolutely independent of $k$. By (\ref{eq:ufourint}), if $u(x,k)$ is chosen to be analytic for $k\in\Ibz$ by Lemma \ref{lemma:euan}, we observe that all the Fourier coefficients $a_m(k)$ are analytic for $k\in\Ibz$.
\begin{lemma}
	Suppose $u(x,k)$ is chosen to be analytic in $k\in\Ibz$. Its Fourier coefficients $a_m(k)$ in (\ref{eq:ufour}) are analytic in $k\in\Ibz$.
	\label{lemma:aman}
\end{lemma}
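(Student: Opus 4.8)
The plan is to start from the integral representation \eqref{eq:ufourint},
\[
a_m(k) = \frac{1}{a}\int_{-a/2}^{a/2} u(x,k)\, e^{-\I m \Omega x}\,\D x ,
\]
and to transfer analyticity in $k$ from the integrand to the integral. The one point to be precise about is the meaning of ``$u(x,k)$ is analytic in $k$'': in the construction underlying Theorems \ref{thm:katovec} and \ref{thm:katoop}, this means that the eigenvector, viewed as the map $k \mapsto u(\cdot,k)$ into $L^2(\Iuc)$, extends to an analytic $L^2(\Iuc)$-valued function on some complex neighborhood $D$ of $\Ibz$; in particular, for each fixed $x \in \Iuc$ the scalar map $k \mapsto u(x,k)$ extends analytically to $D$, and by continuity of the $L^2$ norm together with the normalization $\int_{-a/2}^{a/2} |u(\cdot,k)|^2 = 1$ on the real axis (see \eqref{eq:unorm}), the quantity $\| u(\cdot,k) \|_{L^2(\Iuc)}$ is locally bounded on $D$. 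I will show $a_m$ extends analytically to $D$, which is exactly the assertion.

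The cleanest argument is to note that the functional $\phi_m : L^2(\Iuc) \to \C$ given by $\phi_m(f) = \frac{1}{a}\int_{-a/2}^{a/2} f(x)\, e^{-\I m \Omega x}\,\D x$ is bounded and linear, with $|\phi_m(f)| \le a^{-1/2}\| f \|_{L^2(\Iuc)}$ by Cauchy--Schwarz. Since $a_m(k) = \phi_m\bb{u(\cdot,k)}$ is the composition of the bounded linear functional $\phi_m$ with the $L^2(\Iuc)$-valued analytic function $k \mapsto u(\cdot,k)$, it is analytic on $D$: indeed, for $z \in D$ the difference quotients $\big(u(\cdot,z+h) - u(\cdot,z)\big)/h$ converge in $L^2(\Iuc)$ to the vector-valued derivative, and applying the continuous functional $\phi_m$ shows that $a_m'(z) = \phi_m\!\big(\partial_k u(\cdot,z)\big)$ exists. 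In particular $a_m$ is analytic on $\Ibz$. (As a byproduct, $|a_m(k)| \le a^{-1/2}\| u(\cdot,k)\|_{L^2(\Iuc)}$, which on the real axis is $a^{-1/2}$, recovering a weak form of \eqref{eq:anorm}.)

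If one prefers to avoid the vector-valued language, the same conclusion follows from Morera's theorem. First, $a_m$ is continuous on $D$: for $k_n \to k$ in $D$ one has $u(x,k_n) \to u(x,k)$ pointwise in $x$, and on a small closed ball about $k$ the integrands are dominated by an $L^1(\Iuc)$ function uniformly in $n$ (using the local $L^2$ bound on $u(\cdot,k)$ and Cauchy--Schwarz), so dominated convergence gives $a_m(k_n) \to a_m(k)$. Next, for any closed triangle $\Delta \subset D$, Fubini's theorem gives
\[
\oint_{\partial\Delta} a_m(k)\,\D k = \frac{1}{a}\int_{-a/2}^{a/2} \bb{\oint_{\partial\Delta} u(x,k)\,\D k}\, e^{-\I m \Omega x}\,\D x = 0 ,
\]
because $k \mapsto u(x,k)$ is analytic on $D$ and hence $\oint_{\partial\Delta} u(x,k)\,\D k = 0$ by Cauchy's theorem; Morera's theorem then yields analyticity of $a_m$. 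The only mild technical obstacle in this second route is justifying the interchange of integrals, i.e.\ joint measurability of $(x,k) \mapsto u(x,k)$ and the integrability bound above — which is precisely the content of identifying $k \mapsto u(\cdot,k)$ as a genuine $L^2(\Iuc)$-valued analytic function, and this is guaranteed by Theorem \ref{thm:katoop}. Note that neither argument uses the decay estimate \eqref{eq:decay}; it is needed only to make sense of the series \eqref{eq:ufour} pointwise.
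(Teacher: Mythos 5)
Your proposal is correct and follows the same route as the paper, which simply observes that analyticity of $a_m(k)$ follows from the integral representation \eqref{eq:ufourint} once $u(\cdot,k)$ is chosen analytic via Lemma \ref{lemma:euan}. You supply the standard rigorous justification (a bounded linear functional composed with an $L^2$-valued analytic map, or equivalently Morera plus Fubini) for what the paper states as an immediate observation.
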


The following lemma provides a simple relation for the dependence of $k$ and
$m$ in the Fourier coefficients in \eqref{eq:ufour}. It is a well-known relation in the context of periodic zone scheme (see (\ref{eq:ep})-(\ref{eq:ubck})). It is a simple consequence of (\ref{eq:perk3}).
\begin{lemma}
There exists a function $\alpha :\mathbb{R} \to \mathbb{C}$
such that $a_m(k)$ in \eqref{eq:ufour} can be
expressed as
\begin{equation}
a_m(k) = \alpha(k+m \Omega), \quad k \in \Ibz,\, m \in \mathbb{Z}.
\label{eq:afun}
\end{equation}
\label{lemma:afun}
\end{lemma}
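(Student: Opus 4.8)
The plan is to extract the relation from the quasi-periodicity condition \eqref{eq:perk3}, which says $u(x,k+g) = e^{-\I gx}u(x,k)$ for $g \in \Lambda^*$. The key observation is that the Fourier coefficients $a_m(k)$ defined in \eqref{eq:ufourint} are really samples of a single function on $\mathbb{R}$, because shifting $k$ by one reciprocal lattice period $\Omega$ relabels the Fourier modes by one. Concretely, I would take $g = \Omega$ in \eqref{eq:perk3}, so that $u(x,k+\Omega) = e^{-\I \Omega x}u(x,k)$, and then compute the $m$th Fourier coefficient of both sides. On the left we get $a_m(k+\Omega)$; on the right, multiplication by $e^{-\I \Omega x}$ shifts the Fourier series index, giving $a_{m+1}(k)$. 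This yields the recursion
\begin{equation}
a_m(k+\Omega) = a_{m+1}(k), \quad k \in \Ibz,\, m \in \mathbb{Z}.
\label{eq:arec}
\end{equation}

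From \eqref{eq:arec} one reads off immediately that the two-variable quantity $a_m(k)$ depends only on the combination $k + m\Omega$. To make this precise I would define $\alpha : \mathbb{R} \to \mathbb{C}$ by setting $\alpha(\xi) = a_0(\xi)$ for any $\xi \in \mathbb{R}$, where $a_0$ is extended off $\Ibz$ using \eqref{eq:arec} (equivalently, one may define $\alpha(\xi) = a_m(k)$ where $k \in \Ibz$ and $m \in \mathbb{Z}$ are the unique pair with $\xi = k + m\Omega$; this is well-defined since $\Ibz = [-\Omega/2, \Omega/2)$ tiles $\mathbb{R}$ under translation by $\Lambda^* = \Omega\mathbb{Z}$). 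Then for $k \in \Ibz$ and $m \in \mathbb{Z}$ we have $\alpha(k + m\Omega) = a_m(k)$ by construction, which is exactly \eqref{eq:afun}. One can also verify consistency directly: iterating \eqref{eq:arec} gives $a_m(k) = a_0(k + m\Omega)$ for all $m$, which is the claimed formula with $\alpha = a_0$.

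The only real content is the bookkeeping in passing from \eqref{eq:perk3} to \eqref{eq:arec}: one must be careful that the Fourier expansion \eqref{eq:ufour} uses $e^{\I m \Omega x}$, so that multiplying by $e^{-\I \Omega x}$ sends the mode $e^{\I m \Omega x}$ to $e^{\I (m-1)\Omega x}$, hence $\sum_m a_m(k) e^{\I m\Omega x} \cdot e^{-\I \Omega x} = \sum_m a_m(k) e^{\I (m-1)\Omega x} = \sum_m a_{m+1}(k) e^{\I m \Omega x}$, and comparing with $\sum_m a_m(k+\Omega) e^{\I m\Omega x}$ gives \eqref{eq:arec}. The absolute convergence guaranteed by \eqref{eq:decay} justifies the termwise manipulation and the uniqueness of Fourier coefficients. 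I do not anticipate any genuine obstacle here; the statement is essentially a restatement of the periodic-zone-scheme identity \eqref{eq:ubck} at the level of Fourier coefficients, and the lemma is included mainly to fix notation for the function $\alpha$ used later in the localization analysis.
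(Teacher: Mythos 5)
Your argument is correct and is essentially the paper's own proof: both derive the functional relation $a_m(k+n\Omega)=a_{m+n}(k)$ from \eqref{eq:perk3} by comparing Fourier coefficients (you do the single shift $g=\Omega$ and iterate; the paper takes $g=n\Omega$ directly), and then define $\alpha$ by unfolding $a_m$ over the translates of $\Ibz$. Your explicit remark that the tiling of $\mathbb{R}$ by $\Ibz+\Omega\mathbb{Z}$ makes $\alpha$ well-defined is a small but welcome addition that the paper leaves implicit.
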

\begin{proof}
We let $n \in \mathbb{Z}$ and $g = n \Omega$. 
From \eqref{eq:perk3} and \eqref{eq:ufour} that 
\begin{align}
u(x,k+g) &= \sum_{m=-\infty}^\infty a_m(k+g) e^{\I m \Omega x} \label{eq:four3}\\
&= \sum_{m=-\infty}^{\infty} a_m(k) e^{\I (m-n) \Omega x} \\
&= \sum_{l=-\infty}^\infty a_{l+n}(k) e^{\I l \Omega x} \label{eq:four4}.
\end{align}
It follows from \eqref{eq:four3} and \eqref{eq:four4} that $a_m$ satisfies 
the functional relation
\begin{equation}
a_m(k+n \Omega) = a_{m+n}(k), \quad m, n \in \mathbb{Z},\, k \in \mathbb{R}.
\label{eq:funcrel}
\end{equation}
Setting
\begin{equation}
\alpha(k+m \Omega) = a_m(k), \quad k \in \Ibz,\, m \in \mathbb{Z} 
\end{equation}
yields the result.
\end{proof}
Lemma \ref{lemma:afun} shows that we can obtain the function $\alpha$ defined on $\R$ by ``unfolding'' the Fourier coefficients $a_m(k)$ defined on $\Ibz$ for $m\in\mathbb{Z}$ such that $a_m(k)$ takes up the interval $[-\Omega/2 +m\Omega, \Omega/2 +m\Omega)$. We summarize this fact in the following lemma, together with Lemma \ref{lemma:aman} and the decay condition (\ref{eq:decay}) in terms of $\alpha$.

\begin{lemma}
Suppose that $u(\cdot,k) \in C^1(\Iuc;\C)$ is periodic and $\partial_x u(\cdot,k)$ is of bounded variation for $k\in\Ibz$. The Fourier series \eqref{eq:ufour} can
be expressed as
\begin{equation}
u(x,k) = \sum_{m=-\infty}^\infty \alpha(k+m \Omega) e^{\I m \Omega x},
\label{eq:ufour2}
\end{equation}
where $\alpha$ is defined in (\ref{eq:afun}). The function $\alpha(k)$ can be chosen to be analytic on the intervals $[-\Omega/2 + m\Omega,\Omega/2+m\Omega)$ for $m\in\mathbb{Z}$. Furthermore, $\abs{\alpha(k)}$ decays no slower than $|k|^{-2}$ as $|k| \to \infty$.

\label{lemma:ufour}
\end{lemma}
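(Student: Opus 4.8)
The plan is to assemble the three assertions from Lemmas \ref{lemma:euan}, \ref{lemma:afun} and \ref{lemma:aman} together with the Fourier decay bound \eqref{eq:decay}, the only manipulation being the affine shift $k \mapsto k + m\Omega$ throughout. Under the stated smoothness hypotheses on $u(\cdot,k)$ the bound \eqref{eq:decay} is available (this is exactly the classical Fourier-decay estimate cited from \cite{zygmund2002trigonometric}), and by Lemma \ref{lemma:euan} we may and do fix, for each $x$, a representative of $u(x,\cdot)$ that is analytic in $k \in \Ibz$; Lemma \ref{lemma:aman} then makes every coefficient $a_m(\cdot)$ analytic on $\Ibz$.

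First, for the series representation \eqref{eq:ufour2}: Lemma \ref{lemma:afun} gives $a_m(k) = \alpha(k + m\Omega)$ for $k \in \Ibz$ and $m \in \mathbb{Z}$, with $\alpha$ the function of \eqref{eq:afun}; substituting this into \eqref{eq:ufour} yields \eqref{eq:ufour2} directly, and the absolute convergence of the series, uniform in $x$ and $k$, is inherited from \eqref{eq:decay}.

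Second, for the analyticity of $\alpha$ on each half-open interval $J_m := [-\Omega/2 + m\Omega, \Omega/2 + m\Omega)$: these intervals partition $\mathbb{R}$, so \eqref{eq:afun} defines $\alpha$ unambiguously on all of $\mathbb{R}$, and on $J_m$ it reads $\alpha(k) = a_m(k - m\Omega)$ with $k - m\Omega$ ranging over $\Ibz$. Since $a_m$ is analytic on $\Ibz$ (in the sense of Lemma \ref{lemma:euan}, i.e.\ real-analytic on the interior and continuous up to the endpoint) and analyticity is preserved under the translation $k \mapsto k - m\Omega$, the restriction $\alpha|_{J_m}$ is analytic. No analyticity, indeed no continuity, across the junction points $-\Omega/2 + m\Omega$ is asserted, which is consistent with Remark \ref{rmk:nonper}.

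Third, for the decay: given $k \in \mathbb{R}$, let $m$ be the unique integer with $k \in J_m$, so $\alpha(k) = a_m(k - m\Omega)$ with $k - m\Omega \in \Ibz$; for $m \neq 0$, \eqref{eq:decay} gives $|\alpha(k)| \le D/|m|^2$. From $k \in J_m$ we have $|k| \le \Omega(|m| + 1/2)$, hence $|m| \ge |k|/\Omega - 1/2$, so $|\alpha(k)| \le D\big(|k|/\Omega - 1/2\big)^{-2}$, which is $O(|k|^{-2})$ as $|k| \to \infty$, the claimed rate. I do not anticipate any real obstacle here: each part is a short deduction, and the only point deserving care is this last conversion of the index decay $|m|^{-2}$ into the variable decay $|k|^{-2}$, which is routine bookkeeping.
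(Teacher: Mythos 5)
Your proposal is correct and follows exactly the route the paper intends: the paper states this lemma as a direct summary of Lemma \ref{lemma:afun} (the unfolding $a_m(k)=\alpha(k+m\Omega)$), Lemma \ref{lemma:aman} (analyticity of each $a_m$ on $\Ibz$), and the decay bound \eqref{eq:decay}, and offers no further proof. Your only addition is the explicit conversion of the index decay $|m|^{-2}$ into the variable decay $|k|^{-2}$ via $|m|\ge |k|/\Omega - 1/2$, which is the right bookkeeping and is left implicit in the paper.
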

\begin{remark}
Since $u(x,k)$ may not satisfy (\ref{eq:perk3}) (see Remark \ref{rmk:nonper}),
the function $\alpha$ can be a discontinuous function in $\R$. Lemma \ref{lemma:aman} only ensures $\alpha$ is analytic in each interval $[-\Omega/2 +m\Omega, \Omega/2 +m\Omega)$ for $m\in \mathbb{Z}$. 
%%% In other words, the different $m$ pieces of $\alpha$ need not be the analytic continuation of each other. 
%%% Thus not even continuity is guaranteed at any touching point $k=m-\Omega/2$ between two consecutive intervals. 
Moreover, the function $\alpha$ is formed from a very special choice of coefficients $a_m$ since the corresponding $u(x,k)$ is \emph{chosen} to be analytic in $k$; this does not happen if one computes $u(x,k)$ at each point $k$ independently (see Remark \ref{rmk:phase}) so that $u(x,k)$ is an arbitrarily irregular function in $k$. 
\label{rmk:bad}
\end{remark}

\subsection{Properties of Wannier functions}
In this section, we introduce the Fourier transform of the Wannier functions introduced in Section \ref{sec:wannier}, and their variance in terms of the $\alpha$ function defined in (\ref{eq:ufour2}).
Following the discussion above, we suppress the band index and we rewrite (\ref{eq:wannier}) as
\begin{equation}
W_0(x) = \frac{1}{\Omega}\int_{-\Omega/2}^{\Omega/2} \psi_k(x)\,{\rm d}k = \frac{1}{\Omega}\int_{-\Omega/2}^{\Omega/2} e^{\I xk}u_k(x)\,{\rm d}k.
\label{eq:wannier2}
\end{equation}
First, we introduce the Fourier transform of $W_0(x)$. 
By substituting (\ref{eq:ufour2}) into (\ref{eq:wannier2}), we obtain
\begin{equation}
	W_0(x) = 
\frac{1}{\Omega}\int_{-\Omega/2}^{\Omega/2}
\sum_{m=-\infty}^\infty \alpha(k+m \Omega) e^{\I (k + m\Omega ) x} \D k
= \frac{1}{2\pi} \int_{-\infty}^{\infty} \frac{2\pi}{\Omega} \alpha(\xi)e^{\I
\xi x} \D \xi.
	\label{eq:wanift}
\end{equation}
We observe that (\ref{eq:wanift}) only holds formally since $\alpha$ may not even be integrable (see Remark \ref{rmk:bad}). However, whenever $\alpha$ is chosen to be integrable, (\ref{eq:wanift}) shows that $W_0$ in (\ref{eq:wannier2}) is the inverse Fourier transform of $\frac{2\pi}{\Omega}\alpha$ . Furthermore, if we assume that $\alpha$ is chosen such that both $\alpha'$ and $\alpha''$ are integrable, $\abs{W_0(x)}$  will decay no slower than $1/\abs{x}^2$ so that the Fourier inversion formula is applicable. Thus we have the following observation.
\begin{theorem}
Suppose $u$ in Lemma \ref{lemma:ufour} is chosen such that $\alpha$ is integrable.
Then the corresponding Wannier function $W_0$ in (\ref{eq:wannier2}) is well defined and is given by the formula
\begin{equation}
	W_0(x) =\frac{1}{2\pi} \int_{-\infty}^{\infty} \frac{2\pi}{\Omega}\,\alpha(\xi)e^{\I \xi x} \D \xi.
	\label{eq:wanift2}
\end{equation}
By the Riemann--Lebesgue lemma, $\abs{W_0(x)}$ goes to zero as $\abs{x} \to \infty$.
Suppose further that $\alpha'$ and $\alpha''$ are also integrable. Then the Fourier transform of $W_0$, denoted by $\widehat{W}_0$, is given by the formula
\begin{equation}
\widehat{W}_0(\xi) = \int_{-\infty}^\infty W_0(x)e^{-\I \xi x}\,{\rm d}x = \frac{2\pi}{\Omega}  \alpha(\xi).
\label{eq:wannfour}
\end{equation}

\label{thm:wannfour}
\end{theorem}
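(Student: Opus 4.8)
\medskip
\noindent\textbf{Proof proposal.}
The plan is to start from the definition \eqref{eq:wannier2} of $W_0$, insert the Fourier representation $u(x,k)=\sum_{m=-\infty}^{\infty}\alpha(k+m\Omega)e^{\I m\Omega x}$ supplied by Lemma~\ref{lemma:ufour}, and then recognize the resulting expression as an inverse Fourier transform by ``unfolding'' the sum over $m$ into a single integral over the line --- this is exactly the manipulation sketched in \eqref{eq:wanift}, the point being to make it rigorous under the integrability hypothesis. Once \eqref{eq:wanift2} is established, the identity \eqref{eq:wannfour} and the decay statement follow from standard Fourier analysis.

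Concretely, after substituting the series into \eqref{eq:wannier2} and using $e^{\I kx}e^{\I m\Omega x}=e^{\I(k+m\Omega)x}$, the only step requiring justification is the interchange of $\sum_m$ with $\int_{-\Omega/2}^{\Omega/2}$. Since $\alpha\in L^1(\R)$ by hypothesis, Tonelli's theorem gives $\sum_{m}\int_{-\Omega/2}^{\Omega/2}\abs{\alpha(k+m\Omega)}\,\D k=\int_{\R}\abs{\alpha(\xi)}\,\D\xi<\infty$, so Fubini's theorem applies for each fixed $x$; note that only integrability of $\alpha$ is used, not its rate of decay. The linear change of variables $\xi=k+m\Omega$ sends the $m$-th integration interval onto $[-\Omega/2+m\Omega,\Omega/2+m\Omega)$, and since these intervals tile $\R$ as $m$ runs over $\mathbb{Z}$, the sum of integrals collapses to $W_0(x)=\frac{1}{\Omega}\int_{\R}\alpha(\xi)e^{\I\xi x}\,\D\xi=\frac{1}{2\pi}\int_{\R}\frac{2\pi}{\Omega}\alpha(\xi)e^{\I\xi x}\,\D\xi$, which is \eqref{eq:wanift2}; because $\alpha\in L^1(\R)$ this integral converges absolutely, so $W_0$ is well defined, bounded by $\|\alpha\|_{L^1(\R)}/\Omega$, and continuous (by dominated convergence).

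For \eqref{eq:wannfour}, observe that the formula just derived says precisely that $W_0$ is the inverse Fourier transform of $g:=\frac{2\pi}{\Omega}\alpha$. Here $g\in L^1(\R)$ by hypothesis, and in fact $g\in L^2(\R)$ automatically: integrating the normalization identity \eqref{eq:anorm} over $\Ibz$ and unfolding as above yields $\|\alpha\|_{L^2(\R)}^2=\Omega/a<\infty$. Hence Fourier inversion/Plancherel on $L^1(\R)\cap L^2(\R)$ gives $\widehat{W}_0=g=\frac{2\pi}{\Omega}\alpha$, i.e.\ \eqref{eq:wannfour}. Finally, since $W_0$ is the inverse Fourier transform of the $L^1$ function $g$, the Riemann--Lebesgue lemma gives $\abs{W_0(x)}\to 0$ as $\abs{x}\to\infty$.

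I do not expect a deep obstacle here; the step to be most careful with is the bookkeeping in the unfolding argument that converts the family $\{a_m(\cdot)\}_{m\in\mathbb{Z}}$ on $\Ibz$ into the single function $\alpha$ on $\R$ (cf.\ Lemma~\ref{lemma:afun} and Remark~\ref{rmk:bad}), together with the fact that $W_0$ need not itself lie in $L^1(\R)$ --- so the identity \eqref{eq:wannfour} must be read in the $L^2$ (equivalently, tempered-distribution) sense rather than as an absolutely convergent integral $\int_{\R}W_0(x)e^{-\I\xi x}\,\D x$. Everything else reduces to Fubini's theorem, a linear change of variables, and the standard Fourier inversion theorem.
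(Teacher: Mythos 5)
Your proposal is correct and follows essentially the same route as the paper: the paper's justification is exactly the formal unfolding computation in \eqref{eq:wanift}, asserted to be valid once $\alpha$ is integrable, and you simply make that step rigorous via Tonelli/Fubini and a change of variables, then invoke Fourier inversion on $L^1\cap L^2$ and Riemann--Lebesgue. Your added remarks --- that $\alpha\in L^2(\R)$ follows from unfolding \eqref{eq:anorm}, and that \eqref{eq:wannfour} should be read in the $L^2$/distributional sense since $W_0$ need not be in $L^1(\R)$ --- are careful refinements consistent with the paper's intent.
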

Thus, if $\alpha$ is chosen to have poor regularity, the Wannier function $W_0$
is
poorly
localized. To define its variance $\alpha$ needs to be smooth so that
$\abs{W_0(x)}$ goes to zero faster than $1/\abs{x}$ as $\abs{x} \to \infty$. 
As a result,  we also assume that both $\alpha'$ and $\alpha''$ are integrable.
The following lemma contains the formulas for the first and second moments of
$W_0$ in terms of $\alpha$ for defining its variance. They are consequences of
the one-dimensional case of (7) and (8) in \cite{marzari1997maximally}, first
derived in \cite{blount1962formalisms}. 

\begin{lemma}
Suppose that $u$ in Lemma \ref{lemma:ufour} is chosen such that the function $\alpha$ has integrable derivatives $\alpha'$ and $\alpha''$.
We have the following formulas for the first and second moments of $W_0$:
\begin{equation}
\langle x \rangle = \int_{-\infty}^\infty x \overline{W}_0(x) W_0(x)\,{\rm d}x = \frac{2\pi}{\Omega^2}\,\I \int_{-\infty}^\infty \overline{\alpha}(\xi)\alpha'(\xi) \,{\rm d}\xi,
\label{eq:meanid2}
\end{equation}
and
\begin{equation}
\langle x^2 \rangle = \int_{-\infty}^\infty x^2 \overline{W}_0(x) W_0(x)\,{\rm d}x = \frac{2\pi}{\Omega^2} \int_{-\infty}^\infty |\alpha'(\xi)|^2 {\rm d} \xi.
\label{eq:varid2}
\end{equation}
Hence the variance $\langle x^2 \rangle - \langle x \rangle^2$ is given by the formula
\begin{equation}
	\langle x^2 \rangle - \langle x \rangle^2 =  \frac{2\pi}{\Omega^2} \int_{-\infty}^\infty |\alpha'(\xi)|^2 {\rm d} \xi + \frac{4\pi^2}{\Omega^4} \bb{\int_{-\infty}^\infty \overline{\alpha}(\xi)\alpha'(\xi) \,{\rm d}\xi}^2.
	\label{eq:varwan}
\end{equation}
\label{lemma:momentid}
\end{lemma}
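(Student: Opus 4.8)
The plan is to move everything to the Fourier side, where both moment formulas reduce to Plancherel identities. By Theorem~\ref{thm:wannfour} we have $\widehat{W}_0 = \tfrac{2\pi}{\Omega}\alpha$, and \eqref{eq:wanift2} writes $W_0(x) = \tfrac{1}{\Omega}\int_{-\infty}^\infty \alpha(\xi)e^{\I\xi x}\,\D\xi$. The one genuinely analytic step is to compute the Fourier transform of $x\mapsto xW_0(x)$. Using $xe^{\I\xi x} = \tfrac{1}{\I}\partial_\xi e^{\I\xi x}$ and integrating by parts in \eqref{eq:wanift2}, the boundary contribution $[\alpha(\xi)e^{\I\xi x}]_{\xi=-\infty}^{\infty}$ vanishes because the hypothesis $\alpha'\in L^1$ together with the $|\xi|^{-2}$ decay of $\alpha$ from Lemma~\ref{lemma:ufour} forces $\alpha$ to be continuous on all of $\R$ (hence, being piecewise analytic, absolutely continuous) with $\alpha(\xi)\to 0$ as $|\xi|\to\infty$; so the pointwise $\alpha'$ is the genuine derivative, no jump terms appear, and
\begin{equation}
xW_0(x) = \frac{\I}{\Omega}\int_{-\infty}^\infty \alpha'(\xi)e^{\I\xi x}\,\D\xi, \qquad\text{equivalently}\qquad \widehat{xW_0}(\xi) = \frac{2\pi\I}{\Omega}\,\alpha'(\xi).
\end{equation}

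Given this, I would apply Plancherel's theorem in the form $\int f\,\overline{g}\,\D x = \tfrac{1}{2\pi}\int \widehat f\,\overline{\widehat g}\,\D\xi$. For the first moment,
\begin{equation}
\langle x\rangle = \int_{-\infty}^\infty (xW_0)\,\overline{W_0}\,\D x = \frac{1}{2\pi}\int_{-\infty}^\infty \widehat{xW_0}\,\overline{\widehat W_0}\,\D\xi = \frac{1}{2\pi}\cdot\frac{2\pi\I}{\Omega}\cdot\frac{2\pi}{\Omega}\int_{-\infty}^\infty \overline\alpha(\xi)\,\alpha'(\xi)\,\D\xi,
\end{equation}
and simplifying the constant to $\tfrac{2\pi}{\Omega^2}$ gives \eqref{eq:meanid2}; taking $g = xW_0$ in Plancherel yields $\langle x^2\rangle = \|xW_0\|_{L^2}^2 = \tfrac{1}{2\pi}\|\widehat{xW_0}\|_{L^2}^2 = \tfrac{2\pi}{\Omega^2}\int|\alpha'|^2\,\D\xi$, which is \eqref{eq:varid2}. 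I would also record that $\langle x\rangle$ is automatically real: integrating by parts again, $\int_{-\infty}^\infty(\overline\alpha\alpha' + \alpha\overline{\alpha'})\,\D\xi = \int_{-\infty}^\infty (|\alpha|^2)'\,\D\xi = 0$, so $\int\overline\alpha\alpha'$ is purely imaginary and $\I\int\overline\alpha\alpha'$ is real, as the left-hand side must be.

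The variance identity \eqref{eq:varwan} then follows by subtracting the square of \eqref{eq:meanid2} from \eqref{eq:varid2} and using $(\I z)^2 = -z^2$, which flips the sign of the $\bb{\int_{-\infty}^\infty\overline\alpha(\xi)\alpha'(\xi)\,\D\xi}^2$ term, producing the $+\tfrac{4\pi^2}{\Omega^4}$ coefficient. The main obstacle — the part that should be written out carefully rather than waved through — is precisely the justification of the $\widehat{xW_0}$ identity and of the Plancherel steps under the stated hypotheses: one must verify that ``$\alpha$ has an integrable derivative'' really does make $\alpha$ absolutely continuous across the joints $\pm\Omega/2+m\Omega$, and that $W_0,xW_0\in L^2(\R)$. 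Membership $W_0\in L^2$ follows from $\alpha\in L^2$, which is implied by the $|\xi|^{-2}$ decay of Lemma~\ref{lemma:ufour}; the condition $xW_0\in L^2$ is equivalent to $\alpha'\in L^2$, i.e.\ to finiteness of the moments \eqref{eq:mom1}, \eqref{eq:mom2}, and when it fails both sides of \eqref{eq:varid2} are simultaneously $+\infty$, so the identities are to be read as equalities in $[0,\infty]$.
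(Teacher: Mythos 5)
Your derivation is correct, and it is worth noting that the paper itself supplies no proof of Lemma \ref{lemma:momentid}: it simply states the identities as the one-dimensional case of formulas (7)--(8) in Marzari--Vanderbilt, originally due to Blount. Your Plancherel argument is precisely the standard derivation underlying those cited formulas (the correspondence $x \leftrightarrow \I\,\partial_\xi$ applied to $\widehat{W}_0 = \tfrac{2\pi}{\Omega}\alpha$ from Theorem \ref{thm:wannfour}), so in substance you have filled in the proof the paper outsources, and all constants and signs check out, including the sign flip from $(\I z)^2=-z^2$ in \eqref{eq:varwan} and the observation that $\int\overline{\alpha}\alpha'$ is purely imaginary. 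Two small points of rigor that you partly flag and that deserve to be stated cleanly: first, the hypothesis ``$\alpha$ has an integrable derivative in $\R$'' must be read as ``$\alpha$ is absolutely continuous on $\R$ with $\alpha'\in L^1$'' --- a merely a.e.\ pointwise derivative would not exclude jumps at the joints $\pm\Omega/2+m\Omega$, and it is exactly the gauge construction of Theorem \ref{thm:wanexp} that guarantees this continuity; second, for the moment integrals to converge absolutely (rather than the identities holding in $[0,\infty]$) one needs $\alpha'\in L^2$ in addition to $\alpha'\in L^1$, which is automatic in the paper's application where $\alpha$ is analytic and decays rapidly. Neither point is a gap in your argument; they are the correct reading of the hypotheses.
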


\subsection{Perturbation analysis in the Fourier domain}
\label{s:odewannier}
The function $u$ in Lemma \ref{lemma:ufour} can be obtained by applying Kato's construction in Theorem \ref{thm:katovec}. This involves the ODE defined by (\ref{eq:evalpert}) for the eigenvalue problem defined by (\ref{eq:udiff}). In order to carry out this procedure computationally, we transform (\ref{eq:udiff}) into the Fourier domain. In this section, we introduced the Fourier domain version of (\ref{eq:udiff}) and apply (\ref{eq:evalpert}) to the result.

First, we introduce Fourier series of the potential $V$ given by the formula
\begin{equation}
V(x) = \sum_{l=-\infty}^\infty \widehat{V}_l e^{\I l \Omega x},
\label{eq:vfour}
\end{equation}
where the vector with elements $\{\widehat{V}_l \}_{l=-\infty}^\infty$ is in $\ell^2(\mathbb{Z})$ for piecewise continuous $V$ in $\Iuc$.
Moreover, we define a vector $\mathbf{y}$ containing the Fourier coefficients of $u$ in (\ref{eq:ufour}) and (\ref{eq:ufour2}):
\begin{equation}
	\mathbf{y}_m(k) = a_m(k) = \alpha(k+m \Omega), \quad m \in \mathbb{Z},\, k\in\Ibz.
	\label{eq:yvec}
\end{equation}
It follows that $\mathbf{y}\in \ell^2(\mathbb{Z})$ for any $u$ that satisfies the
assumptions of Lemma \ref{lemma:ufour}. 
Moreover, it follows from \eqref{eq:anorm} that
\begin{equation}
	\norm{\mathbf{y}(k)}^2 = \frac{1}{a}, \quad k\in\Ibz.
	\label{eq:ynorm}
\end{equation}

Inserting \eqref{eq:ufour} into \eqref{eq:udiff} yields the infinite linear
system 
\begin{equation}
(\mathbf{D}(k) + \mathbf{V}) \mathbf{y}(k) = E(k)\mathbf{y}(k),
\quad k \in \Ibz,
\label{eq:linsys}
\end{equation}
where for all $m, n \in \mathbb{Z}$ we have
\begin{align}
%%%\mathbf{y}(k)_m &= a(k+m \Omega) \label{eq:yvec} \\
\mathbf{D}_{m,n}(k) &= (k+m \Omega)^2 \delta_{mn},
\label{eq:dmat}\\
%\mathbf{I}_{m,n} &= \delta_{mn}, \\
\mathbf{V}_{m,n} &= \widehat{V}_{m-n}.
%%\mathbf{S}_{m,n}(k) &= -2(k+m \Omega) \delta_{mn},
\label{eq:vmat}
\end{align}
It is convenient to define 
\begin{equation}
	\mathbf{\Theta}(k) = \mathbf{D}(k) + \mathbf{V} - E(k)\mathbf{I},	
	\label{eq:theta}
\end{equation}
where $\mathbf{I}_{m,n}=\delta_{mn}$ is the identity.
Applying (\ref{eq:evalpert}) and (\ref{eq:evecpert}) to (\ref{eq:linsys}), we obtain the following formulas
\begin{equation}
	\frac{\D}{\D k} E(k) = \frac{\mathbf{y}(k)^* \mathbf{S}(k) \mathbf{y}(k)}{\mathbf{y}(k)^*\mathbf{y}(k)},
	\label{eq:evalf}
\end{equation}
\begin{equation}
	\frac{\D}{\D k} \mathbf{y}(k) = -\mathbf{\Theta}^\dagger(k) \mathbf{S}(k) \mathbf{y}(k),\quad k\in\Ibz,
	\label{eq:evecf}
\end{equation}
where
\begin{align}
\mathbf{S}_{m,n}(k) &= 2(k+m \Omega) \delta_{mn}.
\label{eq:smat}
\end{align}
The significance of (\ref{eq:evalf}) and (\ref{eq:evecf}) is that they provide
an infinite system of ODEs for the energy $E$ and the vector containing the
Fourier coefficients of $u$. 
If $E_0$ and $\mathbf{y}_0$ satisfy \eqref{eq:linsys} at $k_0 = -\Omega/2$,
solving \eqref{eq:evalf} and \eqref{eq:evecf}for $k \in \Ibz$ subject to the
initial conditions
\begin{equation}
	E(k_0) = E_0, \quad \mathbf{y}(k_0) = \mathbf{y}_0
	\label{eq:incon}
\end{equation}
where $k_0 = -\Omega/2$, produces $E$ and $u$ that are analytic functions of $k$.

It is worth mentioning that (\ref{eq:evecf}) is sometimes referred to as the
parallel-transport equation due to the relation
\begin{equation}
	\mathbf{y}(k)^* \frac{\D}{\D k} \mathbf{y}(k) =0,
	\label{eq:parallel}
\end{equation}
which holds due to the relation 
\begin{equation}
	\mathbf{\Theta}^\dagger(k)\widetilde{\mathbf{y}}(k) = 0,
	\label{eq:ortho}
\end{equation}
which in turn is a consequence of \eqref{eq:orthopv}.
In other words, $\frac{\D}{\D k} \mathbf{y}(k)$ is always orthogonal to
$\mathbf{y}(k)$. 
The relation (\ref{eq:parallel}) also implies that $\norm{\mathbf{y}(k)}$ is
constant for all $k$.

We observe that the operator $-\mathbf{\Theta}^\dagger(k) \mathbf{S}(k)$ in (\ref{eq:evecf}) has a spectrum that decays as $1/m$ asymptotically for any $k\in\Ibz$, so it is a compact operator on $\ell^2(\mathbb{Z})$. As a result, despite the fact that (\ref{eq:evecf}) is infinite-dimensional, (\ref{eq:evecf}) is very benign both mathematically and numerically. 

\subsection{Analysis of the discontinuities at zone boundaries\label{sec:dis}}
%For constructing exponentially localized Wannier functions, we wish to construct a function $\alpha(k)$ analytic in $\R$. However, the $\alpha(k)$ of any solution $\mathbf{y}(k)$ to (\ref{eq:evecf}) with initial condition $(\ref{eq:incon})$ almost analytic $\R$ except for jumps at points $k=m-\Omega/2$ for $m\in\mathbb{Z}$ (see Remark \ref{rmk:bad}). This section contains results about the jump discontinuities.
In this section, we show that the discontinuities of the Fourier transform $\alpha(k)$ (and its derivatives of all orders) at $k= \Omega/2 + m\Omega$ for $m\in\mathbb{Z}$ are identical.

Suppose that $\mathbf{y}(k)$ for $k\in\Ibz$ is a solution to the ODE (\ref{eq:evecf}) with initial condition $(\ref{eq:incon})$ at $k_0 = -\Omega/2$. The vector $\mathbf{y}(k)$ defines a function $\alpha(k)$ (see (\ref{eq:yvec})) by
\begin{equation}
	\mathbf{y}_m(k) = a_m(k) = \alpha(k+m \Omega), \quad m \in \mathbb{Z},\, k\in\Ibz.
	\label{eq:yvec2}
\end{equation}
We can extend $\mathbf{y}(k)$ to $k=\Omega/2$, so that we have
\begin{equation}
	\mathbf{y}_m(\Omega/2)  = a_m(\Omega/2), 	\quad m \in \mathbb{Z}.
\end{equation}
As discussed in Remark \ref{rmk:bad}, $a_m(\Omega/2)$ is not guaranteed to be equal to $a_{m+1}(-\Omega/2)$. 
Since $a_m(\Omega/2)$ and $a_{m+1}(-\Omega/2)$ may be viewed as limits of
$\alpha(k)$ at the boundary points $k=\Omega/2+m\Omega$ for $m\in\mathbb{Z}$ from
both sides, 
their differences characterize the jump of $\alpha(k)$ at these boundary points. 
The following lemma shows that $a_m(\Omega/2)$ and $a_{m+1}(-\Omega/2)$
only differ by a phase factor independent of $m$.

\begin{lemma}
	For the vector $\mathbf{y}(k)$ with $k\in[-\Omega/2,\Omega/2]$ defined above, we have the following relation
	\begin{equation}
		a_{m}(\Omega/2) = e^{\I \varphi_{\rm zak}} a_{m+1}(-\Omega/2), \quad m \in \mathbb{Z},
		\label{eq:diff}
	\end{equation}
where $\varphi_{\rm zak}$ is a real number independent of $m$.
\label{lem:zak}
\end{lemma}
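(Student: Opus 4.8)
The plan is to compare two eigenvector constructions at the zone boundary and exploit the periodicity relation \eqref{eq:perk3}. Let $\mathbf{y}(k)$ solve the parallel-transport ODE \eqref{eq:evecf} on $[-\Omega/2,\Omega/2]$ with initial condition at $k_0=-\Omega/2$, and write $u(x,k) = \sum_m \mathbf{y}_m(k)e^{\I m\Omega x}$ for the associated eigenfunction of $H(k)$. By construction $u(\cdot,k)$ is an analytic family of eigenvectors of the self-adjoint analytic family $H(k)$ for the isolated nondegenerate band $E(k)$, so both $u(\cdot,-\Omega/2)$ and $u(\cdot,\Omega/2)$ are genuine normalized eigenfunctions of their respective operators. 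Now $H(-\Omega/2)$ and $H(\Omega/2)$ are unitarily equivalent: from \eqref{eq:perk3}, multiplication by $e^{-\I\Omega x}$ conjugates $H(\Omega/2)$ into $H(-\Omega/2)$, and it sends the eigenfunction $u(\cdot,\Omega/2)$ to another eigenfunction of $H(-\Omega/2)$ for the same (nondegenerate) eigenvalue $E(\Omega/2)=E(-\Omega/2)$ — here I use the periodic analyticity of $E$ from Remark \ref{rmk:nonper}, which holds because the band is nondegenerate. Since the eigenspace is one-dimensional, $e^{-\I\Omega x}u(x,\Omega/2)$ must be a unit-modulus multiple of $u(x,-\Omega/2)$: there is a single complex number $e^{\I\phi_{\rm zak}}$ with
\begin{equation}
e^{-\I\Omega x}\,u(x,\Omega/2) = e^{\I\phi_{\rm zak}}\,u(x,-\Omega/2), \quad x\in\Iuc.
\label{eq:zakfn}
\end{equation}
The reality of $\phi_{\rm zak}$ follows from the normalization \eqref{eq:unorm} (equivalently \eqref{eq:ynorm}).

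The second step is to read off \eqref{eq:zakfn} in the Fourier domain. The left-hand side is $\sum_m \mathbf{y}_m(\Omega/2)e^{\I(m-1)\Omega x} = \sum_\ell \mathbf{y}_{\ell+1}(\Omega/2)e^{\I\ell\Omega x}$, so its $\ell$-th Fourier coefficient (in the $e^{\I\ell\Omega x}$ basis) is $a_{\ell+1}(\Omega/2)$; wait — I must be careful with the direction of the shift. Writing it out: the $m$-th coefficient of $e^{-\I\Omega x}u(x,\Omega/2)$ at frequency $m\Omega$ equals the coefficient of $u(x,\Omega/2)$ at frequency $(m+1)\Omega$, i.e.\ $a_{m+1}(\Omega/2)$. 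Comparing with the right-hand side, whose $m$-th coefficient is $e^{\I\phi_{\rm zak}}a_m(-\Omega/2)$, gives $a_{m+1}(\Omega/2) = e^{\I\phi_{\rm zak}}a_m(-\Omega/2)$ for all $m\in\mathbb{Z}$, which is exactly \eqref{eq:diff} after relabeling (replacing $e^{\I\phi_{\rm zak}}$ by its conjugate and swapping the roles of the two boundary values). I would present the bookkeeping carefully so that the statement matches \eqref{eq:diff} as written, $a_{m+1}(-\Omega/2)=e^{\I\phi_{\rm zak}}a_m(\Omega/2)$. The key point delivered by this step is that a \emph{single} phase $e^{\I\phi_{\rm zak}}$, independent of $m$, governs all the jumps, because one scalar unitary relates the full eigenfunctions.

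The main obstacle is making the unitary-equivalence argument fully rigorous at the level of the unbounded operators $H(\pm\Omega/2)$ rather than the matrices $\mathbf{D}(k)+\mathbf{V}$: one must check that multiplication by $e^{-\I\Omega x}$ genuinely intertwines the domains and that \eqref{eq:perk3} — which is the defining periodicity of the periodic-zone extension — indeed encodes this intertwining, rather than circularly assuming the conclusion. I would handle this by working entirely in the Fourier picture: the operator $\mathbf{D}(k)+\mathbf{V}$ on $\ell^2(\mathbb{Z})$ satisfies $R(\mathbf{D}(-\Omega/2)+\mathbf{V})R^{-1} = \mathbf{D}(\Omega/2)+\mathbf{V}$, where $R$ is the shift \eqref{eq:rshift} (since $\mathbf{D}_{m,n}(-\Omega/2) = (m\Omega-\Omega/2)^2\delta_{mn}$ and shifting the index by one produces $(m\Omega+\Omega/2)^2\delta_{mn} = \mathbf{D}_{m,n}(\Omega/2)$, while $\mathbf{V}$ is Toeplitz hence shift-invariant). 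Then $R\,\mathbf{y}(-\Omega/2)$ and $\mathbf{y}(\Omega/2)$ are both unit-norm eigenvectors of $\mathbf{D}(\Omega/2)+\mathbf{V}$ for the nondegenerate eigenvalue $E(\Omega/2)$, so they differ by a unit-modulus scalar $e^{\I\phi_{\rm zak}}$; unwinding the shift gives $\mathbf{y}_{m+1}(-\Omega/2)=$ — that is, $a_{m+1}(-\Omega/2)=e^{\I\phi_{\rm zak}}a_m(\Omega/2)$, completing the proof. This also cleanly sidesteps the domain issues, since on $\ell^2(\mathbb{Z})$ everything is a genuine bounded relation between the relevant objects (or at worst involves the discrete eigenvalue equation, which is unambiguous).
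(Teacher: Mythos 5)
Your final ``Fourier picture'' argument is exactly the paper's proof: use the shift $R$ and the Toeplitz structure of $\mathbf{V}$ to reindex the eigenvalue equation at $k=\Omega/2$, so that $R(\mathbf{y}(\Omega/2))$ and $\mathbf{y}(-\Omega/2)$ are both eigenvectors of $\mathbf{D}(-\Omega/2)+\mathbf{V}$ for the nondegenerate eigenvalue $E(\Omega/2)=E(-\Omega/2)$, whence they differ by a single unimodular constant $e^{\I\phi_{\rm zak}}$ independent of $m$. The only blemish is the direction of the conjugation (the correct identity is $R(\mathbf{D}(\Omega/2)+\mathbf{V})R^{-1}=\mathbf{D}(-\Omega/2)+\mathbf{V}$, so one compares $\mathbf{y}(-\Omega/2)$ with $R(\mathbf{y}(\Omega/2))$ rather than $R(\mathbf{y}(-\Omega/2))$ with $\mathbf{y}(\Omega/2)$), a bookkeeping slip you explicitly flagged and which does not affect the substance.
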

\begin{proof}
	We observe that both vectors $\mathbf{y}(-\Omega/2)$ and $\mathbf{y}(\Omega/2)$ satisfy (\ref{eq:linsys}):
\begin{equation}
(\mathbf{D}(-\Omega/2) + \mathbf{V}) \mathbf{y}(-\Omega/2) = E\mathbf{y}(-\Omega/2),
\label{eq:linsys2}
\end{equation}
\begin{equation}
(\mathbf{D}(\Omega/2) + \mathbf{V}) \mathbf{y}(\Omega/2) = E\mathbf{y}(\Omega/2),
\label{eq:linsys3}
\end{equation}
where $E(-\Omega/2) = E(\Omega/2)=E$ (see Remark \ref{rmk:nonper}). 
Exploiting the structure in (\ref{eq:dmat}) and (\ref{eq:vmat}), we can reindex
(\ref{eq:linsys3}) to yield 
\begin{equation}
(\mathbf{D}(-\Omega/2) + \mathbf{V}) R(\mathbf{y}(\Omega/2)) = E R(\mathbf{y}(\Omega/2)),
\label{eq:linsys4}
\end{equation}
where $R$ is the right shift operator defined in (\ref{eq:rshift}). Since the eigenvalue $E$ is not degenerate, $\mathbf{y}(-\Omega/2)$ and $R(\mathbf{y}(\Omega/2))$ can only differ by a constant of unity modulus $e^{\I \varphi_{\rm zak}}$ for some real $\varphi_{\rm zak}$. The component form of $e^{\I \varphi_{\rm zak}}\mathbf{y}(-\Omega/2) =  R(\mathbf{y}(\Omega/2))$ yields (\ref{eq:diff}).
\end{proof}
A similar argument applied to (\ref{eq:evecf}) shows that 
\begin{equation}
	\frac{\D}{\D k} \mathbf{y}(-\Omega/2) = -\mathbf{\Theta}^\dagger(-\Omega/2) \mathbf{S}(-\Omega/2) \mathbf{y}(-\Omega/2),
	\label{eq:evecf2}
\end{equation}
\begin{align}
	\frac{\D}{\D k} R(\mathbf{y}(\Omega/2) )&= -\mathbf{\Theta}^\dagger(-\Omega/2) \mathbf{S}(-\Omega/2) R( \mathbf{y}(\Omega/2)) 
	\label{eq:evecf3} \\
&= 
-e^{\I \varphi_{\rm zak}} \mathbf{\Theta}^\dagger(-\Omega/2) \mathbf{S}(-\Omega/2)\mathbf{y}(-\Omega/2) \\
&= e^{\I \varphi_{\rm zak}}\frac{\D}{\D k} \mathbf{y}(-\Omega/2).
\end{align}
This shows that the jump between $a'_m(\Omega/2)$ and $a'_{m+1}(-\Omega/2)$ is also a factor of $e^{\I \varphi_{\rm zak}}$. 
Repeated differentiation of (\ref{eq:evecf2}) and (\ref{eq:evecf3}) shows the jump between the $n$th derivatives $a^{(n)}_m(\Omega/2)$ and $a^{(n)}_{m+1}(-\Omega/2)$ are all identical. 
%%% In other words, once the $\varphi_{\rm zak}$ in (\ref{eq:diff}) is fixed, the
%%% jumps in \emph{all} derivatives of $a_m(\Omega/2)$ and $a_{m+1}(-\Omega/2)$ are
%%% fixed, so are the jumps of $\alpha$ at the boundaries.
The quantity $\varphi_{\rm zak}$ was introduced by Zak in \cite{zak1989berry}
in a different context and is known as the Zak phase.
It was shown in \cite{zak1989berry} that the Zak phase depends only on the Schr\"odinger equation \eqref{eq:schro} and determines the center of the Wannier function. 

\subsection{Gauge transformation}
In this section, we introduce gauge transformation and show that a simple class of gauge choices fixes all discontinuities of the Fourier transform $\alpha$ discussed in Section\,\ref{sec:dis}, thus obtaining exponentially localized Wannier functions.

Given $\varphi \in C^1(\Ibz; \R)$ (not necessarily periodic), we can modify any
solution $\mathbf{y}$ of $(\ref{eq:evecf})$ via 
\begin{equation}
\widetilde{\mathbf{y}}(k) = e^{-\I \varphi(k)} \mathbf{y}(k) \,,\quad \mbox{ for $k\in\Ibz$ },
\label{eq:yvectilde}
\end{equation}
where $\widetilde{\mathbf{y}}(k)$ still satisfies (\ref{eq:linsys}) with the same $E(k)$ as $\mathbf{y}(k)$ and the normalization (\ref{eq:ynorm}).
Such a transform turns $\alpha$ defined by $\mathbf{y}$ into
$\widetilde{\alpha}$ given by the formula
\begin{equation}
	\widetilde{\alpha}(k+m \Omega)= e^{-\I \varphi(k)}\mathbf{y}_m(k) = e^{-\I \varphi(k)}a_m(k) , \quad m \in \mathbb{Z},\, k\in\Ibz.
	\label{eq:afuntilde}
\end{equation}
This process is known as gauge transformation in the physics literature. 
From \eqref{eq:evecf}, it follows that $\widetilde{\mathbf{y}}$ satisfies
\begin{equation}
\widetilde{\mathbf{y}}'(k)  = -\mathbf{\Theta}^\dagger(k) \mathbf{S}(k)
\widetilde{\mathbf{y}}(k) - \I \varphi'(k) \widetilde{\mathbf{y}}(k), \quad k\in\Ibz.
\label{eq:tildeode}
\end{equation}
In the physics literature, $\varphi'$ is known as the Berry connection and the phase $\varphi$ is known as the Berry phase. 
The Berry connection is usually denoted by $A$ and defined as $A(k)=\I\braket{u_k}{u'_k}$ in Bra-Ket notation for some function $u_k$ satisfying (\ref{eq:ueq}). 
%%%In this paper, the vector $\widetilde{\mathbf{y}}$  is the Fourier version of $\ket{u_k}$ and 
In this paper, this corresponds to the identity: 
\begin{equation}
	\I\widetilde{\mathbf{y}}^*(k)\widetilde{\mathbf{y}}'(k) = \varphi'(k)/a,
	\label{eq:berrycon}
\end{equation}
where we used (\ref{eq:ortho}) and the factor $1/a$ comes from the normalization (\ref{eq:ynorm}).

We choose $\varphi$ such that $\varphi(k) = -\varphi(-k)$ and $\varphi'(k) = \varphi_{\rm zak}/\Omega$, where $\varphi_{\rm zak}$ is defined in (\ref{eq:diff}):
\begin{equation}
	\varphi(k) = \int_{-\Omega/2}^{k} \varphi'(k') \D k' - \frac{\Omega}{2} = \frac{\varphi_{\rm zak}}{\Omega} k, \quad k\in\Ibz.
	\label{eq:minzak}
\end{equation}
By Lemma \ref{lem:zak}, this choice of $\varphi(k)$ fixes the jump so that 
\begin{equation}
	 \widetilde{\mathbf{y}}(-\Omega/2) = R(\widetilde{\mathbf{y}}(\Omega/2)).
	 \label{eq:tildejump}
\end{equation}
Furthermore, analogously to (\ref{eq:evecf2}) and (\ref{eq:evecf3}), we have 
\begin{equation}
\widetilde{\mathbf{y}}'(-\Omega/2)  = -\mathbf{\Theta}^\dagger(-\Omega/2) \mathbf{S}(-\Omega/2)
\widetilde{\mathbf{y}}(-\Omega/2) - \I \varphi'(-\Omega/2) \widetilde{\mathbf{y}}(-\Omega/2), 
\label{eq:tildeode2}
\end{equation}
\begin{equation}
R(\widetilde{\mathbf{y}}'(\Omega/2))  = -\mathbf{\Theta}^\dagger(-\Omega/2) \mathbf{S}(-\Omega/2)
R(\widetilde{\mathbf{y}}(\Omega/2)) - \I \varphi'(\Omega/2) R(\widetilde{\mathbf{y}}(\Omega/2)). 
\label{eq:tildeode3}
\end{equation}
Since $\varphi'$ in (\ref{eq:minzak}) is a constant, we have $\varphi'(-\Omega/2) = \varphi'(\Omega/2)$. Together with (\ref{eq:tildejump}), we conclude that 
\begin{equation}
	\widetilde{\mathbf{y}}'(-\Omega/2) = R(\widetilde{\mathbf{y}}'(\Omega/2)).
\end{equation}
By repeated differentiation of (\ref{eq:tildeode2}) and (\ref{eq:tildeode3}), we conclude that 
\begin{equation}
	\widetilde{\mathbf{y}}^{(n)}(-\Omega/2) = R(\widetilde{\mathbf{y}}^{(n)}(\Omega/2)),
	\label{eq:tildejumpn}
\end{equation}
holds for all derivatives of order $n\ge 1$. From (\ref{eq:tildejump}), (\ref{eq:tildejumpn}) and Lemma \ref{lemma:ufour}, we conclude that the transformed function $\widetilde{\alpha}$ in (\ref{eq:afuntilde}) corresponding to $\widetilde{\mathbf{y}}(k)$ is analytic on $\R$. Thus, its corresponding Wannier function defined by (\ref{eq:wanift2}) will be exponentially localized.

We also observe that the choice $\varphi'(k) = \varphi_{\rm zak}/\Omega$ is not the only possibility that yields exponential localization. 
The above will still hold if $\varphi'$ is chosen to be a real, periodically analytic function on $\Ibz$ with the zeroth Fourier coefficient $(\varphi_{\rm zak}+2\pi n)/\Omega$ for any $n\in\mathbb{Z}$. 
Such functions can be written as 
\begin{equation}
	\varphi'(k) = \frac{\varphi_{\rm zak}+2\pi n}{\Omega} + \sum_{\substack{m=-\infty \\ m \ne 0}}^{\infty} c_m e^{\I m a k}, 
	\label{eq:zakgen}
\end{equation} 
where the Fourier coefficients $c_m$ decay exponentially as $\abs{m} \to
\infty$, and the Berry phase $\varphi$ is then given by 
\begin{equation}
	\varphi(k) = \int_{-\Omega/2}^{k} \varphi'(s)\,\D s - \frac{\Omega}{2},\quad k\in\Ibz.
	\label{eq:zakgenp}
\end{equation} 
We summarize this fact in the following theorem. It will be the key tool for constructing exponentially localized Wannier functions.
\begin{theorem}
	Suppose that $\mathbf{y}$  on $\Ibz$ is a solution to (\ref{eq:evecf}) with initial conditions (\ref{eq:incon}) and
$\mathbf{y}$ defines a function $\alpha$ on $\R$ by (\ref{eq:yvec}).
Suppose further that $\varphi'$ is chosen as (\ref{eq:zakgen}), by which we apply the gauge transform (\ref{eq:yvectilde}) to define a new function $\widetilde{\mathbf{y}}$. Then the function $\widetilde{\alpha}$ defined by $\widetilde{\mathbf{y}}$ in (\ref{eq:afuntilde}) is analytic on $\R$. As a result, the Wannier function as the Fourier transform of $\widetilde{\alpha}$ in (\ref{eq:wanift2}) is exponentially localized.
\label{thm:wanexp}
\end{theorem}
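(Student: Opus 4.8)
The plan is to reuse, for the general gauge $\phi'$ of \eqref{eq:zakgen}, the boundary-matching argument already carried out for the constant choice in \eqref{eq:tildejump}--\eqref{eq:tildejumpn}. By Lemma~\ref{lemma:ufour} and \eqref{eq:afuntilde}, the unfolded function $\widetilde\alpha$ is automatically analytic on the interior of each zone interval $[-\Omega/2+m\Omega,\,\Omega/2+m\Omega)$, $m\in\mathbb Z$, so the only places $\widetilde\alpha$ can fail to be analytic on all of $\R$ are the zone boundaries $\Omega/2+m\Omega$. Since $\mathbf y$ solves the ODE \eqref{eq:evecf} whose right-hand side is analytic in $k$ (Remark~\ref{rmk:compact}, Lemma~\ref{lemma:euan}), $\mathbf y$ extends analytically to a complex neighbourhood of $[-\Omega/2,\Omega/2]$; combined with the shift relation \eqref{eq:yvec}/\eqref{eq:funcrel}, analyticity of $\widetilde\alpha$ across \emph{every} boundary is equivalent to the single family of identities $\widetilde{\mathbf y}^{(n)}(-\Omega/2)=R\!\left(\widetilde{\mathbf y}^{(n)}(\Omega/2)\right)$ for all $n\ge 0$. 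These are exactly \eqref{eq:tildejump} and \eqref{eq:tildejumpn}, whose consequence --- analyticity of $\widetilde\alpha$ on $\R$ --- is recorded in the discussion just above the theorem; so it suffices to re-establish them for the present $\phi'$.

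The re-establishment rests on two facts. First, Lemma~\ref{lem:zak} together with the repeated differentiation of \eqref{eq:evecf2}--\eqref{eq:evecf3} gives $\mathbf y^{(n)}(-\Omega/2)=e^{\I\phi_{\rm zak}}R\!\left(\mathbf y^{(n)}(\Omega/2)\right)$ for every $n\ge 0$: all derivatives of the undressed solution carry one and the same Zak-phase jump. Second, $\phi'$ in \eqref{eq:zakgen} is a real, $\Omega$-periodic, real-analytic function (its Fourier coefficients decay exponentially), so every derivative $\phi^{(j)}$ with $j\ge1$ takes equal values at $-\Omega/2$ and $\Omega/2$, while integrating its zeroth Fourier mode over one period gives $\phi(\Omega/2)-\phi(-\Omega/2)\equiv\phi_{\rm zak}$ (mod $2\pi$). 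Now $\widetilde{\mathbf y}=e^{-\I\phi}\mathbf y$ satisfies \eqref{eq:tildeode}; differentiating \eqref{eq:tildeode} and its $R$-shifted counterpart repeatedly and evaluating at $k=\pm\Omega/2$ --- with the equalities $\phi^{(j)}(-\Omega/2)=\phi^{(j)}(\Omega/2)$, $j\ge1$, playing the role that the triviality $\phi'=\mathrm{const}$ played before --- yields $\widetilde{\mathbf y}^{(n)}(-\Omega/2)=R\!\left(\widetilde{\mathbf y}^{(n)}(\Omega/2)\right)$ for all $n$, the $e^{\pm\I\phi_{\rm zak}}$ factors cancelling precisely because of the mod-$2\pi$ relation. (Equivalently, expand $\widetilde{\mathbf y}^{(n)}$ by the Leibniz rule, write $\left(e^{-\I\phi}\right)^{(j)}=Q_j\!\left(\phi^{(1)},\dots,\phi^{(j)}\right)e^{-\I\phi}$ with $Q_j$ a universal polynomial, and note the prefactors agree at the two endpoints.)

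To conclude, I would upgrade derivative-matching to genuine analyticity: near $\Omega/2+m\Omega$ the left restriction of $\widetilde\alpha$ is $\xi\mapsto e^{-\I\phi(\xi-m\Omega)}\mathbf y_m(\xi-m\Omega)$ and the right restriction is $\xi\mapsto e^{-\I\phi(\xi-(m+1)\Omega)}\mathbf y_{m+1}(\xi-(m+1)\Omega)$, each analytic on a full complex disc about $\Omega/2+m\Omega$ (composition of the analytic extensions of $\mathbf y$ and $\phi$ with a translation); two analytic functions on a disc agreeing to infinite order at its centre coincide, so $\widetilde\alpha$ is analytic on a neighbourhood of $\R$, of \emph{uniform} width by the $\Omega$-periodicity of the band data and of $\phi'$. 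Since $|\widetilde\alpha|=|\alpha|$ pointwise, Lemma~\ref{lemma:ufour} gives $\widetilde\alpha\in L^1(\R)\cap L^2(\R)$ with integrable derivative, so Theorem~\ref{thm:wannfour} (and Lemma~\ref{lemma:momentid}) identify $\widetilde W_0$ as the inverse Fourier transform of $\tfrac{2\pi}{\Omega}\widetilde\alpha$; analyticity of $\widetilde\alpha$ in a fixed strip together with its decay then gives $|\widetilde W_0(x)|\le C\,e^{-D|x|}$ by the standard Paley--Wiener contour shift.

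The main obstacle is the first fact above --- making the ``repeated differentiation of \eqref{eq:evecf2}--\eqref{eq:evecf3}'' rigorous at the closed endpoints $k=\pm\Omega/2$. One must know $\mathbf y$ is genuinely $C^\infty$ (indeed analytic) up to $k=\pm\Omega/2$, which needs care since $\Ibz$ is half-open, and one must know that the right shift $R$ intertwines $\mathbf{\Theta}^\dagger(-\Omega/2)\mathbf{S}(-\Omega/2)$ with its counterpart at $\Omega/2$ in the precise sense used to pass from \eqref{eq:evecf2} to \eqref{eq:evecf3}, as well as being compatible with the extra term $-\I\phi'\widetilde{\mathbf y}$ of \eqref{eq:tildeode} --- which is exactly where $\phi^{(j)}(-\Omega/2)=\phi^{(j)}(\Omega/2)$ is consumed. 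A secondary, purely technical, point is the uniformity of the analyticity strip as $|\mathrm{Re}\,\xi|\to\infty$, needed for \emph{exponential} (not merely rapid) decay; this follows from the translation structure of $\widetilde\alpha$ inherited from \eqref{eq:funcrel}.
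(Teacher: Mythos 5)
Your proposal follows essentially the same route as the paper: reduce analyticity of $\widetilde\alpha$ to the endpoint identities $\widetilde{\mathbf y}^{(n)}(-\Omega/2)=R(\widetilde{\mathbf y}^{(n)}(\Omega/2))$, obtain them from Lemma~\ref{lem:zak} and repeated differentiation of the gauge-transformed ODE using that $\phi'$ in \eqref{eq:zakgen} is periodic with zeroth Fourier mode $(\phi_{\rm zak}+2\pi n)/\Omega$, and then pass to exponential localization of $W_0$. Your added detail on upgrading derivative-matching to analyticity via the identity theorem and on the Paley--Wiener step makes explicit what the paper leaves implicit, but it is the same argument.
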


\subsection{Gauge choice and Wannier localization\label{sec:opt}}
In the previous section, we have seen that many choices of the Berry phase $\varphi$ (see Theorem \ref{thm:wanexp} and (\ref{eq:zakgen})) are able to produce exponentially localized Wannier functions. In this section, we derive the optimal choice in terms of minimizing the variance defined by (\ref{eq:varwan}). It is a standard measure of the localization of Wannier functions \cite{marzari1997maximally}. First, we modify the formulas in Lemma \ref{lemma:momentid} to account for the inclusion of $\varphi$. The following is a result of (\ref{eq:afuntilde}), (\ref{eq:tildeode}), (\ref{eq:berrycon}) with $\varphi'$ given in (\ref{eq:zakgen}) and the formulas in Lemma \ref{lemma:momentid}. 

\begin{lemma}
Suppose that $\widetilde{\mathbf{y}}$ is defined in Theorem \ref{thm:wanexp}. We have the following formulas
\begin{equation}
\langle x \rangle = \frac{a}{2\pi}\sb{\varphi \left( \frac{\Omega}{2} \right) - \varphi \left( -\frac{\Omega}{2} \right)} = \frac{\varphi_{\rm zak}}{2\pi}a + na,
\label{eq:tildemoment}
\end{equation}
\begin{eqnarray}
\langle x^2 \rangle &=& \frac{a^2}{2\pi}\int_{-\Omega/2}^{\Omega/2} \norm{ \mathbf{y}'(k) }^2 \,{\rm d}k + \frac{a}{2\pi}\int_{-\Omega/2}^{\Omega/2} \varphi'(k)^2\,{\rm d}k \\
&=& \frac{a^2}{2\pi}\int_{-\Omega/2}^{\Omega/2} \norm{ \mathbf{y}'(k) }^2 \,{\rm d}k + \bb{\frac{\varphi_{\rm zak}}{2\pi}a + na}^2 + \sum_{\substack{m=-\infty \\ m \ne 0}}^{\infty} \abs{c_m}^2,
\label{eq:tildemoment2}
\end{eqnarray}
and
\begin{equation}
	\langle x^2 \rangle  - \langle x \rangle ^2 = \frac{a^2}{2\pi}\int_{-\Omega/2}^{\Omega/2} \norm{ \mathbf{y}'(k) }^2 \,{\rm d}k + \sum_{\substack{m=-\infty \\ m \ne 0}}^{\infty} \abs{c_m}^2.
	\label{eq:varopt}
\end{equation}
\label{lemma:varmin}
\end{lemma}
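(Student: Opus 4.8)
The plan is to apply the moment formulas of Lemma~\ref{lemma:momentid} to the gauge-transformed function $\widetilde{\alpha}$, reduce the resulting integrals over $\R$ to integrals over $\Ibz$ by decomposing $\R$ into the cells $[-\Omega/2+m\Omega,\Omega/2+m\Omega)$, $m\in\mathbb{Z}$, and then use the parallel-transport property of the underlying $\mathbf{y}(k)$ to eliminate the cross terms.

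First I would treat $\langle x\rangle$. By Lemma~\ref{lemma:momentid} applied to $\widetilde{\alpha}$ we have $\langle x\rangle = \tfrac{2\pi}{\Omega^2}\,\I\int_{-\infty}^{\infty}\overline{\widetilde{\alpha}}(\xi)\widetilde{\alpha}'(\xi)\,\D\xi$. Splitting the integral into the cells and substituting $\xi = k+m\Omega$ on the $m$th cell, relation \eqref{eq:afuntilde} gives $\widetilde{\alpha}(\xi) = \widetilde{\mathbf{y}}_m(k)$ and (by the chain rule) $\widetilde{\alpha}'(\xi) = \widetilde{\mathbf{y}}_m'(k)$; summing over $m$ collapses the sum of cell integrals into $\int_{-\Omega/2}^{\Omega/2}\widetilde{\mathbf{y}}(k)^*\widetilde{\mathbf{y}}'(k)\,\D k$. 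By the Berry-connection identity \eqref{eq:berrycon} this equals $-\tfrac{\I}{a}\int_{-\Omega/2}^{\Omega/2}\phi'(k)\,\D k = -\tfrac{\I}{a}\big[\phi(\Omega/2)-\phi(-\Omega/2)\big]$, so using $\Omega=2\pi/a$ (hence $\tfrac{2\pi}{\Omega^2 a}=\tfrac{a}{2\pi}$) we obtain $\langle x\rangle = \tfrac{a}{2\pi}\big[\phi(\Omega/2)-\phi(-\Omega/2)\big]$. Plugging in \eqref{eq:zakgenp} and using that each oscillatory mode in \eqref{eq:zakgen} integrates to zero over $\Ibz$ yields $\phi(\Omega/2)-\phi(-\Omega/2)=\phi_{\rm zak}+2\pi n$, which gives the claimed value of $\langle x\rangle$.

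Next I would treat $\langle x^2\rangle$ the same way: Lemma~\ref{lemma:momentid} gives $\langle x^2\rangle = \tfrac{2\pi}{\Omega^2}\int_{-\infty}^{\infty}|\widetilde{\alpha}'(\xi)|^2\,\D\xi$, which reduces to $\tfrac{2\pi}{\Omega^2}\int_{-\Omega/2}^{\Omega/2}\norm{\widetilde{\mathbf{y}}'(k)}^2\,\D k$ by the same cell decomposition. From the ODE \eqref{eq:tildeode} together with \eqref{eq:evecf} one has $\widetilde{\mathbf{y}}'(k) = e^{-\I\phi(k)}\big(\mathbf{y}'(k) - \I\phi'(k)\mathbf{y}(k)\big)$; expanding the norm, discarding the cross term (which vanishes because $\mathbf{y}(k)^*\mathbf{y}'(k)=0$ by the parallel-transport relation \eqref{eq:parallel}), and using $\norm{\mathbf{y}(k)}^2 = 1/a$ from \eqref{eq:ynorm} gives $\norm{\widetilde{\mathbf{y}}'(k)}^2 = \norm{\mathbf{y}'(k)}^2 + \phi'(k)^2/a$. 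Inserting this and using $\tfrac{2\pi}{\Omega^2}=\tfrac{a^2}{2\pi}$ and $\tfrac{2\pi}{\Omega^2 a}=\tfrac{a}{2\pi}$ produces the first displayed form of $\langle x^2\rangle$. The second form follows from Parseval's identity for the Fourier series \eqref{eq:zakgen} of $\phi'$ on $\Ibz$: the constant mode contributes $\tfrac{a}{2\pi}\cdot\tfrac{(\phi_{\rm zak}+2\pi n)^2}{\Omega} = \big(\tfrac{\phi_{\rm zak}}{2\pi}a+na\big)^2$ and the remaining modes contribute $\tfrac{a\Omega}{2\pi}\sum_{m\ne 0}|c_m|^2 = \sum_{m\ne 0}|c_m|^2$. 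Finally, subtracting $\langle x\rangle^2 = \big(\tfrac{\phi_{\rm zak}}{2\pi}a+na\big)^2$ cancels the square term and leaves \eqref{eq:varopt}.

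The routine parts are the $\Omega$-versus-$a$ bookkeeping and the Parseval computation; the one point that needs care is the interchange of the sum over $m$ with integration and the cell decomposition of the $\R$-integral, which is legitimate under the integrability and regularity hypotheses inherited from Lemmas~\ref{lemma:ufour} and~\ref{lemma:momentid} ($\widetilde{\alpha}$ and $\widetilde{\alpha}'$ integrable, $\widetilde{\mathbf{y}}(k)\in\ell^2(\mathbb{Z})$ with well-behaved derivative). The conceptual heart, and the reason the variance \eqref{eq:varopt} splits into a gauge-independent term plus $\sum_{m\ne 0}|c_m|^2$, is the vanishing of the cross terms via $\mathbf{y}(k)^*\mathbf{y}'(k)=0$; that is the step I would be most careful with.
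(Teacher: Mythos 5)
Your proposal is correct and follows essentially the same route the paper indicates for this lemma: apply the moment formulas of Lemma~\ref{lemma:momentid} to $\widetilde{\alpha}$, reduce the $\R$-integrals to $\Ibz$ via the cell decomposition and \eqref{eq:afuntilde}, use the Berry-connection identity \eqref{eq:berrycon} and the vanishing cross term from \eqref{eq:parallel}, and finish with Parseval applied to \eqref{eq:zakgen}. The constant bookkeeping ($2\pi/\Omega^2=a^2/2\pi$, $a\Omega/2\pi=1$) all checks out.
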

As a consequence, we see that the optimally localized Wannier function for a single band is constructed based on the gauge choice
\begin{equation}
	\varphi'(k) = \frac{\varphi_{\rm zak}+2\pi n}{\Omega}, 
	\label{eq:zakopt}
\end{equation} 
i.e., the choice of (\ref{eq:zakgen}) with all of the coefficients $c_m$ set to zero. 
For such a choice, we observe that the variance in (\ref{eq:varopt}) is gauge-independent since $\mathbf{y}$ is constructed by solving (\ref{eq:evecf}), which only depends on the original Schr\"odinger equation (\ref{eq:schro}). 

We observe that the integer $n$ in (\ref{eq:zakopt}) only results in a shift of the center (\ref{eq:tildemoment}). Since every lattice point has an identical copy of the Wannier function (see (\ref{eq:wancopy})), any nonzero choice of $n$ amounts to shifting the index of the Wannier functions. 
%%% Moreover, any nonzero choice $n$ add in unnecessary high frequency components that are bad for computational purposes. 
As a result, we simply set $n=0$, which is the choice in
(\ref{eq:minzak}).  
We thus have the following corollary.
\begin{corollary}
	An optimal choice of gauge in term of the variance of Wannier
functions is given by 
\begin{equation}
	\varphi(k) = \int_{-\Omega/2}^{k} \varphi'(s)\, \D s - \frac{\Omega}{2} = \frac{\varphi_{\rm zak}}{\Omega} k,\quad k\in\Ibz.
	\label{eq:minzak2}
\end{equation}
The center $\langle x \rangle$ and the variance $\langle x^2 \rangle  - \langle x \rangle ^2$ of the optimally localized Wannier function are given by the formulas
\begin{equation}
	\langle x \rangle = \frac{\varphi_{\rm zak}}{2\pi}a.
\end{equation}
\begin{equation}
\langle x^2 \rangle  - \langle x \rangle ^2 = \frac{a^2}{2\pi}\int_{-\Omega/2}^{\Omega/2} \norm{ \mathbf{y}'(k) }^2 \,{\rm d}k.
\end{equation} 
Both quantities only depend on the given Schr\"odinger equation.
\label{cor:opt}
\end{corollary}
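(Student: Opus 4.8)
The plan is to obtain this as a direct consequence of Lemma \ref{lemma:varmin} together with the lattice-translation symmetry (\ref{eq:wancopy}). First I would invoke the variance formula (\ref{eq:varopt}): among all admissible gauges $\phi'$ of the form (\ref{eq:zakgen}), the variance equals $\frac{a^2}{2\pi}\int_{-\Omega/2}^{\Omega/2}\norm{\mathbf{y}'(k)}^2\,\D k + \sum_{m\neq 0}\abs{c_m}^2$. The first term is \emph{fixed}: it depends only on the solution $\mathbf{y}$ of the parallel-transport ODE (\ref{eq:evecf}) with initial data (\ref{eq:incon}), not on the gauge. The second term is nonnegative and vanishes if and only if $c_m = 0$ for every $m \neq 0$. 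Hence the variance is minimized precisely by the gauges with constant Berry connection, $\phi'(k) = (\phi_{\mathrm{zak}}+2\pi n)/\Omega$, $n\in\mathbb{Z}$.

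Next I would remove the integer parameter $n$. By (\ref{eq:tildemoment}) the value of $n$ affects only the center, $\langle x\rangle = \frac{\phi_{\mathrm{zak}}}{2\pi}a + na$, and leaves the variance unchanged; by (\ref{eq:wancopy}), displacing the center by the lattice vector $na$ merely reindexes the family $\{W_n\}$, so we may take $n=0$ without loss. With $n=0$ the minimizing gauge has $\phi'(k) = \phi_{\mathrm{zak}}/\Omega$, and the normalization $\phi(-k)=-\phi(k)$ used throughout Section \ref{sec:dis} (i.e. $\phi(0)=0$) fixes the constant of integration, giving $\phi(k) = \frac{\phi_{\mathrm{zak}}}{\Omega}k$, which is (\ref{eq:minzak2}). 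Substituting this $\phi$ into (\ref{eq:tildemoment}) yields $\langle x\rangle = \frac{a}{2\pi}\big[\phi(\Omega/2) - \phi(-\Omega/2)\big] = \frac{\phi_{\mathrm{zak}}}{2\pi}a$, and setting all $c_m = 0$ in (\ref{eq:varopt}) yields $\langle x^2\rangle - \langle x\rangle^2 = \frac{a^2}{2\pi}\int_{-\Omega/2}^{\Omega/2}\norm{\mathbf{y}'(k)}^2\,\D k$.

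It then remains to check the dependence on the Schr\"odinger equation alone. The Zak phase $\phi_{\mathrm{zak}}$ is determined by (\ref{eq:schro}) (Lemma \ref{lem:zak}, Remark \ref{remark:zak}, \cite{zak1989berry}), and $\mathbf{y}$ is produced by solving (\ref{eq:evecf}), whose coefficient operators $\mathbf{D}(k)$, $\mathbf{V}$, $E(k)$ are all determined by $V$; the only freedom in the initial condition (\ref{eq:incon}) is a global constant phase on $\mathbf{y}_0$, which by linearity of (\ref{eq:evecf}) multiplies $\mathbf{y}'(k)$ by the same constant and so leaves $\norm{\mathbf{y}'(k)}$ unchanged. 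Thus both $\langle x\rangle$ and $\langle x^2\rangle - \langle x\rangle^2$ are functionals of the Schr\"odinger equation only.

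The step that I expect to need the most care is the \emph{global} optimality claim --- justifying that it is enough to minimize over the restricted family (\ref{eq:zakgen}), rather than over all $\phi\in C^1(\Ibz;\R)$. For this one argues that any gauge giving a Wannier function of finite variance must have $\widetilde\alpha$ continuous at every zone boundary $\Omega/2 + m\Omega$ (otherwise $\widetilde\alpha'$ carries a Dirac mass and $\langle x^2\rangle$ in (\ref{eq:varid2}) diverges); by the jump analysis of Section \ref{sec:dis} this continuity is equivalent to $\phi(\Omega/2)-\phi(-\Omega/2) = \phi_{\mathrm{zak}} + 2\pi n$, i.e. the zeroth Fourier mode of $\phi'$ equals $(\phi_{\mathrm{zak}}+2\pi n)/\Omega$. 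Writing the remaining modes as $c_m$, the Parseval computation behind (\ref{eq:varopt}) applies verbatim and shows these modes contribute only the nonnegative term $\sum_{m\neq 0}\abs{c_m}^2$; everything else in the corollary is a direct substitution.
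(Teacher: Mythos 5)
Your proof is correct and follows essentially the same route as the paper: Corollary \ref{cor:opt} is read off from Lemma \ref{lemma:varmin} by noting that the only gauge-dependent contribution to the variance (\ref{eq:varopt}) is the nonnegative term $\sum_{m\neq 0}\abs{c_m}^2$, setting all $c_m=0$, discarding the integer $n$ via the lattice-translation symmetry (\ref{eq:wancopy}), and substituting into (\ref{eq:tildemoment}). Your closing paragraph --- showing that any finite-variance gauge must already lie in the family (\ref{eq:zakgen}), so that minimizing over that family is genuinely global --- is a worthwhile tightening of a point the paper only treats implicitly (cf.\ Remark \ref{remark:laplace}).
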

\noindent Theorem \ref{thm:wanexp} and Corollary \ref{cor:opt} complete the construction of optimally localized Wannier functions.

We note that the condition in (\ref{eq:zakopt}) can also be obtained by applying the calculus of variations on the variance in (\ref{eq:varopt}).
Since the center in (\ref{eq:varopt}) must equal  $\frac{\varphi_{\rm zak}}{2\pi}a + na$ for (\ref{eq:tildejump}) to hold, it can be excluded from the objective function.
Therefore, to minimize the variance it is sufficient for $\varphi$ to minimize $\langle x^2 \rangle$, where the only gauge-dependent term is
\begin{equation}
\int_{-\Omega/2}^{\Omega/2} \varphi'(k)^2\,{\rm d}k.
\label{eq:laplace}
\end{equation}
The optimal solution is characterized by the solution to the Laplace equation
\begin{equation}
	\varphi'' = 0,\quad \mbox{subject to $\varphi(\Omega/2) - \varphi(-\Omega/2) = \frac{\varphi_{\rm zak}}{2\pi}a + na$.}
\end{equation}
The derivative of the solution is precisely (\ref{eq:zakopt}). 
Results of
this form are known in physics literature \cite{blount1962formalisms,marzari1997maximally}. 
%but they are usually stated in a nonconstructive form and are only meant to be theoretical tools characterizing the minimum.

\subsection{Realty of Wannier functions}
While we have completed the construction of optimally localized Wannier
functions, there is one remaining degree of freedom,
the phase choice of the vector $\mathbf{y}_0$ in (\ref{eq:incon}); 
we could replace $\mathbf{y}_0$ by $e^{-\I \varphi_0}\mathbf{y}_0$ 
for any real $\varphi_0$ and all above formulas would still hold, including
the expressions in Corollary \ref{cor:opt}. 
However, the phase factor $e^{-\I \varphi_0}$ does change the resulting $W_0$. 
In this section, we prove that the constructed Wannier function $W_0$ can
always be chosen to be real if $\mathbf{y}_0$ in (\ref{eq:incon})
satisfies a simple condition.

First, we observe that, since the potential $V$ is real, we have 
\begin{equation}
	\widehat{V}_m = \overline{\widehat{V}}_{-m}, \quad m\in\mathbb{Z}.
	\label{eq:vsym}
\end{equation}
Moreover, the symmetry in (\ref{eq:trsym}) implies that the energy $E$ satisfies 
\begin{equation}
	E(k) = E(-k), \quad k\in\Ibz.
	\label{eq:esym}
\end{equation}
In the following, we show that these two symmetries allow us to run the construction backward in $k$ to show the realty of $W_0$; 
it is a consequence of the uniqueness of infinite-dimensional, linear
initial value problems with compact coefficients (cf.\ e.g., \cite{deimling2006ordinary}). 

\begin{theorem}
\label{thm:real}
	Suppose the vector $\widetilde{\mathbf{y}}(k)$ for $k\in\Ibz$ and its corresponding function $\widetilde{\alpha}$ are constructed based on Theorem \ref{thm:wanexp} with $\varphi$ given in Corollary \ref{cor:opt}. Suppose further that the vector $\mathbf{y}(k)=\cb{a_m(k)}_{m=-\infty}^{\infty}$ is related to $\widetilde{\mathbf{y}}(k)$ by (\ref{eq:yvectilde}). Then it is always possible to choose the elements in the initial vector $\mathbf{y}(-\Omega/2)=\mathbf{y}_0$ in (\ref{eq:incon}) to satisfy
	\begin{equation}
			a_m(-\Omega/2) = \overline{a}_{-m}(\Omega/2), \quad m\in\mathbb{Z},
		\label{eq:realcon}
	\end{equation}
so that the function $\widetilde{\alpha}$ satisfies
	\begin{equation}
		\widetilde{\alpha}(k+m \Omega) %= a_m(k) = \overline{a}_{-m}(-k) 
		=\overline{\widetilde{\alpha}}(-k-m \Omega),\quad k\in\Ibz,\,m\in\mathbb{Z}.
	\end{equation}
As a result, the Wannier function $W_0$ corresponding to $\widetilde{\alpha}$ can always be chosen to be real.
\end{theorem}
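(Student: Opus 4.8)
The plan is to exploit the time-reversal symmetry (\ref{eq:trsym}) through the antilinear ``flip'' operator $J$ on $\ell^2(\mathbb{Z})$ defined by $(J\mathbf{x})_m = \overline{x_{-m}}$, which is an isometric involution ($J^2 = \mathbf{I}$ and $\|J\mathbf{x}\| = \|\mathbf{x}\|$), and to show that $k \mapsto J\mathbf{y}(-k)$ solves the same parallel-transport equation (\ref{eq:evecf}) as $\mathbf{y}$. First I would record how $J$ conjugates the relevant operators: a direct index computation using $\widehat{V}_m = \overline{\widehat{V}_{-m}}$ from (\ref{eq:vsym}) and $E(k) = E(-k)$ from (\ref{eq:esym}) (together with reality of $E(k)$) gives $J\mathbf{D}(-k)J = \mathbf{D}(k)$, $J\mathbf{V}J = \mathbf{V}$, hence $J\mathbf{\Theta}(-k)J = \mathbf{\Theta}(k)$, and $J\mathbf{S}(-k)J = -\mathbf{S}(k)$ (the sign coming from $k + m\Omega \mapsto -(k + m\Omega)$ in (\ref{eq:smat})). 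Since conjugation by an antilinear isometric involution satisfies $(JXJ)^* = JX^*J$, it commutes with taking Moore--Penrose pseudoinverses, so $J\mathbf{\Theta}^\dagger(-k)J = \mathbf{\Theta}^\dagger(k)$. Setting $\mathbf{z}(k) = J\mathbf{y}(-k)$, a short computation using the chain rule and these conjugation identities (with the sign in $J\mathbf{S}(-k)J = -\mathbf{S}(k)$ precisely what is needed) shows $\mathbf{z}'(k) = -\mathbf{\Theta}^\dagger(k)\mathbf{S}(k)\mathbf{z}(k)$, i.e.\ $\mathbf{z}$ solves (\ref{eq:evecf}).

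Next I would fix the one remaining free phase of $\mathbf{y}_0$. Applying $J$ to (\ref{eq:linsys3}) and using $J\mathbf{D}(\Omega/2)J = \mathbf{D}(-\Omega/2)$, $J\mathbf{V}J = \mathbf{V}$, and reality of $E$, shows that $J\mathbf{y}(\Omega/2)$ is an eigenvector of $\mathbf{D}(-\Omega/2) + \mathbf{V}$ with the same eigenvalue $E = E(\pm\Omega/2)$ as $\mathbf{y}(-\Omega/2)$. Since the band is nondegenerate and $\|J\mathbf{y}(\Omega/2)\| = \|\mathbf{y}(\Omega/2)\| = \|\mathbf{y}(-\Omega/2)\|$ (the norm is constant along (\ref{eq:evecf}) by (\ref{eq:parallel}) and (\ref{eq:ynorm})), we must have $J\mathbf{y}(\Omega/2) = e^{\I\theta}\mathbf{y}(-\Omega/2)$ for some real $\theta$, depending on the so-far-arbitrary choice of $\mathbf{y}_0$. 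Replacing $\mathbf{y}_0$ by $e^{-\I\theta/2}\mathbf{y}_0$ scales the whole solution $\mathbf{y}(k)$ by $e^{-\I\theta/2}$ and hence $\mathbf{z}(k) = J\mathbf{y}(-k)$ by $e^{+\I\theta/2}$, turning the constant $e^{\I\theta}$ into $1$; this is precisely condition (\ref{eq:realcon}), i.e.\ $\mathbf{y}(-\Omega/2) = J\mathbf{y}(\Omega/2) = \mathbf{z}(-\Omega/2)$. This rescaling leaves $\phi_{\rm zak}$ unchanged, so the gauge $\phi$ of Corollary \ref{cor:opt} is unaffected.

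With this initial condition, $\mathbf{y}$ and $\mathbf{z}$ are both solutions of (\ref{eq:evecf}) agreeing at $k = -\Omega/2$, so by uniqueness of the linear initial-value problem with continuous, compact coefficients (Remark \ref{rmk:compact}; cf.\ \cite{deimling2006ordinary}) they coincide on $\Ibz$. Componentwise this reads $a_m(k) = \overline{a_{-m}(-k)}$ for all $m \in \mathbb{Z}$ and $k \in \Ibz$, which by Lemma \ref{lemma:afun} is the statement $\alpha(\xi) = \overline{\alpha(-\xi)}$ for all $\xi \in \R$. To pass to $\widetilde{\alpha}$, note that $\phi(k) = (\phi_{\rm zak}/\Omega)k$ from Corollary \ref{cor:opt} is odd in $k$, so (\ref{eq:afuntilde}) gives $\widetilde{\alpha}(k + m\Omega) = e^{-\I\phi(k)}\alpha(k + m\Omega)$ and $\overline{\widetilde{\alpha}(-k - m\Omega)} = e^{-\I\phi(k)}\overline{\alpha(-k - m\Omega)} = e^{-\I\phi(k)}\alpha(k + m\Omega)$, i.e.\ $\widetilde{\alpha}(\xi) = \overline{\widetilde{\alpha}(-\xi)}$. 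By Theorem \ref{thm:wannfour}, $\widehat{W}_0 = (2\pi/\Omega)\widetilde{\alpha}$ then satisfies the Hermitian symmetry $\widehat{W}_0(\xi) = \overline{\widehat{W}_0(-\xi)}$, which is equivalent to $W_0$ being real; this completes the argument.

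The only nonroutine step is the phase-fixing in the second paragraph: condition (\ref{eq:realcon}) is a \emph{global} relation tying the two band-edge values $\mathbf{y}(\pm\Omega/2)$ together, and one must first establish, via time-reversal symmetry plus nondegeneracy of the band, that these two vectors can differ by nothing more than a unimodular constant before the ``halve the phase'' trick applies; constancy of $\|\mathbf{y}(k)\|$ along the flow is what pins the modulus of that constant to $1$. The index bookkeeping for $J$ acting on $\mathbf{\Theta}$, $\mathbf{S}$ and $\mathbf{\Theta}^\dagger$ is mechanical, but one should be careful that $J$ is antilinear, that conjugation by it commutes with the adjoint (hence with the pseudoinverse), and that the sign it introduces on $\mathbf{S}$ is exactly what makes $\mathbf{z}(k) = J\mathbf{y}(-k)$ satisfy (\ref{eq:evecf}) rather than a sign-flipped variant.
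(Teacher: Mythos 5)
Your proposal is correct and follows essentially the same route as the paper's proof: you express the paper's ``substitute $m\to-m$, $k\to-k$, and conjugate'' transformation as an explicit antilinear flip operator $J$, verify that $J\mathbf{y}(-k)$ solves the same parallel-transport ODE via the conjugation identities (including the sign on $\mathbf{S}$ and the pseudoinverse), fix the remaining phase by the same half-angle rescaling justified by nondegeneracy and norm constancy, and conclude by uniqueness of the initial value problem and oddness of $\phi$. The only differences are presentational (operator notation versus componentwise substitution, plus the helpful observation that the rescaling leaves $\phi_{\rm zak}$ unchanged), so no further comparison is needed.
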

\begin{proof}
Suppose that $\mathbf{y}(k)=\cb{a_{m}(k)}_{m=-\infty}^{\infty}$  satisfies
the ODE (\ref{eq:evecf}).
We first show that there is a choice of $\mathbf{y}_0$ such that if
$\mathbf{y}$ also satisfies \eqref{eq:incon}, then
\begin{equation}
	a_m(-\Omega/2) = \overline{a}_{-m}(\Omega/2), \quad m\in\mathbb{Z}.
 	\label{eq:realcon2}
\end{equation}
For now, we let $\mathbf{y}_0$ be an arbitary solution to \eqref{eq:linsys}
with $k=-\Omega/2$.

By (\ref{eq:dmat}), (\ref{eq:vmat}) and (\ref{eq:vsym}), we
observe that the operator $\mathbf{D}(k)+\mathbf{V}$ in (\ref{eq:linsys}) is
invariant under the operation consisting of the substitutions $m\rightarrow
-m$, $n\rightarrow -n$, $k\rightarrow -k$ and complex conjugation. 
Applying such a transformation to \eqref{eq:linsys}, together with
(\ref{eq:esym}), gives
\begin{equation}
	(\mathbf{D}(k)+\mathbf{V})\mathbf{x}(k) = E(k)\mathbf{x}(k),
\end{equation}
where $\mathbf{x}(k)=\cb{\overline{a}_{-m}(-k)}_{m=-\infty}^{\infty}$. 
Since $E(k)$ is non-degenerate and $\| \mathbf{y}(k) \|$ is constant, there
must exist a real $\theta$ such that
\begin{equation}
 a_m(-\Omega/2) = e^{\I \theta} \overline{a}_{-m}(\Omega/2), \quad m \in \mathbb{Z}.
\end{equation}
Replacing the original $\mathbf{y}_0$ with $e^{-\I \varphi_0}  \mathbf{y}_0$ with $\varphi_0=\theta/2$ produces $\mathbf{y}$ that satisfies \eqref{eq:realcon2}. 

%%% \begin{equation}
%%% \mathbf{y}(-\Omega/2)=\mathbf{x}(-\Omega/2).
%%% \end{equation}
%%% This choice corresponds to the following relation
%%% \begin{equation}
%%% 			a_m(-\Omega/2) = \overline{a}_{-m}(\Omega/2), \quad m\in\mathbb{Z},
%%% 				\label{eq:realcon2}
%%% \end{equation}
%%% which is the condition in (\ref{eq:realcon}). 

Next, suppose that $\mathbf{y}(k)=\cb{a_{m}(k)}_{m=-\infty}^{\infty}$ is a
solution to the ODE \eqref{eq:evecf} satisfying \eqref{eq:realcon2}.
Applying the aforementioned transformation to \eqref{eq:evecf} yields 
\begin{equation}
			\frac{\D}{\D k} \mathbf{x}(k) = -\mathbf{\Theta}^\dagger(k) \mathbf{S}(k) \mathbf{x}(k)\,,
\end{equation}
where again we have $\mathbf{x}(k)=\cb{\overline{a}_{-m}(-k)}_{m=-\infty}^{\infty}$. 
Noting that \eqref{eq:realcon2} is equivalent to the condition that $\mathbf{y}(-\Omega/2) = \mathbf{x}(-\Omega/2)$, it follows from the uniqueness theorem for initial value problems that $\mathbf{x}(k) = \mathbf{y}(k)$ for $k\in [-\Omega/2,\Omega/2]$. 
Consequently, we have
\begin{equation}
	a_m (k) = \overline{a}_{-m}(-k),\quad k\in[-\Omega/2,\Omega/2],\,m\in\mathbb{Z}.
\end{equation}
Together with (\ref{eq:afuntilde}) and $\varphi(k)=-\varphi(-k)$ in (\ref{eq:minzak2}), the above relation yields the desired result:
\begin{equation}
	\widetilde{\alpha}(k+m \Omega)=  e^{-\I \varphi(k)}a_m(k) = e^{\I \varphi(-k)}\overline{a}_{-m}(-k) = \overline{\widetilde{\alpha}}(-k-m \Omega)\quad m\in\mathbb{Z}.
\end{equation}

\end{proof}
We observe that the condition in (\ref{eq:realcon}) is applicable after solving (\ref{eq:evecf}) for determining $\mathbf{y}$ on $[-\Omega/2,\Omega/2]$; we can obtain a constant phase factor $e^{-\I \varphi_0}$ to enforce (\ref{eq:realcon}) so the the resulting $W_0$ is real.

\section{Constructing optimally localized Wannier functions \label{s:procedure}}
In this section, we use the apparatus in Section \ref{s:apparatus} to construct optimally localized Wannier function.
%The numerical details will be in the next section. 
Suppose we are given a real, piecewise continuous potential $V$ with period given by the lattice constant $a$. 
Let $\Omega$ be defined by \eqref{eq:om22} and $\Ibz$ by \eqref{eq:ibz22}. 
We also select the band for which the Wannier function will be
constructed. 

The construction can be divided into three stages. 
First, we compute the Fourier series of $V$ as in (\ref{eq:vfour}) and transform (\ref{eq:udiff}) into its Fourier version in (\ref{eq:linsys}). Then, we compute the energy $E_0$ and the eigenvector $\mathbf{y}_0$ at $k=-\Omega/2$ for the selected band. 
Second, we solve the ODE defined by (\ref{eq:evalf}) and  (\ref{eq:evecf}) in $[-\Omega/2,\Omega/2]$ with initial conditions given by $E_0$ and $\mathbf{y}_0$ at $k=-\Omega/2$. From $\mathbf{y}(-\Omega/2)$ and $\mathbf{y}(\Omega/2)$, we determine a constant $\varphi_0$ to enforce  (\ref{eq:realcon}) and 
the Zak phase $\varphi_{\rm zak}$ in (\ref{eq:diff}). This is followed by carrying out a gauge transform to obtain the new $\widetilde{\mathbf{y}}$ in (\ref{eq:yvectilde}) with the Berry phase $\varphi$ given by (\ref{eq:minzak}). 
Using the definition of $\widetilde{\mathbf{y}}$ in (\ref{eq:afuntilde}), we obtain a function $\widetilde{\alpha}$ in $\R$ that defines the Fourier transform of the Wannier function by (\ref{eq:wannfour}). 
Finally, we Fourier transform $\widetilde{\alpha}$ to obtain the Wannier function $W_0$ by (\ref{eq:wanift2}). 
This Wannier function is guaranteed to be optimal by Theorem
\ref{thm:wanexp} and Corollary \ref{cor:opt}, and real by Theorem
\ref{thm:real}.

We now describe a high-order numerical procedure for the construction described above. Sections \ref{sec:comfour} and \ref{sec:cominit} contain the computation for the first stage, the 
Sections \ref{s:ivp} and \ref{sec:comphase} for the second stage, and Sections \ref{sec:comwanfour} and \ref{sec:comwan} for the last stage.

%\begin{remark}
%There are various schemes resembling the one above.
%For example, \cite{vanderbilt2018berry} and Section IV.C in \cite{marzari1997maximally} describe a conceptually similar algorithm. 
%Instead of solving the ODE in the second stage above, $u_k$, for varying values of $k$ on a grid in $\Ibz$, are computed independently and  
%an approximate SVD-based approach 
%%%% (see Remark \ref{rmk:twist} for details) 
%is used to fix the broken phase. 
%For a large number of grid points, the SVD-based approach should yield a decent approximation to the ODE solution above (see Remark\,\ref{rmk:twist} for further discussion). 
%However, the authors in \cite{marzari1997maximally,vanderbilt2018berry} did not prove the solution's optimality as stated in Corollary \ref{cor:opt}. 
%Another similar approach can be found in Section III in \cite{cances2017robust}, where an approximate parallel transport scheme
%was used to construct $u_k$. Different matching conditions are imposed to obtain assignments of $u_k$ that are only continuous in $k$, producing Wannier functions with slow decay, which are then used as inputs for the general local optimization approach in \cite{marzari1997maximally}.
%\label{rmk:othermethods}
%\end{remark}
 
\subsection{Numerical procedure}

The input to our method are the potential function $V$ and its period $a$ (so $\Omega = \frac{2\pi}{a})$, $M \in \mathbb{N}$ specifying
the number of degrees of freedom in the real space, 
$K \in \mathbb{N}$ specifying the number of degrees of freedom in the momentum space, and 
 $\ell \in \mathbb{N}$ indicating the band index.
%Our numerical procedure proceeds in six steps.
%First, a Fourier series for the potential is computed. 
%An initial condition is then obtained via an eigenvalue problem.
%Next, an initial value problem is solved. 
%After this, a correction to the phase is computed.
%The Fourier transform of the Wannier function is then obtained.  
%Finally, the Wannier function is followed. 

\subsubsection{Computing the Fourier interpolant of the potential\label{sec:comfour}}
We first let $\{t_j\}_{j=1}^{2M+1}$ be given by
\begin{equation}
t_j = -\frac{\pi}{\Omega} + \frac{2 \pi (j-1)}{\Omega (2M+1)}, \quad j = 1,2,\hdots,2M+1.
\label{eq:tpoints}
\end{equation}
We then evaluate $V(t_j)$ for all $j = 1,2,\hdots,2M+1$. 
The $2M+1$ Fourier coefficients $\{\widehat{V}_j\}_{j=-M}^M$ in \eqref{eq:vfour}
are then computed via a discrete Fourier transform.
We then form the potential matrix $\mathbf{V} \in \mathbb{C}^{(2M+1) \times
(2M+1)}$ with entries given by 
\begin{equation}
\mathbf{V}_{m,n} = 
\begin{cases} \widehat{V}_{m-n} & |m-n| \leq M \\
0 & |m-n| > M \end{cases}, \quad m,n = 1,2,\hdots,2M+1.
\label{eq:potmatentries}
\end{equation}

\begin{remark}
Computing $\{\widehat{V}_j\}_{j=-M}^M$ can be
done in $O(M \log M)$ floating point operations using the fast Fourier transform.
Forming the matrix $\mathbf{V}$ takes $O(M^2)$ floating point operations.
\end{remark}

\subsubsection{Obtaining an initial condition \label{sec:cominit}}
We define $\mathbf{D}(k) \in \mathbb{C}^{(2M+1) \times (2M+1)}$ to have
entries 
\begin{equation}
\mathbf{D}(k)_{m,n} =  (k+(m-M-1) \Omega)^2 \delta_{mn}, \quad m,n=1,2,\hdots,2M+1.
\label{eq:dmatentries}
\end{equation}
We then solve the eigenvalue problem
\begin{equation}
\left(\mathbf{D} \left( -\frac{\Omega}{2} \right) + \mathbf{V} \right) \mathbf{y}^{(0)} = E_0 \mathbf{y}^{(0)},
\label{eq:eig}
\end{equation}
where $E_0$ is the $\ell$th smallest eigenvalue and $\mathbf{y}^{(0)}$ is the
corresponding eigenvector, normalized as in (\ref{eq:ynorm}) such that 
\begin{equation}
\| \mathbf{y}^{(0)} \| = 1/a.
\label{eq:normalize}
\end{equation}
This can be done using the standard QR algorithm (cf.\ e.g.,\ \cite[Section
8.3]{golub2013matrix}).
This procedure would then take $O(M^3)$ floating
point operations.
However, it should be noted that the coefficient matrix in \eqref{eq:eig} is
highly structured.
In particular, it consists of a diagonal matrix plus a Toeplitz matrix,
which can likely be exploited to accelerate this step.

\begin{remark}
The convergence of the discretization depends on the smoothness of $V$. 
If $V \in C^m(\mathbb{R};\mathbb{R})$, then the error will decrease at a rate of
$O(M^{-m})$. 
In particular, if $V \in C^\infty(\mathbb{R};\mathbb{R})$, then the error will
decrease superalgebraically (i.e.,\ at a rate faster than $O(M^{-m})$ for
any $m \in \mathbb{N}$). 
We refer the reader to \cite[Chapter 2]{boyd2001chebyshev} for more details.
\label{rmk:vsmooth}
\end{remark}

\subsubsection{Solving the initial value problem}
\label{s:ivp}
We now define $\mathbf{S} \in \mathbb{C}^{(2M+1) \times (2M+1)}$ to have
entries
\begin{equation}
\mathbf{S}(k)_{m,n} = 2(k+(m-M-1) \Omega)) \delta_{mn}, \quad m,n=1,2,\hdots,2M+1.
\label{eq:smatdisc}
\end{equation}
We let $\mathbf{I}$ denote the $(2M+1) \times (2M+1)$ identity matrix.
We now solve the initial value problem 
\begin{align}
 \mathbf{y}'(k) &=
-(\mathbf{D}(k) - E(k) \mathbf{I} + \mathbf{V})^\dagger \mathbf{S}(k) \mathbf{y}(k) \label{eq:discivp} \\
E'(k) &= \frac{\mathbf{y}(k)^* \mathbf{S}(k) \mathbf{y}(k)}{\mathbf{y}(k)^*\mathbf{y}(k)}
\label{eq:discenergyode} \\
\mathbf{y} \left( -\frac{\Omega}{2} \right) &= \mathbf{y}^{(0)} \\
E \left( - \frac{\Omega}{2} \right) &= E_0,
\end{align}
where $E_0$ and $\mathbf{y}^{(0)}$ were obtained in \eqref{eq:eig}.
We would like to solve this initial value problem from $-\Omega/2$ to
$\Omega/2$.
We note that by \eqref{eq:normalize}
$\|\mathbf{y}(k)\| = 1/a$ for $k \in [-\Omega/2, \Omega/2]$. 

We use a time-stepping scheme to determine the values
\begin{equation}
\mathbf{y}^{(j)} \approx \mathbf{y}(k_j), \quad j =1,2,\hdots,K,
\end{equation}
where
\begin{equation}
k_j = -\frac{\Omega}{2} + \frac{\Omega (j-1)}{K-1}, \quad j = 1,2,\hdots,K.
\label{eq:kj}
\end{equation}
At each time-step the coefficient matrix
\begin{equation}
\mathbf{\Theta}(k) = \mathbf{D}(k) + \mathbf{V} - E(k) \mathbf{I} 
\label{eq:thetamatdisc}
\end{equation}
must be inverted. 
In this paper, we compute $\mathbf{\Theta}(k)^\dagger$ by 
%%% the approach conceptually the same as the one in Remark \ref{rmk:svd} by 
first computing a singular value decomposition of
$\mathbf{\Theta}(k)$, discarding the mode corresponding to the smallest
singular value, and applying the inverse of the decomposition directly to
the right-hand side (cf.\ e.g.,\ \cite[Section 2.6.1]{bjorck2015numerical}). 
It is not as efficient as the approach in Remark \ref{rmk:inv}, but this is not
significant for one-dimensional problems.

Since this problem is not stiff, any time-stepping scheme may be chosen.
In our numerical experiments, we use the standard 4th-order Runge--Kutta
method to solve the initial value problem for simplicity (cf.\ e.g.,\
\cite[Section 12.5]{suli2003introduction}); 
the resulting solution produces a solution that converges at rate
$O(K^{-4})$.
Even higher order, for example $O(K^{-12})$, can be achieved by switching to a spectral
deferred scheme (cf.\ \cite{dutt2000spectral}). 
At each time-step, the dominant cost is that of computing the singular value
decomposition of $\mathbf{\Theta}(k)$, which consists of $O(M^3)$ floating
point operations. 
Since $K$ time steps are taken, this stage of our procedure requires
$O(KM^3)$ floating point operations. 

We note that the explicit dependence of \eqref{eq:discivp} on $E(k)$ and the ODE for the energy (\ref{eq:discenergyode}) can be eliminated altogether by replacing $E(k)$ with the Rayleigh quotient 
\begin{equation}
\frac{\mathbf{y}(k)^*
(\mathbf{D}(k) + \mathbf{V}) \mathbf{y}(k)}{\mathbf{y}(k)^* \mathbf{y}(k)}
\end{equation}
when $\mathbf{y}(k)$ is known. This also has the advantage that the error the computed $E(k)$ is the squared error in the eigenvectors from the ODE solve.

%% I don't think this remark adds anything. In this context, it actually makes it seem as if our
% algorithm is the low order one.
 \begin{remark}
 	If two vectors $\mathbf{v}_1=\mathbf{y}(k)$ and $\mathbf{v}_2=\mathbf{y}(k+\Delta k)$	for small $\Delta k$ are obtained by direct eigensolves, the phases of $\mathbf{v}_1$ and $\mathbf{v}_2$ are independent. In \cite{marzari1997maximally, marzari2012maximally, vanderbilt2018berry}, it is pointed out that  one can approximately parallel transport to $\widetilde{\mathbf{v}}_2$ from $\mathbf{v}_1$ given by 
 $		\widetilde{\mathbf{v}}_2 = e^{\I \beta}\mathbf{v}_2$, 
 where $\beta$ is the phase given by $\beta = -\Im\log(\mathbf{v}_1^*\mathbf{v}_2)$. 
However, the convergence of this procedure is fundamentally limited to $O((\Delta
 k)^2)$,  {so an optimization step is required in order to achieve the global optimality.     As mentioned in the introduction, a corollary of the analysis in Section\,\ref{sec:opt} is that   the Wannier functions produced by this approximate parallel transport scheme remain exponentially localized, though they differ from those obtained by the method presented in this paper by $O(\Delta k^2)$\,. A detailed analysis will be reported at a later date.
 }
% , unlike our approach which can be made arbitrarily high order. 
 \label{rmk:twist}
 \end{remark}

\subsubsection{Correcting the phase\label{sec:comphase}}
%From \eqref{eq:quasiper2}, there exists $\theta \in [0,2 \pi)$ such that
According to Lemma \ref{lem:zak}, we can determine $\varphi_{\rm zak}$ from  
\begin{equation}
  \mathbf{y}^{(K)}_m = e^{\I \varphi_{\rm zak}}  \mathbf{y}^{(0)}_{m+1}, 
\end{equation}
for any $m \in \{1,2,\hdots,2M \}$ to fix the jumps.
Moreover, the condition in (\ref{eq:realcon}) allows us to determine $\varphi_0$ by
\begin{equation}
	e^{-\I \varphi_{0}}\mathbf{y}^{(0)}_{m} = e^{\I \varphi_{0}}\overline{\mathbf{y}}^{(K)}_{2M+2-m}
\end{equation}
for any $m \in \{M+1,M+2,\hdots,2M+1\}$ for the realty of Wannier functions. By (\ref{eq:minzak2}), we define the Berry phase
\begin{equation}
	\varphi(k_j) = \frac{\varphi_{\rm zak}}{\Omega} k_j,\quad j=1,2,\ldots, K.
		\label{eq:phik}
\end{equation}
We form the new vector 
\begin{equation}
	\widetilde{\mathbf{y}}^{(j)} = e^{-\I\varphi_0}e^{-\I \varphi(k_j)}\mathbf{y}^{(j)},\quad j=1,2,\ldots, K,
	\label{eq:phi0}
\end{equation}
which will lead to the real optimally localized Wannier function.

%%By Lemma \ref{lemma:varmin}, the $\varphi$ that produces a Wannier function
%%with minimal variance minimizes
%%\begin{equation}
%%\int_{-\frac{\Omega}{2}}^{\frac{\Omega}{2}} \varphi'(k)^2\,{\rm d}k
%%\end{equation}
%%subject to \eqref{eq:phibc1} and \eqref{eq:phibc2} (cf.\ Remark \ref{remark:laplace}).
%%The solution is characterized by 
%%\begin{equation}
%%\varphi''(k) = 0;
%%\end{equation}
%%the unique solution is
%%\begin{equation}
%%\varphi(k) = -\frac{\theta}{\Omega}\left(k + \frac{\Omega}{2}\right). 
%%\end{equation}

%%% will satisfy \eqref{eq:per10} and \eqref{eq:per11} for $m = 1,2,\hdots,2M$.

\subsubsection{Computing $\widehat{W}_0$\label{sec:comwanfour}}
Given $x \in \mathbb{R}$, we then evaluate $W_0(x)$ using a trapezoidal rule
discretization of \eqref{eq:wanift2}.
We set 
\begin{equation}
N = (2M+1)K 
\end{equation}
and define $\{ \widetilde{\alpha}_l \}_{l=1}^N$ by 
\begin{equation}
\widetilde{\alpha}_{(m-1)K+j} = \widetilde{\mathbf{y}}^{(j)}_m, \quad j = 1,2,\hdots,K,\, m = 1,2,\hdots,2M+1.
\label{eq:alphaform}
\end{equation}
%By Remark \ref{remark:realvalued}, there exists a unit modulus complex
%scaling that will make the Wannier function real valued. 
%To obtain this scaling we define $l^*$ by  
%\begin{equation}
%l^* = \operatorname{argmax}_{l > \frac{N}{2}+1} |\alpha_l|.
%\label{eq:lstar}
%\end{equation}
%We then set 
%\begin{equation}
%\omega = \frac{\arg \left( \alpha_{N-{l^*}+1} \alpha_{l^*} \right)}{2}
%\label{eq:omega}
%\end{equation}
%and finally define $\{\widetilde{\alpha}_l\}_{l=1}^N$ by
%\begin{equation}
%\widetilde{\alpha}_l = e^{i \omega} \alpha_l, \quad l = 1,2,\hdots,N.
%\label{eq:alphatilde}
%\end{equation}

\subsubsection{Evaluate the Wannier function \label{sec:comwan}}
Finally, we compute 
\begin{equation}
W_0(x) = \frac{1}{K} \sum_{j=1}^{K} \sum_{m=1}^{2M+1}  \widetilde{\alpha}_{(m-1)K+j} e^{\I x (k_j+ (m-M-1) \Omega)}
\label{eq:trap}
\end{equation}
%\begin{remark}
%Since the integrand in \eqref{eq:wanift2} is analytic and exponentially decaying as $\xi \to \pm \infty$, the error in the trapezoidal rule approximation decays with respect to $K$ as $O(e^{- c K})$ for some $c > 0$ (cf.\ e.g.,\ \cite{trefethen2014exponentially}). 
%\end{remark}

The rate of convergence of the sum in \eqref{eq:trap} with respect to $K$ and
$M$ depends on the error in the trapezoidal rule approximation to the integral in \eqref{eq:wanift2}. 
The integrand in \eqref{eq:wanift2} is analytic and decays as $\xi \to \pm \infty$. 
Thus, it follows from the Euler--Maclaurin formula that the precise rate of convergence will depend on the rate at which the integrand goes to zero, which  in turn is determined by the smoothness of the potential (see Remark\,\ref{rmk:vsmooth}). 
When the integrand decays exponentially $\xi \to \pm \infty$, the error in the trapezoidal rule approximation decays with respect to $K$ as $O(e^{- c K})$ for some $c > 0$ respect to $K$ as $O(e^{- c K})$ for some $c > 0$ (cf.\ e.g.,\ \cite{trefethen2014exponentially}). 

We note that while the cost of direct evaluation of \eqref{eq:trap} for each
$x$ is $O(KM)$,
the evaluation of \eqref{eq:trap} for a collection of points can be
accelerated using the fast Fourier transform or non-uniform fast Fourier
transform (cf.\ e.g.,\ \cite{greengard2004accelerating}).

\section{Detailed description of the procedure}
\label{s:detailed}
The input of our numerical procedure is the lattice spacing $a > 0$,
potential function $V$, $M \in \mathbb{N}$ specifying the number of
spatial degrees of freedom, $K \in \mathbb{N}$ specifying the number of
degrees of freedom in momentum space, $\ell \in \mathbb{N}$
indicating the band index, and a collection of points $\{x_j\}_{j=1}^P$ on
which the Wannier function is to be evaluated. 
The output of Algorithm 1 is $\{W_0(x_j)\}_{j=1}^P$.

For clarity, we have isolated the evaluation of the right-hand side of the
ODEs \eqref{eq:discivp} and \eqref{eq:discenergyode} into Algorithm 1A. 
Algorithm 1A takes as input $\mathbf{y}(k) \in \mathbb{C}^{2M+1}$, $E(k) \in
\mathbb{R}$, and $k \in \mathbb{R}$ and outputs the right-hand side of the
ODE in \eqref{eq:discivp} and \eqref{eq:discenergyode}.

\noindent \underline{\textbf{Algorithm 1}} 

\noindent \textit{Input:} $a > 0$, $M \in \mathbb{N}$, $K \in
\mathbb{N}$, $V : \mathbb{R} \to \mathbb{R}$, $\ell \in \mathbb{N}$,
$\{x_j\}_{j=1}^P \subset \mathbb{R}$

\noindent \textit{Output:} $\{W_0(x_j)\}_{j=1}^P$ 

\noindent \textbf{Step 1} [Forming the potential matrix $\mathbf{V}$]. 

\begin{enumerate}
\item Evaluate $\{t_j\}_{j=1}^{2M+1}$ by \eqref{eq:tpoints}.

\item Evaluate $\{V(t_j)\}_{j=1}^{2M+1}$.

\item Apply the discrete Fourier transform to $\{V(t_j)\}_{j=1}^{2M+1}$ to produce
the coefficients of the Fourier interpolant $\{\widehat{V}_j\}_{j=-M}^M$.

\item Form $\mathbf{V} \in \mathbb{C}^{(2M+1) \times (2M+1)}$ using the
formula \eqref{eq:potmatentries}.
\end{enumerate}

\noindent \textbf{Step 2} [Obtaining the initial condition]. 

\begin{enumerate}

\item Compute the matrix $\mathbf{D}(-\Omega/2)$ using the formula \eqref{eq:dmatentries}.

\item Solve the eigenvalue problem \eqref{eq:eig} for the $\ell$th smallest
eigenvalue $E^{(0)}$ and corresponding eigenvector $\mathbf{y}^{(0)}$ using
the QR algorithm and ensuring $\| \mathbf{y}^{(0)} \| = 1$.
\end{enumerate}

\noindent \textbf{Step 3} [Solving the initial value problem using RK4].

\begin{enumerate}

\item Set $k_j$ for $j = 1,2,\hdots,K$ via \eqref{eq:kj}. Set $\Delta k = \Omega / K$. 

\item Do $j = 1, 2, \hdots, K$

\begin{enumerate}

\item Use Algorithm 1A with inputs $\mathbf{V}$, $\mathbf{y}^{(j-1)}$, and $k_j$ to obtain $\mathbf{w}^{(1)}$ and $F^{(1)}$.

\item Use Algorithm 1A with inputs $\mathbf{V}$, $\mathbf{y}^{(j-1)} +
(\Delta k/2) \mathbf{w}^{(1)}$, and $k_j + \Delta k/2$ to obtain
$\mathbf{w}^{(2)}$ and $F^{(2)}$.

\item Use Algorithm 1A with inputs $\mathbf{V}$, $\mathbf{y}^{(j-1)} +
(\Delta k/2) \mathbf{w}^{(2)}$, and $k_j + \Delta k/2$ to obtain
$\mathbf{w}^{(3)}$ and $F^{(3)}$.

\item Use Algorithm 1A with inputs $\mathbf{V}$, $\mathbf{y}^{(j-1)} +
(\Delta k) \mathbf{w}^{(3)}$, and $k_j + \Delta k$ to obtain
$\mathbf{w}^{(4)}$ and $F^{(4)}$.

\item Set 
\begin{align}
\mathbf{y}^{(j)} &= \mathbf{y}^{(j-1)} + \frac{\Delta k}{6} \left( \mathbf{w}^{(1)}+ 2\mathbf{w}^{(2)} + 2\mathbf{w}^{(3)} + \mathbf{w}^{(4)}\right) \label{eq:rkstep} \\
E^{(j)} &= E^{(j-1)} + \frac{\Delta k}{6} \left( F^{(1)} + 2F^{(2)} + 2F^{(3)} + F^{(4)}\right).
\end{align}

\end{enumerate}

\item[] End Do

\end{enumerate}

\noindent \textbf{Step 4} [Phase correction]

\begin{enumerate}
%\item Determine $m^* = \operatorname{argmax}_{m \in \{1,2,\hdots,2M\}} \min(|\mathbf{y}_m^{(K)}|,|\mathbf{y}_{m+1}^{(0)}|)$.  
\item Set $\varphi_0$ by (\ref{eq:phi0}).
\item Set $\varphi_{\rm zak} = \arg (\mathbf{y}_m^{(K)} / \mathbf{y}_{m+1}^{(0)})$.

\item Do $j=1,2,\hdots,K$ 

\begin{enumerate}
\item Set $\widetilde{\mathbf{y}}^{(j)} = \exp(-\I \varphi_0)\exp \left( -\I \varphi_{\rm zak}  k_j/\Omega \right) \mathbf{y}^{(j)}$.
\end{enumerate}

\item[] End Do

\end{enumerate}

\noindent \textbf{Step 5} [Tabulate Fourier transform of $W_0$] 

\begin{enumerate}
\item Set $N = (2M+1)K$.

%\item Define $\{ \alpha_l \}_{l=1}^N$ via \eqref{eq:alphaform}.
%
%\item Determine $l^*$ via \eqref{eq:lstar}. 
%
%\item Define $\omega$ by \eqref{eq:omega}. 

\item Compute $\{ \widetilde{\alpha}_l\}_{l=1}^N$ by \eqref{eq:alphaform}.
\end{enumerate}

\noindent \textbf{Step 6} [Evaluate Wannier function]

\begin{enumerate}
\item Compute $\{ W(x_j) \}_{j=1}^P$ using \eqref{eq:trap}.
\end{enumerate}

\noindent \underline{\textbf{Algorithm 1A}} 

\noindent \textit{Input:} $\mathbf{V} \in \mathbb{C}^{(2M+1) \times (2M+1)}$, $\mathbf{y}(k) \in \mathbb{C}^{2M+1}$, $E(k) \in \mathbb{R}$, $k \in \mathbb{R}$

\noindent \textit{Output:} $\mathbf{y}'(k) \in \mathbb{C}^{2M+1}$, $E'(k) \in \mathbb{R}$

\begin{enumerate}
\item Evaluate $\mathbf{S}(k)$ using \eqref{eq:smatdisc}.
\item Compute $E'(k)$ using \eqref{eq:discenergyode}.
\item Compute $\mathbf{z} = \mathbf{S}(k) \mathbf{y}(k)$.
\item Evaluate $\mathbf{\Theta}(k)$ defined in \eqref{eq:thetamatdisc} using \eqref{eq:dmat}.
\item Compute the singular value decomposition of $\mathbf{\Theta}(k) = \mathbf{U} \mathbf{\Sigma} \mathbf{V}^*$ using the QR algorithm.
\item Set $\mathbf{y}'(k) = \mathbf{V}_{1:2M+1,1:2M} (\mathbf{\Sigma}_{1:2M,1:2M})^{-1} (\mathbf{U}_{1:2M+1,1:2M})^* \mathbf{z}$.
\end{enumerate}

\section{Numerical experiments}
\label{s:numerics}

We implemented the procedure described in Section \ref{s:procedure} in
Fortran.
We fix $V$, $a$, $M$, and $\ell$ and increase $K$ until 10 digits of accuracy
is obtained.
The timings reported are the CPU times required to obtain the Fourier
transform of $W_0$ evaluated on a uniform grid (i.e., Steps 1--5 in Section
\ref{s:detailed}). 
To obtain an estimate of the error we compute
\begin{equation}
E_{\rm RK4}^{(\ell)} =  \left\| \mathbf{y}^{(K)} - \mathbf{y} \left(\frac{\Omega}{2} \right) \right\|, 
\end{equation}
where $\mathbf{y}^{(K)}$ is defined in \eqref{eq:rkstep} and
$\mathbf{y}(\Omega/2)$ is computed by directly solving the eigenvalue problem
via the QR algorithm.
We also report 
\begin{equation}
E_{\rm imag}^{(\ell)} = \frac{\max_{j=1,2,\hdots,1000} \left| \operatorname{Im} \left[ W_0 \left(
-\frac{\pi}{\Omega} + \frac{2 \pi j}{1000 \Omega} \right) \right] \right|}
{\max_{j=1,2,\hdots,1000} \left|  W_0 \left(
-\frac{\pi}{\Omega} + \frac{2 \pi j}{1000 \Omega} \right) \right|},
\end{equation}
where we evaluate $W_0$ using \eqref{eq:trap}.
All experiments were conducted in double precision on a single core of an
Apple M3 PRO CPU with the GNU Fortran compiler.

\subsection{Gaussian potential}

\begin{figure}[t]
\centering
\includegraphics[scale=0.45]{\figpath 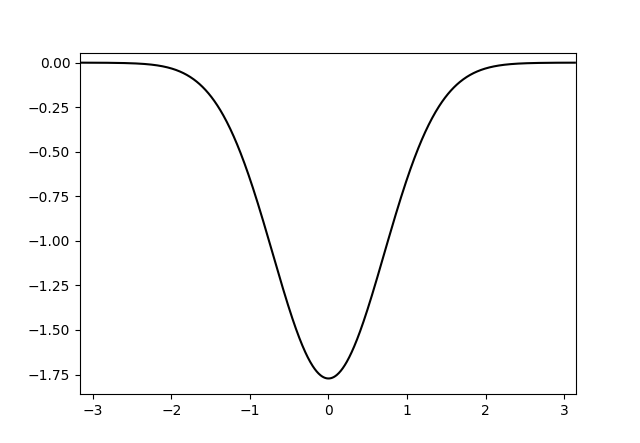}
\includegraphics[scale=0.42]{\figpath 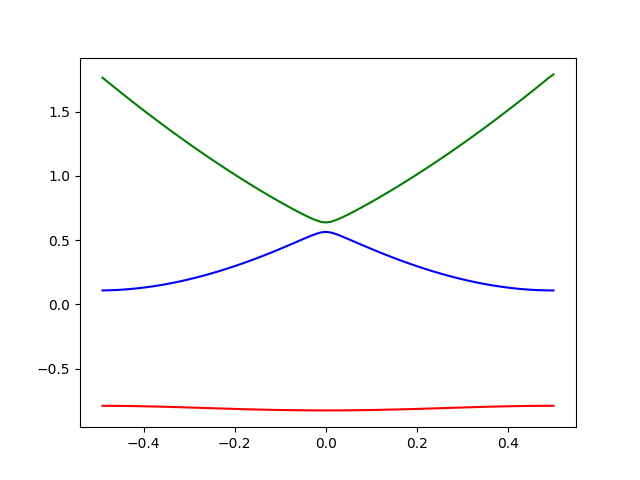}
\vspace{-2em}
\caption{Plot of potential (left) and $E^{(j)}(k)$ (right) for $j=1,2,3$ (bottom,
middle, top, resp.) }
\label{fig:gauss}
\end{figure}

In this experiment, we set $a = 2\pi$ (so $\Omega=1$), $M=10$, and $\ell = 1,2,3$.
The potential is taken to be
\begin{equation}
V(x) = - \frac{1}{2} - \sum_{j=1}^{5} e^{- j^2 / 4} \cos(j \Omega x)  
\end{equation}
and plotted in Figure \ref{fig:gauss} (left). 
The corresponding energies are plotted in Figure \ref{fig:gauss} (right). 
The Wannier functions are plotted in Figure \ref{fig:gaussian_wannier}.

\begin{figure}[t!]
\begin{tabular}{cc}
%% note: file names for band 1 and 2 were swapped
\includegraphics[scale=0.4]{\figpath 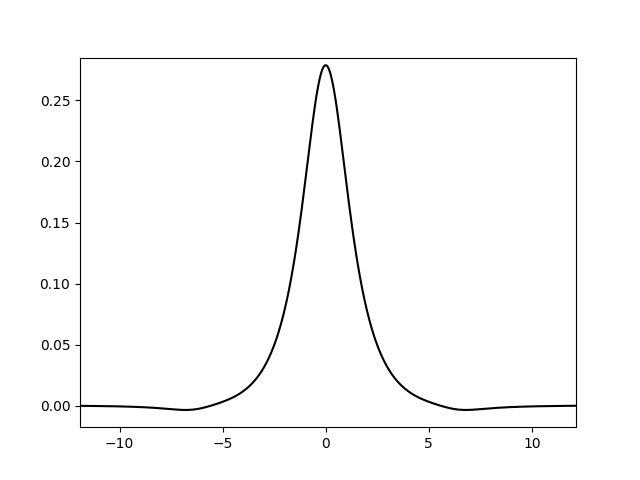} &
\includegraphics[scale=0.4]{\figpath 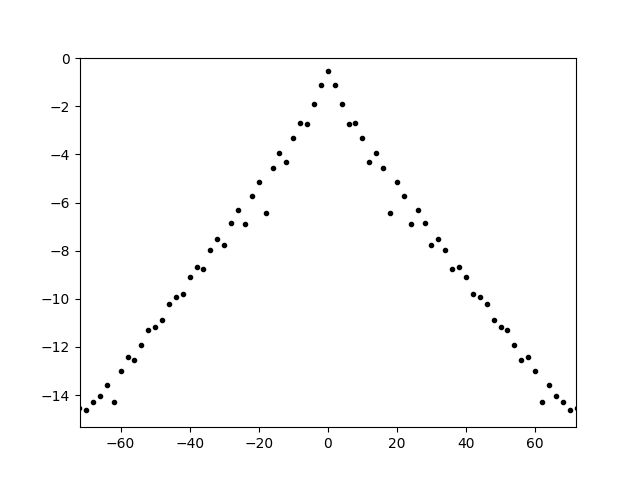} \\
\includegraphics[scale=0.4]{\figpath 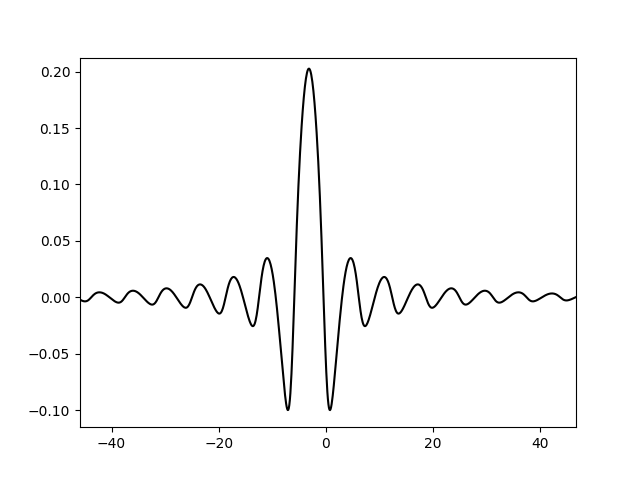} &
\includegraphics[scale=0.4]{\figpath 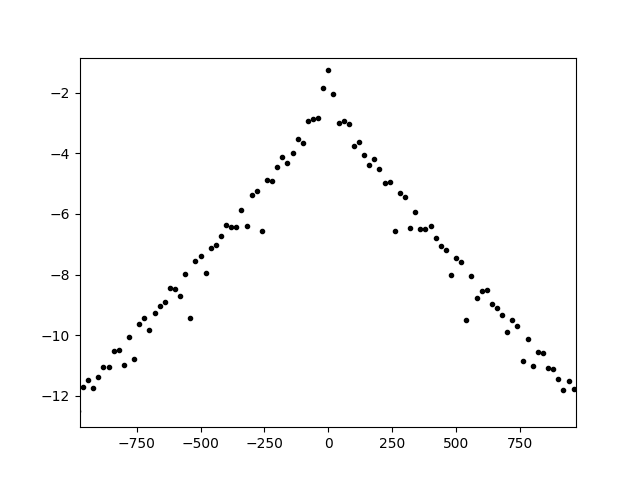} \\
\includegraphics[scale=0.4]{\figpath 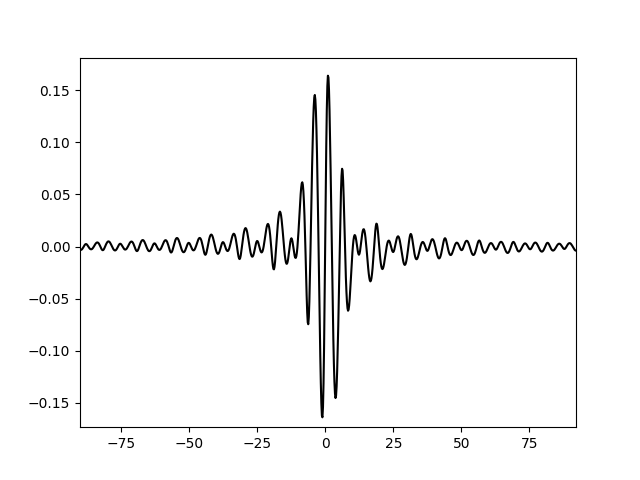} &
\includegraphics[scale=0.4]{\figpath 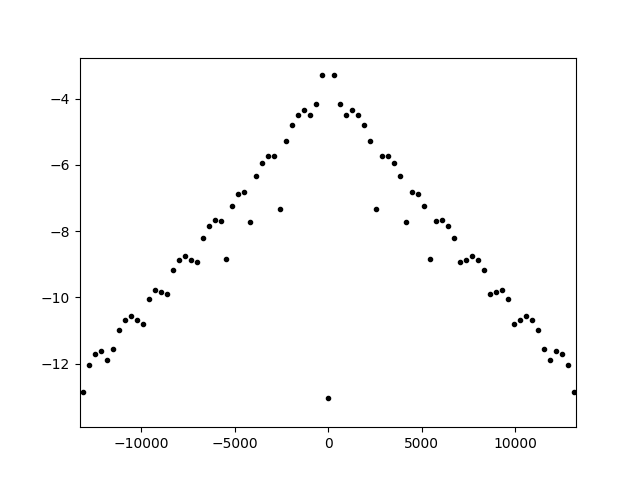} \\
\end{tabular}
\caption{$W_0^{(j)}(x)$ (left) and $\log_{10}(| W_0^{(j)}(x)|)$ (right) for $j=1,2,3$ (top,
middle, bottom, resp.)
We note that while all the Wannier functions decay exponentially, the constants in this rate depend strongly on the band, as demonstrated by the different horizontal axes scales.
	}
\label{fig:gaussian_wannier}
\end{figure}
The relevant quantities are: 
\begin{center}
\begin{tabular}{c|c|cc|cc|cc}
$K$ & time (s) & $E_{\rm RK4}^{(1)}$ & $E_{\rm imag}^{(1)}$  
& $E_{\rm RK4}^{(2)}$ & $E_{\rm imag}^{(2)}$  
& $E_{\rm RK4}^{(3)}$ & $E_{\rm imag}^{(3)}$  
\\
\hline
51      &  2.11E--01 & 2.04E--09 & 5.28E--10 & 1.66E--04 & 3.81E--04  & 7.17E+00 & 3.53E--01 \\
101     &  3.42E--01 & 1.33E--10 & 3.26E--11 & 8.18E--06 & 1.36E--05  & 6.53E+00 & 1.20E--01 \\
201     &  6.66E--01 & 8.51E--12 & 2.03E--12 & 5.58E--07 & 9.56E--07  & 2.66E+00 & 1.56E--02 \\
401     &  1.25E+00  &  &  & 3.60E--08 & 6.30E--08  & 4.35E--01 & 7.57E--04 \\
801     &  2.42E+00  &  &  & 2.28E--09 & 4.02E--09  & 3.51E--02 & 2.36E--05 \\
%%%401     &  1.25E+00  & 5.73E--13 & 1.34E--13 & 3.60E--08 & 6.30E--08  & 4.35E--01 & 7.57E--04 \\
%%%801     &  2.42E+00  & 8.47E--14 & 5.17E--14 & 2.28E--09 & 4.02E--09  & 3.51E--02 & 2.36E--05 \\
1601    &  4.82E+00  &           &           & 1.43E--10 & 2.54E--10  & 1.56E--03 & 1.62E--06 \\
3201    &  9.39E+00  &           &           & 8.94E--12 & 1.59E--11  & 9.17E--05 & 1.27E--07 \\
6401    &  1.84E+01  &           &           &  &    & 5.59E--06 & 8.67E--09 \\
12801   &  3.62E+01  &           & 	     &   &   & 3.44E--07 & 5.64E--10 \\
%%% 6401    &  1.84E+01  &           &           & 6.23E--13 & 1.19E--12  & 5.59E--06 & 8.67E--09 \\
%%% 12801   &  3.62E+01  &           & 	     & 4.66E--15  & 3.96E--13  & 3.44E--07 & 5.64E--10 \\
25601   &  7.06E+01  &       	 &  	     &           &            & 2.22E--08 & 3.78E--11 \\
51201   &  1.38E+02  &      	 & 	     &           &            & 2.70E--11 & 2.57E--12 \\
\end{tabular}
\end{center} 

We observe that the expected fourth-order convergence with respect to $K$
down to an accuracy of about 10 digits.
Moreover, the $O(K)$ complexity can be observed.

\subsection{Asymmetric potential}
\begin{figure}[t!]
\centering
\includegraphics[scale=0.47]{\figpath 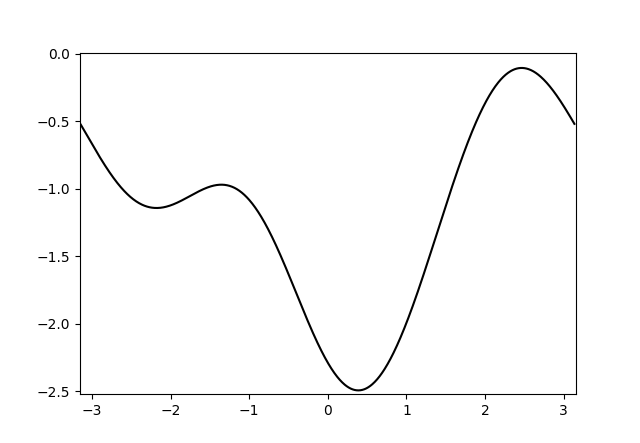}
\includegraphics[scale=0.43]{\figpath 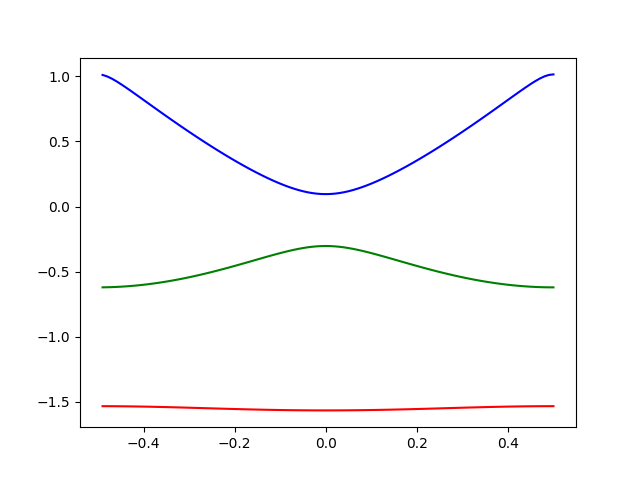}
\vspace{-2em}
\caption{Plot of potential (left) and $E^{(j)}(k)$ (right) for $j=1,2,3$ (bottom,
middle, top, resp.) }
\label{fig:asymmetric}
\end{figure}

In this experiment, we set $a = 2 \pi$ (so that $\Omega=1$), $M=15$, and $\ell = 1,2,3$.
The potential is given by 
\begin{equation}
V(x) = -\frac{1}{4} \left( 1 + 2 \sin(2 \Omega x) + 3 e^{\cos(\Omega x)}\right).
\end{equation}
The potential is plotted in Figure \ref{fig:asymmetric} (left) and the energies are plotted in Figure \ref{fig:asymmetric} (right).
The Wannier functions are plotted in Figure \ref{fig:asymmetric_wannier}.
The relevant quantities are: 

\begin{center}
\begin{tabular}{c|c|cc|cc|cc}
$K$ & time (s) & $E_{\rm RK4}^{(1)}$ & $E_{\rm imag}^{(1)}$  
& $E_{\rm RK4}^{(2)}$ & $E_{\rm imag}^{(2)}$  
& $E_{\rm RK4}^{(3)}$ & $E_{\rm imag}^{(3)}$  
\\
\hline
51      & 1.00E--01 & 2.79E--09 & 7.18E--10 & 9.63E--07 & 5.08E--07 & 1.16E--02 & 2.68E--05 \\
101     & 1.68E--01 & 1.78E--10 & 4.45E--11 & 6.20E--08 & 3.22E--08 & 6.03E--04 & 2.00E--06 \\
201     & 2.90E--01 & 1.12E--11 & 2.77E--12 & 3.92E--09 & 2.02E--09 & 3.57E--05 & 1.43E--07 \\
401     & 5.36E--01 &  &  & 2.47E--10 & 1.27E--10 & 2.16E--06 & 9.46E--09 \\
%%% 401     & 5.36E--01 & 6.84E--13 & 1.69E--13 & 2.47E--10 & 1.27E--10 & 2.16E--06 & 9.46E--09 \\
801     & 1.02E+00  & & & 1.57E--11 & 8.07E--12 & 1.33E--07 & 6.08E--10 \\
1601    & 2.01E+00  & 	 & 		    &  &  & 8.24E--09 & 3.86E--11 \\ 
%%% 1601    & 2.01E+00  & 	 & 		    & 1.27E--12 & 7.50E--13 & 8.24E--09 & 3.86E--11 \\ 
3201    & 4.00E+01  & 	 & 		    & & & 5.07E--10 & 2.47E--12 \\ 
6401    & 7.89E+01  & 	 & 		    &		&	    & 6.49E--11 & 4.07E--13 \\ 
\end{tabular}
\end{center}
We are again able to rapidly compute a solution down to 10 digits of accuracy. 

\begin{figure}[t!]
\begin{tabular}{cc}
%% note: file names for band 1 and 2 were swapped
\includegraphics[scale=0.4]{\figpath 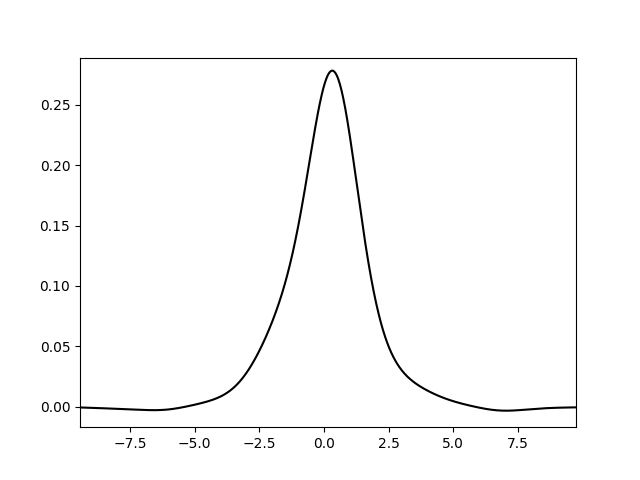} &
\includegraphics[scale=0.4]{\figpath 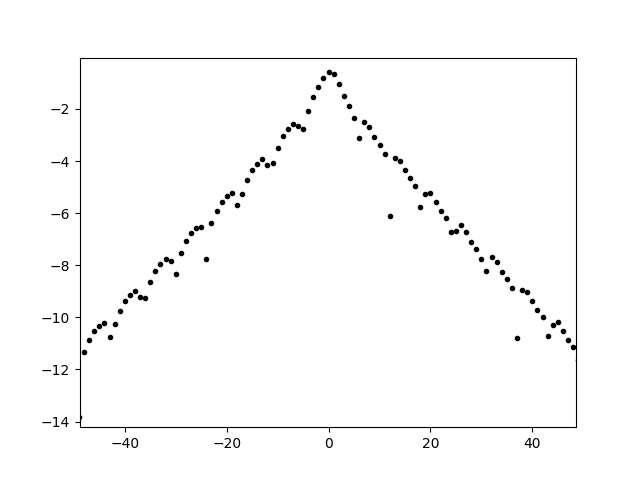} \\
\includegraphics[scale=0.4]{\figpath 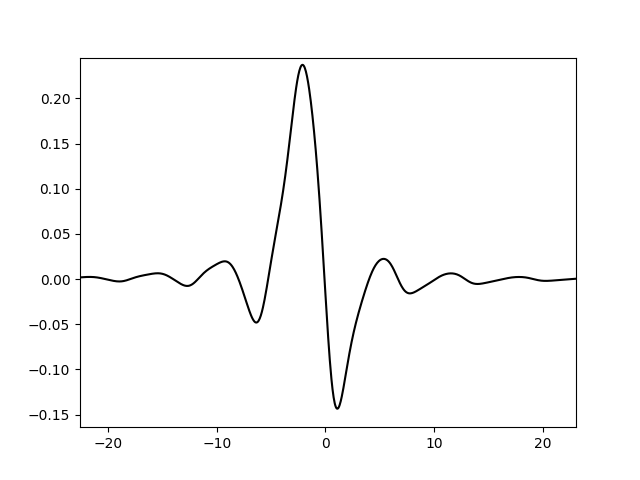} &
\includegraphics[scale=0.4]{\figpath 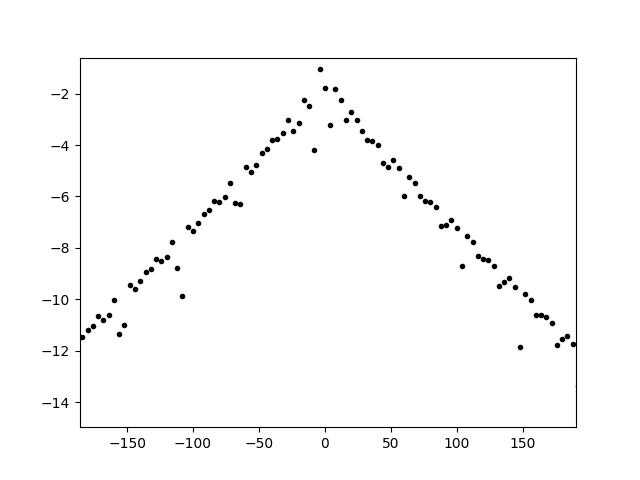} \\
\includegraphics[scale=0.4]{\figpath 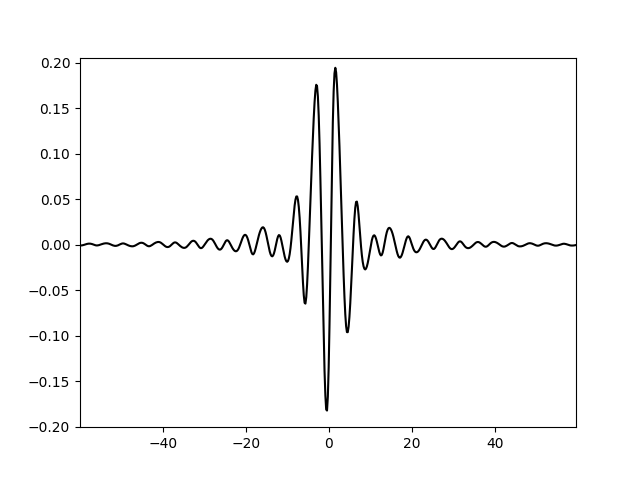} &
\includegraphics[scale=0.4]{\figpath 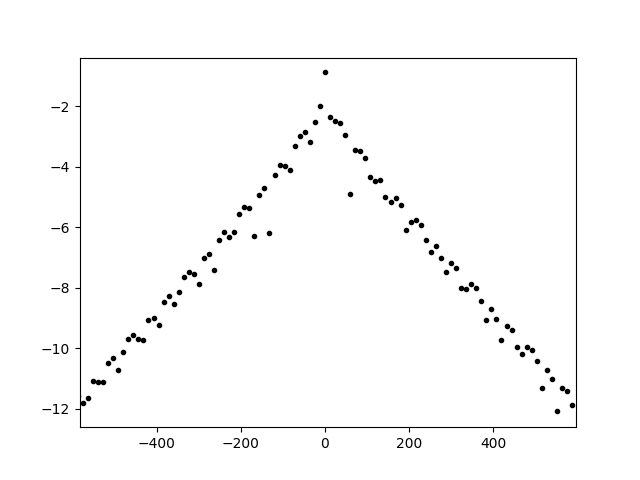} \\
\end{tabular}
\caption{$W_0^{(j)}(x)$ (left) and $\log_{10}(| W_0^{(j)}(x)|)$ (right) for $j=1,2,3$ (top,
middle, bottom, resp.)}
Again, the different horizontal axes scales show that the constants in the decay depend on the band.
\label{fig:asymmetric_wannier}
\end{figure}

\section{Conclusion and generalization}
\label{s:conclusions}

In this work, we developed a robust, high-order numerical method for
computing Wannier functions for one-dimensional crystalline insulators. We presented complete analysis that shows the constructed Wannier functions are exponentially localized and globally optimal in terms of their variance. Furthermore, the Wannier functions can always be chosen to be real. The analysis and algorithm in this paper can also be viewed as a constructive proof of the existence of exponentially localized Wannier functions for an isolated single band in one dimension. We note that the construction can be applied easily to many other self-adjoint analytic families of operators beyond the Schr\"odinger operators in (\ref{eq:schro}).

 { 
The key challenge in extending the scheme presented here to the multiband case is that, instead of scalar phases as in the single-band case in this paper, all possible gauge choices form a family of unitary matrices, i.e. the matrix $U(N_b)$ group where $N_{b}$ is the number of bands considered. While this makes the analysis somewhat more involved, the underlying concepts remain largely the same, and globally optimal solutions can still be computed analytically.} 
%Since Kato's theorem (Theorem\,\ref{thm:katoproj}) guarantees analyticity even when eigenvalues become degenerate in one dimension, extended to isolated multiband cases in one dimension.
{ In higher dimensions, topological phenomena emerge in the construction of Wannier functions, requiring ideas beyond those presented in this paper, as demonstrated in \cite{zhang2025constructing} for matrix models. Nonetheless, the one-dimensional algorithms serve as the fundamental building blocks for higher-dimensional cases. Extensions to higher dimensions have been worked out and are currently being prepared for publication.} \\[2em]

%\section*{Statements and Declarations}
%
%The authors declare that they have no conflict of interest.

\noindent {\bf Acknowledgements} \\
The authors are grateful to John Schotland for drawing their attention to this subject, and to Vladimir Rokhlin for valuable discussions pertaining to this work. The authors also thank the anonymous referees for their helpful suggestions.

\newpage

\bibliographystyle{plain}
\bibliography{refs} 

\end{document}